\renewcommand{\@seccntformat}[1]{%
  \ifcsname prefix@#1\endcsname
    \csname prefix@#1\endcsname
  \else
    \csname the#1\endcsname\quad
  \fi}
\begin{document}

\title{Single-Shot Universality in Quantum LDPC Codes via Code-Switching}

\author[1,4,$\ast,\dagger$]{Shi Jie Samuel Tan}
\author[1,2,$\ast,\ddagger$]{Yifan Hong}
\author[3,4,$\ast,\S$]{Ting-Chun Lin}
\author[1,$\P$]{\\Michael J. Gullans}
\author[4,$\parallel$]{Min-Hsiu Hsieh}
\affil[1]{\small Joint Center for Quantum Information and Computer Science,
University of Maryland and NIST, College Park, MD, USA}
\affil[2]{\small Joint Quantum Institute, University of Maryland and NIST, College Park, MD, USA}
\affil[3]{\small University of California San Diego,
La Jolla, CA, USA}
\affil[4]{\small Hon Hai Research Institute,
Taipei, Taiwan}

\date{\today}
\maketitle

\newcommand{\col}{\mathrm{col}}
\newcommand{\row}{\mathrm{row}}

\newcommand{\SAM}[1]{}
\newcommand{\yifan}[1]{}
\newcommand{\david}[1]{}
\renewcommand{\thefootnote}{\fnsymbol{footnote}}
\setcounter{footnote}{1}
\footnotetext{{These authors contributed equally to this work.}}
\setcounter{footnote}{2}
\footnotetext{\href{mailto:stan97@umd.edu}{stan97@umd.edu}}
\setcounter{footnote}{3}
\footnotetext{\href{mailto:yhong137@umd.edu}{yhong137@umd.edu}}
\setcounter{footnote}{4}
\footnotetext{\href{mailto:til022@ucsd.edu}{til022@ucsd.edu}}
\setcounter{footnote}{5}
\footnotetext{\href{mailto:mgullans@umd.edu}{mgullans@umd.edu}}
\setcounter{footnote}{6}
\footnotetext{\href{mailto:min-hsiu.hsieh@foxconn.com}{min-hsiu.hsieh@foxconn.com}}
\setcounter{footnote}{1}
\renewcommand{\thefootnote}{\arabic{footnote}}
\begin{abstract}
  Code-switching is a powerful technique in quantum error correction that allows one to leverage the complementary strengths of different codes to achieve fault-tolerant universal quantum computation. However, existing code-switching protocols that encapsulate recent generalized lattice surgery approaches often either require many rounds of measurements to ensure fault-tolerance or suffer from low code rates. We present a single-shot, universal protocol that uses code-switching between high-rate quantum codes to perform fault-tolerant quantum computation. To our best knowledge, our work contains the first universal fault-tolerant quantum computation protocol that achieves what we term \emph{single-shot universality} on high-rate codes that is characterized by (i) single-shot error correction, (ii) single-shot state preparation, as well as (iii) universal logical gates and logical measurements with constant depth circuits. We achieve this feat with single-shot code-switching between constant-rate 2D hypergraph product (HGP) codes and high-rate 3D HGP codes that can be viewed as a generalization of Bomb\'{i}n's dimensional jump for color codes and Hillmann et al.'s single-shot lattice surgery for higher-dimensional topological codes. In addition, we prove the fault-tolerance of our code-switching protocol under both the adversarial and local-stochastic noise models. We introduce a vastly simpler recipe to construct high-rate 3D HGP codes with transversal CCZ gates that grants immense flexibility in the choice of expander graphs and local codes, allowing us to expand the search space for codes with good parameters and interesting logical gates. Our work opens an alternative path towards universal fault-tolerant quantum computation with low space-time overhead by circumventing the need for magic state distillation.
\end{abstract}
\newpage
\tableofcontents

\section{Introduction}
\label{sec:intro}
Quantum error correction—and, more broadly, fault-tolerant quantum computation—seeks to deliver reliable large-scale quantum algorithms by expending extra resources. For a long time, the surface code~\cite{Kitaev_2003} and other topological codes~\cite{bombin2006topological} have been the leading candidates for fault-tolerant quantum computation due to their high thresholds for different noise models and compatibility with 2D qubit architectures~\cite{dennis2002topological, fowler2012surface, litinski2019game, tan2024resilience}. However, these codes suffer from high space-time overheads because they encode only a constant number of logical qubits per code block~\cite{BPT_2010}.
Following the introduction of topological codes, there has been sustained effort to shrink these costs, culminating in the emergence of high-rate quantum low-density parity-check (QLDPC) codes~\cite{tillich2014quantum, breuckmann2021balanced,panteleev2022asymptotically,leverrier2022quantum,dinur2023good, bravyi2024high}. In recent years, asymptotically good QLDPC codes that have constant rate and linear distance have been proven to exist~\cite{panteleev2022asymptotically, leverrier2022quantum,dinur2023good}. These codes forgo geometric locality to encode logical qubits more efficiently than topological codes. Quantum Tanner codes, one of these asymptotically good QLDPC codes, have also been shown to be amenable to single-shot error correction, making them extremely efficient both in terms of space and time overheads~\cite{gu2024single,dinur2023good,leverrier2023decoding, leverrier2025efficient}. Gottesman showed that constant-rate qLDPC families can support fault-tolerant computation with asymptotically constant spatial overhead, though with linear time overhead relative to the unencoded circuit~\cite{gottesman2013fault}. Since then, there has been a flurry of work to reduce this time overhead~\cite{fawzi2020constant,yamasaki2024time,tamiya2024polylog,nguyen2025quantum}. Most if not all of these schemes rely on magic state distillation (MSD)~\cite{bravyi2012magic} to introduce non-Clifford gates and promote the native Clifford gates of the codes to a universal gate set to achieve universal fault-tolerant quantum computation. Because our ultimate goal is to achieve fast and efficient universal fault-tolerant quantum computation, it is natural to ask whether we can bypass expensive MSDs altogether by directly implementing a universal set of logical gates transversally. Nonetheless, the Eastin-Knill theorem~\cite{eastin2009restrictions} forbids the existence of a quantum code that can implement a universal set of logical gates transversally.

Code-switching which refers to the process of switching between different quantum error-correcting codes, has been proposed as an alternative method to sidestep the Eastin-Knill theorem~\cite{eastin2009restrictions} and expand the set of transversal logical gates that can be implemented fault-tolerantly. It was first proposed in the form of code deformation for topological codes where a given surface code is iteratively deformed to obtain CNOT gates~\cite{bombin2009quantum}. The idea was later generalized to topological subsystem color codes which can now implement Clifford logical gates via code deformation~\cite{bombin2011clifford}.
Other works have investigated code-switching between small quantum stabilzier codes to toggle between memory and computation codes~\cite{hill2011fault}. Hill, et al. looked at using a five-qubit code for quantum memory and switching to a seven-qubit Steane code for computation. This allows for some physical overhead savings as the five-qubit code is more space-efficient than the Steane code. Paetznick and Reichardt~\cite{paetznick2013universal} proposed a scheme to achieve universal fault-tolerant quantum computation by using the triorthogonal codes introduced by Bravyi and Haah~\cite{bravyi2012magic}. Their scheme uses transversal CCZ and H gates to achieve universality. However, the triorthogonal codes do not have good distance scaling and hence are not suitable for large-scale quantum computation. Beyond code-switching, concatenation has also been explored as a method for achieving universal computation at the expense of the distance of the code~\cite{jochym2014using}. Yoder et al. uses concatenation and pieceable fault-tolerance to achieve universal fault-tolerant quantum computation. However, the scalability and efficiency of their scheme is unclear and their work primarily focused on small codes with small distances~\cite{yoder2016universal}. 

The first scalable protocol proposed for code-switching with the aim of achieving universal fault-tolerant quantum computation was by Bomb\'{i}n~\cite{bombin2015gauge,bombin2016dimensional,kubica2015universal}. His proposal was to toggle between 2D and 3D gauge color codes via the method of gauge-fixing, exploiting the fact that a quantum code can access transversal gates from different levels of the Clifford hierarchy if they are defined in different spatial dimensions~\cite{bravyi2013classification}. To be explicit, the 2D color code admits transversal Cliffords gates like the Hadamard gate and the CNOT gate while the 3D color code supports a transversal non-Clifford T gate. Bomb\'{i}n's proposal uses \emph{dimensional jumping} to promise constant-time logical gates using only local operations and leverages single-shot error correction in subsystem color codes. Anderson et al.~\cite{anderson2014fault} also stated a scheme involving switching between the Steane and Quantum Reed-Muller code shares a lot of similarities with Bomb\'{i}n's proposal.
Beyond the color code, Brown utilized the just-in-time gauge fixing technique introduced by Bomb\'{i}n~\cite{bombin2018transversal} to propose a code deformation scheme for the 2D surface code to implement the non-Clifford CCZ gate~\cite{brown2020fault}. This is interesting because the surface code is known to have better error correcting performance than the color code due to its lower stabilizer weight and better decoding algorithms~\cite{dennis2002topological, delfosse2014decoding, delfosse2021almost}. Kubica and Vasmer were also able to adapt Bomb\'{i}n's dimensional jump scheme to the case of the 3D subsystem toric code, which is often viewed as the canonical instance of 3D topological codes, to achieve single-shot error correction gadgets and logical gates~\cite{kubica2022single}. Coupled with the single-shot state preparation ideas in Ref.~\cite{bravyi2020quantum}, it shows that it is possible to achieve universal fault-tolerant quantum computation with single-shot error correction and state preparation using only local operations.

While these protocols are very promising and exciting due to the existence of a growing family of codes that work well in practice, they rely on topological codes that are known to have poor asymptotic spatial overhead scaling due to the fact that they only encode a constant number of logical qubits~\cite{BPT_2010}. Recently, Hong proposed a dimensional collapse protocol~\cite{hong2024single} that is a generalization of a part of Bomb\'{i}n's dimensional jump to the hypergraph product (HGP) codes that are known to have a constant rate~\cite{tillich2014quantum}. The protocol allows a \emph{one-way} switch from a specific 3D HGP code to a 2D HGP code. This is a particularly exciting result because it opens up the possibility for code-switching between high-rate HGP codes.
However, a recent result by Fu, et al.~\cite{fu2025no} shows that there are fundamental limitations to the set of transversal logical gates that can be implemented on HGP codes. To be specific, they show that 3D HGP codes cannot afford transversal single-qubit non-Clifford gates.
Despite these limitations, several recent constructions of high-performance quantum codes with transversal CCZ gates have been proposed~\cite{zhu2023non,wills2024constant,nguyen2024good,golowich2024asymptotically,golowich2025quantum,lin2024transversal,breuckmann2024cups,zhu2025topological,zhu2025transversal}. Some of these constructions are actually 3D HGP codes that are carefully constructed using classical Sipser-Spielman codes~\cite{sipser2002expander}. One can think of these as the high-rate generalizations of past results that showed how one can climb the Clifford hierarchy by increasing the spatial dimension of topological codes~\cite{kubica2015unfolding, moussa2016transversal, vasmer2019three,brown2020fault}. 

While these code constructions are very promising for shaving space overhead, we also require every operation in the protocol to only require a constant-depth circuit implementation for fast and efficient fault-tolerant quantum computation. We introduce the term \emph{single-shot universality} to refer to a universal fault-tolerant quantum computation protocol that satisfies the following properties: (i) single-shot error correction, (ii) single-shot state preparation, as well as (iii) universal logical gates and logical measurements with constant-depth circuit implementations. Single-shot universality is a desirable property for efficient quantum computation because it allows us to perform every necessary operation for quantum computation in a constant-depth circuit, thereby reducing the time overhead and the accumulation of errors. This is especially of relevance for quantum hardware devices with fast qubit decoherence times as well as slow measurement times. For the case of the recently constructed asymptotically good qLDPC codes, 
Given the recent code construction developments discussed above and the need for single-shot universality for efficient quantum computation, it is natural to ask:
\begin{center}
\emph{Can we generalize Bomb\'{i}n's dimensional jump scheme for the color codes to a single-shot code-switching scheme for high-rate codes with transversal CCZ gates to achieve low spatial overhead universal fault tolerant quantum computation with single-shot universality?}
\end{center}

\subsection{Related Works}
\label{subsec:related_works}
As discussed above, Hong proposed a dimensional collapse protocol~\cite{hong2024single} that allows a \emph{one-way} switch from a 3D HGP code to a 2D HGP code. This is particularly useful for single-shot state preparation of logical $\ket{0}$ and $\ket{+}$ states in the 2D HGP code since it harnesses the soundness property in the higher-dimensional HGP code. However, it requires certain 1D-like structure in the third dimension, and it is not immediate how to apply the protocol to our setting.

Lattice surgery, a method that can also be understood as a form of code-switching, has also been explored as a method to expand the set of logical gates that can be implemented fault-tolerantly. It was first proposed for the surface code to implement CNOT gates~\cite{horsman2012surface}. Recently, generalized lattice surgery has been proposed for arbitrary high-rate QLDPC codes as a means to perform Pauli-based computation~\cite{bravyi2016trading}. For these high-rate codes with many logical qubits, it is often challenging to address individual logical qubits through targeted Pauli measurements to perform logical gates. These generalized lattice surgery protocols merge the code block together with an ancilla code block to form a larger code block, allowing one to perform joint logical measurements fault-tolerantly on multiple logical qubits before unmerging the code blocks~\cite{cowtan2024css,cowtan2024ssip,cross2024improved,swaroop2024universal,ide2025fault,cowtan2025parallel,he2025extractors}. While the state-of-the-art generalized lattice surgery protocols are extremely powerful and general, they ultimately still suffer from a time overhead that scales with the distance of the code due to the need for multiple rounds of measurements to ensure fault-tolerance. A recent work by Hillmann, et al.~\cite{hillmann2024single} showed that it is possible to perform single-shot lattice surgery that only requires a single round of measurement to reliably merge and split two higher-dimensional topological code blocks. While they were able to reduce the time overhead of the surgery procedure to a constant, their protocol still relies on topological codes that can only encode a constant number of logical qubits. 

We note that Xu et al. addressed the problem of single-shot targeted joint logical Pauli measurement for constant rate HGP codes in Ref.~\cite{xu2025fast} where they constructed special homomorphic CNOTs and used a Steane measurement scheme to perform the joint logical measurements. Their scheme allows for efficient logical teleportation between between HGP codes that differ by punctures and augmentations. However, their scheme does not address the problem of switching between HGP codes of different dimensions to access different transversal logical gates.

Heu{\ss}en and Hilder recently proposed a code-switching protocol that leverages one-way transversal CNOT gates to enable efficient fault-tolerant code-switching between codes~\cite{heussen2025efficient} via logical teleportation. Their protocol is interesting because it uses teleportation to switch between different codes, thereby avoiding the need for direct code deformation. However, the one-way CNOT gates they use are designed for color codes and it is unclear how to generalize their protocol to high-rate codes. 

\subsection{Our Results}
\label{subsec:results}
In this work, we answer the question posed in the introduction in the affirmative by providing a framework for single-shot code-switching for high-rate HGP codes that are known to support single-shot error correction~\cite{fawzi2018efficient, campbell2019theory,quintavalle2021single}, efficient syndrome extraction circuits~\cite{manes2025distance, tan2024effective,berthusen2025adaptive}, fast Clifford gates~\cite{xu2025fast}, and transversal CCZ gates~\cite{golowich2025quantum,lin2024transversal,zhu2025topological,zhu2025transversal}. By carefully designing new gadgets and incorporating existing state-of-the-art gadgets, our proposed scheme achieves \emph{single-shot universality}. In other words, In other words, our protocol possesses single-shot code-switching, single-shot error correction, single-shot state preparation, as well as universal logical gates and measurements that can be done in a constant-depth circuit (without $d$ rounds of measurements) fault-tolerantly. To our best knowledge, this is the first universal FTQC protocol on high-rate codes that possesses this property.  We summarize our main result in the informal theorem statement below:

\begin{theorem}[Informal Statement of Theorem~\ref{thm:universal_FTQC}]
  \label{theorem:informal_universal_FTQC}
  There exists a fault-tolerant quantum computation protocol that implements a universal gate set via single-shot code-switching between high-rate HGP codes using only gadgets that can be performed in a single-shot manner with a constant-depth circuit. The protocol can tolerate a number of adversarial errors that scales linearly with the distance of the HGP codes. In addition, the protocol is fault-tolerant under the local-stochastic noise model.
\end{theorem}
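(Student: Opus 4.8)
The plan is to assemble the protocol from a collection of individually fault-tolerant, single-shot gadgets and then invoke a standard composition argument. The backbone is a pair of codes: a constant-rate 2D HGP code $\mathcal{Q}_{2D}$ and a high-rate 3D HGP code $\mathcal{Q}_{3D}$, chosen so that (a) $\mathcal{Q}_{2D}$ is obtained from $\mathcal{Q}_{3D}$ by fixing a set of gauge operators --- equivalently, the stabilizer group of $\mathcal{Q}_{3D}$ together with a designated gauge group generates that of $\mathcal{Q}_{2D}$ --- (b) $\mathcal{Q}_{3D}$ admits a transversal, hence constant-depth, logical CCZ, and (c) $\mathcal{Q}_{3D}$ is sound/confined in the relevant sector so that the syndrome exposed by the dimensional jump can be decoded from a single round of data. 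The first step is therefore to exhibit such a pair: I would use the simplified recipe for 3D HGP codes with transversal CCZ built from expander graphs and local codes, verify the chain-complex/gauge relation to a companion 2D HGP code explicitly, and derive soundness from the expansion of the underlying classical codes.

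The second and central step is the \emph{code-switching gadget}. For the ``jump up'' direction $\mathcal{Q}_{2D}\to\mathcal{Q}_{3D}$, one initializes the additional qubits in a fixed product state, measures the new $\mathcal{Q}_{3D}$ stabilizers in constant depth, and applies a Pauli correction inferred from that single round of syndrome data together with the soundness decoder. For the ``collapse'' direction $\mathcal{Q}_{3D}\to\mathcal{Q}_{2D}$, one measures out the gauge qubits in the appropriate basis --- a generalization of Bomb\'{i}n's dimensional jump and Hong's dimensional collapse --- and uses the deterministic part of those outcomes to infer a correction. In both directions I would establish fault-tolerance by tracking how a bounded adversarial fault set propagates through the constant-depth circuit, showing the residual data error stays inside a ball whose radius scales with the number of faults, and invoking the soundness of $\mathcal{Q}_{3D}$ (with its single-shot decoder) to conclude that the inferred correction returns the state to a valid codeword up to a low-weight, hence correctable, residual --- so that no $\Theta(d)$ rounds of repeated measurement are required.

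The third step is to supply the remaining gadgets and observe that each is single-shot and constant-depth: single-shot error correction for both $\mathcal{Q}_{2D}$ and $\mathcal{Q}_{3D}$ via the small-set-flip/confinement-based decoders already established for HGP codes; single-shot preparation of logical $\ket{0}$ and $\ket{+}$ in $\mathcal{Q}_{2D}$, obtained by collapsing an easily prepared product-state logical representative of $\mathcal{Q}_{3D}$ and exploiting its soundness; transversal or constant-depth logical Cliffords ($H$, $S$, CNOT) on $\mathcal{Q}_{2D}$; transversal CCZ on $\mathcal{Q}_{3D}$; and single-shot joint logical Pauli measurements through a Steane-type ancilla scheme. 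A universal logical CCZ on data encoded in $\mathcal{Q}_{2D}$ is then realized as: jump up, apply transversal CCZ, collapse back, each stage constant-depth.

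The final step is composition. Under the adversarial model I would place extended rectangles around each gadget and show that whenever the total number of faults in a rectangle is below a fixed fraction of $d$, the rectangle is correct in the usual deformation sense; chaining these yields the claimed tolerance to $\Theta(d)$ adversarial faults (as in the formal Theorem~\ref{thm:universal_FTQC}). For the local-stochastic model I would upgrade this to a threshold statement by the standard counting argument: since every gadget is single-shot, each rectangle has only $\mathrm{poly}(n)$ fault locations, so the probability it contains a bad fault pattern is $O(p^{\,cd})$ for a constant $c>0$, and a union bound over the circuit closes the argument. The main obstacle I anticipate is precisely the second step --- proving the soundness of the chosen 3D HGP code is quantitatively strong enough, and robust to the circuit-level fault propagation of the jump, to guarantee single-shot recovery; pinning down those constants while ensuring the simplified 3D HGP construction simultaneously carries a transversal CCZ, good distance, and the required soundness is where the real work lies.
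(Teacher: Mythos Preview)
Your plan diverges from the paper in its central mechanism. You propose to realize code-switching as a Bomb\'{i}n-style gauge-fixing / code deformation: initialize fresh qubits, measure the new $\mathcal{Q}_{3D}$ stabilizers, and use soundness to repair the single-shot syndrome; then collapse back by measuring out gauge qubits. The paper does \emph{not} do this. Its code-switching (Theorem~\ref{thm:SSCS}, Sections~\ref{sec:homomorphic_cnot_3d_hgp_codes}--\ref{sec:single_shot_code_switching}) is a \emph{logical teleportation}: it constructs a $1$-limited chain map $\varphi:\calA_{\calQ}\to\calA_{\calQ_G}$ (Proposition~\ref{prop:homomorphic_cnot_from_QG_to_QS}) yielding a depth-$1$ one-way homomorphic CNOT between $\calQ$ and $\calQ_G$, and then runs the standard Bell-pair teleportation circuit (Figures~\ref{fig:dimensional_expansion_circuit_2} and~\ref{fig:dimensional_contraction_circuit}) with GPPM-style ancilla measurements. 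No gauge relation between $\calQ$ and $\calQ_G$ is asserted or used.

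There is a real gap in your ``jump up'' step that the paper's teleportation route is specifically designed to avoid. In $\calQ_G$ the $X$ logical representatives are membranes whose extent in the new (third) direction scales with $d$, whereas in $\calQ$ they are strings. If you initialize the extra qubits in a product state and measure the $\calQ_G$ stabilizers, you must either (i) deform each string $\overline{X}$ into a membrane by a physical circuit whose depth scales with the third dimension, or (ii) argue that a single noisy syndrome round plus a Pauli frame update accomplishes this growth fault-tolerantly. The paper explicitly flags this obstruction in Section~\ref{subsec:dimensional_mechanics} and again in Section~\ref{sec:dimensional_expansion}, noting that a naive entangling map between codes of different ``shape'' would have depth scaling with the third dimension; the $1$-limited chain map is exactly what sidesteps it. Your proposal does not supply a mechanism for constant-depth growth of the $X$ logicals, and Hong's dimensional collapse (which you cite) is one-way and, as the paper remarks, ``not immediate'' to apply here. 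A second, smaller gap: you assume single-shot error correction in both bases for $\calQ_{3D}$, but the paper only has it in one basis (Theorem~\ref{thm:good-X-linear-confinement-3D-homological-product-codes}, Corollary~\ref{cor:3D HGP linear soundness}) and handles the other by deferring correction until the return to $\calQ$; your composition argument would need the same asymmetry.
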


From a high level, our scheme takes advantage of the Clifford gates and single-shot error correction in constant-rate 2D HGP codes as well as the transversal CCZ gate and single-shot state preparation in certain 3D HGP codes to achieve single-shot universality. Compared to the usual efficient HGP constructions that utilizes lossless expanders, we construct HGP codes using classical Tanner codes. The slight difference in our construction means that we have to prove certain error correcting and single-shot properties for this different class of HGP codes.
While it is possible to construct constant-rate 3D HGP codes, we make the design choice to trade the rate of the 3D HGP code in the new third dimension for greater flexibility in the first two dimensions, allowing us to use any constant-rate 2D HGP code for code-switching into high-rate 3D HGP codes with transversal CCZ gates. Even though these high-rate 3D HGP codes essentially encode the same number of logical qubits as the original constant-rate 2D HGP codes, this is sufficient for all practical purposes because we always have to code-switch back to the 2D HGP code for error correction and Clifford gates so there is typically no need to have the capability of encoding more logical qubits in the 3D case. This is discussed in greater detail in Section~\ref{paragraph:transversal_logical_gates}. The key ingredient that enables our scheme is a new single-shot code-switching protocol that allows us to fault-tolerantly switch between 2D and 3D HGP codes using only a constant-depth circuit. By switching back and forth between the two codes, we can implement the Hadamard gate in the 2D HGP code and the CCZ gate in the 3D HGP code to achieve universal quantum computation. We provide a schematic overview of our scheme in Figure~\ref{fig:dimensional_expansion_contraction_overview}. We now proceed to provide a technical overview of the results and techniques used in this work to prove Theorem~\ref{theorem:informal_universal_FTQC}.
\begin{figure}[htbp]
\centering
\usetikzlibrary{shapes.geometric, arrows.meta, positioning, calc, patterns, decorations.pathreplacing, shadows}

\begin{tikzpicture}[
    code2d/.style={
        rectangle, 
        draw=blue!60, 
        fill=blue!20, 
        thick, 
        minimum width=4cm, 
        minimum height=2cm,
        rounded corners=5pt
    },
    code3d/.style={
        rectangle, 
        draw=red!60, 
        fill=red!20, 
        thick, 
        minimum width=4cm, 
        minimum height=2cm,
        rounded corners=5pt
    },
    prep3d/.style={
        rectangle, 
        draw=green!60, 
        fill=green!20, 
        thick, 
        minimum width=4cm, 
        minimum height=2cm,
        rounded corners=5pt
    },
    operation/.style={
        rectangle, 
        draw=orange!70, 
        fill=orange!10, 
        thick, 
        minimum width=3cm, 
        minimum height=0.8cm,
        rounded corners=3pt,
        font=\small
    },
    transition/.style={
        <->, 
        thick, 
        >=stealth,
        color=purple!70
    },
    singleshotbox/.style={
        draw=gray!50,
        dashed,
        thick,
        rounded corners=3pt
    },
    label/.style={
        font=\footnotesize,
        text=black!70
    }
]

\node[font=\Large\bfseries] at (0, 8) {Single-Shot code-switching Scheme};
\node[font=\normalsize] at (0, 7.3) {for Fault-Tolerant Universal Quantum Computation};

\node[code2d] (main2d) at (-5, 4) {
    \begin{tabular}{c}
        \textbf{2D HGP Code} $\mathcal{Q}$ \\
        $[\![n, k, d_X, d_Z]\!]$ \\
        \small (Clifford Operations) \\
        • \small Fold-transversal Gates \\
        • \small Pauli-based Computation \\
        • \small Single-shot EC
    \end{tabular}
};


\node[code3d] (middle3d) at (5, 4) {
    \begin{tabular}{c}
        \textbf{3D HGP Code} $\mathcal{Q}_G$ \\
        $[\![n_G, k_G, d_{X,G}, d_{Z,G}]\!]$ \\
        \small (Non-Clifford Operations) \\
        • \small Transversal CCZ Gates
    \end{tabular}
};


\draw[transition, line width=2pt] (main2d.east) -- (middle3d.west);
\node[above, font=\small\bfseries] at (0, 4.5) {Dimensional};
\node[above, font=\small\bfseries] at (0, 4.1) {Expansion/Contraction};



\node[singleshotbox, fit=(main2d.north west) (main2d.south east), inner sep=8pt] {};
\node[label, above=0.5cm of main2d] {Addressable Logical Cliffords};

\node[singleshotbox, fit=(middle3d.north west) (middle3d.south east), inner sep=8pt] {};
\node[label, above=0.5cm of middle3d] {Transversal Inter-block CCZs};


\node[draw=black!30, fill=yellow!10, rounded corners, align=left] at (-5.5, -1) {
    \textbf{Key Features:}\\
    • Constant spatial overhead\\
    \quad (except during non-Clifford \\
    \quad and state preparation)\\
    • Single-shot Universality
};

\node[draw=black!30, fill=cyan!10, rounded corners, align=left] at (0, -1) {
    \textbf{Code Properties:}\\
    • Sipser-Spielman HGP\\
    • Single-shot EC
};

\node[draw=black!30, fill=pink!10, rounded corners, align=left] at (5.5, -1) {
    \textbf{Computational Universality:}\\
    • Clifford: 2D HGP code\\
    • Non-Clifford: 3D HGP code\\
};

\draw[<->, thick, gray] (-7.35, 2.5) -- (-7.35, 5.5);
\node[rotate=90, gray] at (-7.75, 4) {2D};

\draw[<->, thick, gray] (7.25, 3) -- (7.25, 5);
\node[rotate=90, gray] at (7.65, 4) {3D};


\node[font=\footnotesize, text=purple!70] at (0, 3.4) {($\leftarrow$ Teleportation $\rightarrow$)};


\end{tikzpicture}
\caption{Overview of the single-shot universal protocol for fault-tolerant quantum computation using 2D and 3D HGP codes. The scheme leverages single-shot error correction, state preparation, and code-switching to achieve addressable Clifford and non-Clifford operations with low spatial overhead and time complexity.\label{fig:dimensional_expansion_contraction_overview}}
\end{figure}

\subsubsection{Dimensional Mechanics through Single-Shot code-switching}
\label{subsec:dimensional_mechanics}
Single-shot code-switching between HGP codes can be challenging because of circuit depth concerns. From a high level, a 2D HGP code is amenable to (fold)-transversal Hadamard gates because a constant-depth circuit is sufficient to swap the X and Z logical operators that are supported on 1D objects. On the other hand, a 3D HGP code is amenable to transversal inter-block CCZ gates because the 2D X logical operators always intersect at an odd number of points which grants the necessary X stabilizer invariance that the CCZ logical action requires. When we only have 1D X logical operators, there is no guarantee on the parity of the intersection points when we shift the logical operators with stabilizers. Therefore, the goal for our single-shot code-switching protocol is to expand a 2D HGP code into a 3D HGP code in a way that grows the X logical operators from 1D to 2D objects while ensuring that the circuit is fault-tolerant and constant-depth. However, when the second dimension of the X logical operators scales with the distance of the code, one might expect that the circuit depth must also scale with the distance of the code. However, we show that it is possible to achieve this goal.




We introduce a new single-shot code-switching protocol that allows us to fault-tolerantly switch between a $D$-dimensional HGP code and a $(D+1)$-dimensional HGP code built from the former and a generalized repetition code constructed using the Sipser-Spielman classical expander codes~\cite{sipser2002expander}. This can be iterpreted as a single-shot lattice surgery protocol that is performed via logical teleportation instead of code deformation. The informal theorem statement below summarizes our second protocol:

\begin{theorem}[Informal Statement of Theorem~\ref{thm:SSCS}]
  \label{thm:informal_SSCS}
  Let $\calQ$ be an $D$-dimensional HGP code constructed from classical expander codes for $D \geq 2$. Let $\calQ_G$ be an $(D+1)$-dimensional HGP codes constructed from $\calQ$ and a classical generalized repetition code.
  There exists a SSCS protocol that fault-tolerantly switches between $\calQ$ and $\calQ_G$ using only a constant-depth logical teleportation circuit that can tolerate a number of adversarial errors that scales linearly with the distance of the codes. In addition, the protocol exhibits a threshold against local-stochastic noise.
\end{theorem}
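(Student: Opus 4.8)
The plan is to build the single-shot code-switching (SSCS) protocol as a logical teleportation circuit, and then separately verify each of the three ingredients that "single-shot" and "fault-tolerant" require: correctness of the logical action, constant circuit depth, and error-propagation control under both noise models. First I would set up the code structures explicitly. Write $\calQ$ as the $D$-dimensional HGP code coming from a chain complex built out of classical expander (Sipser--Spielman) codes, and describe $\calQ_G$ as the $(D+1)$-dimensional HGP code obtained by taking the total tensor product of that complex with the length-$2$ chain complex of a classical generalized repetition code $G$ (so that $\calQ_G$'s chain complex is the cone/mapping-cylinder of the identity on $\calQ$'s complex, stretched along the new axis by $G$). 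The key structural fact to extract here is that $\calQ_G$ contains two disjoint copies of $\calQ$ living on its two "boundary slices," connected by a bulk whose logical operators along the new direction are governed by $G$; the logical qubits of $\calQ_G$ are in bijection with those of $\calQ$, but the $X$-type logical operators have been promoted from $(D-1)$-dimensional to $D$-dimensional supports. This is the generalization of Bomb\'in's dimensional jump, and I would state it as a lemma about the homology of the tensor-product complex (essentially a K\"unneth computation using $H_\bullet(G) \cong \mathbb{F}_2$ in the relevant degree).

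Next I would construct the teleportation circuit itself. The idea is not to deform $\calQ$ into $\calQ_G$ in place, but rather: prepare a fresh copy of $\calQ_G$ (in its "single-shot preparable" logical state, using the state-preparation gadget that the paper cites as already available), perform a transversal (inter-block) CNOT between the data block and the corresponding boundary slice of $\calQ_G$, then measure the data block transversally in an appropriate basis, and finally apply a Pauli correction on $\calQ_G$ conditioned on the measurement outcome. The reverse direction ($\calQ_G \to \calQ$) is the mirror image. I would prove the logical identity by tracking stabilizer-group generators and logical-operator representatives through the Clifford circuit: because the logical qubits of $\calQ$ and $\calQ_G$ are identified by the K\"unneth map, the transversal CNOT plus destructive measurement implements logical teleportation on each logical qubit simultaneously, and the Pauli frame update is read off from the measured syndrome and logical outcomes. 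The constant-depth claim is then immediate from the fact that every step — state prep of $\calQ_G$, transversal CNOT, transversal single-qubit measurement, Pauli correction — is by hypothesis or by construction depth-$O(1)$.

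The fault-tolerance analysis is where the real work lies, and I expect the central obstacle to be bounding how a small adversarial error set spreads through the transversal CNOT and, more delicately, showing that the single round of syndrome information on $\calQ_G$ is \emph{sufficient} to correct residual errors — i.e.\ establishing the single-shot (soundness) property for this particular HGP family, which is slightly nonstandard because we built it from Tanner/expander codes rather than the usual lossless-expander construction. The plan for the adversarial bound: a transversal CNOT maps a weight-$w$ input error to weight $\le 2w$ output error with a controlled support pattern (each data qubit touches $O(1)$ qubits of $\calQ_G$), and state-preparation / measurement faults contribute independently; so if the total fault count is $\le c\, d$ for small constant $c$, the residual error on $\calQ_G$ after the correction stays within the unique-decoding radius. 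The key lemma needed is a soundness/confinement statement for $\calQ_G$: any error whose $\calQ_G$-syndrome has weight $s$ is equivalent, modulo stabilizers, to an error of weight $O(s)$ — this follows from the expansion of the Sipser--Spielman code combined with the cone structure, and is the analogue of the single-shot property proven for quantum Tanner codes, but I would need to redo it for this complex (this is flagged in the excerpt as "we have to prove certain error correcting and single-shot properties for this different class of HGP codes"). Finally, for the local-stochastic model I would invoke the standard reduction: confinement plus a percolation/union-bound argument over connected fault clusters on the bounded-degree Tanner graph of $\calQ_G$ gives a threshold, using that a logical failure requires a fault cluster of size $\Omega(d)$ whose probability is exponentially small below a critical physical error rate. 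The hard part, to restate, is the confinement lemma for the stretched-by-$G$ complex; everything else is Clifford bookkeeping and a standard cluster-expansion threshold argument.
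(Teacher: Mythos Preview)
Your approach is essentially the paper's: teleport via a one-way homomorphic CNOT (the sparse chain map $a \mapsto a \otimes g_0$ into a fixed slice of $\calQ_G$), with single-shot preparation of the receiving block and a soundness/confinement lemma for $\calQ_G$ handling both noise models; the paper proves that lemma by showing product expansion of the two random local codes survives adjoining a repetition local code. Two minor corrections: the generalized repetition code $\calC_G$ lives on a $\Delta$-regular spectral expander rather than a path, so $\calQ_G$ is not a mapping cylinder and has no distinguished ``two boundary slices'' (any bit $g_0$ of $\calC_G$ works as the target slice for the CNOT); and the paper routes the teleportation through an extra adapter ancilla block of $\calQ$ so that the joint $\overline{Z}\,\overline{Z}$ measurement is realized in the GPPM framework---homomorphic CNOT $\calQ_G \to \text{ancilla}$, standard transversal CNOT $\text{data} \to \text{ancilla}$, destructive $Z$-measurement of the ancilla---though your simpler two-block circuit is also correct.
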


The high-level idea of our protocol is to utilize the homomorphic CNOT framework developed by Huang, et al.~\cite{huang2023homomorphic} to develop a \emph{one-way} logical CNOT gate between HGP codes. This addresses the open question raised in Ref.~\cite{heussen2025efficient} of how to generalize the one-way CNOT gate for color codes to high-rate codes for the purpose of code-switching via logical teleportation. We prove that there exists a sparse chain map between the chain complexes of $\calQ$ and $\calQ_G$ that allows us to implement the homomorphic CNOT between them with a depth-1 circuit. We utilize the grid Pauli product measurement (GPPM) framework developed by Xu, et al.~\cite{xu2025fast} to perform the necessary single-shot joint logical measurements along with the homomorphic CNOTs for the entanglement to complete the logical teleportation between $\calQ$ and $\calQ_G$ for the code-switching.

\subsubsection{Single-Shot Universality}
\label{subsec:single_shot_universality}
In this section, we discuss how every single gadget can be made single-shot or constant-depth to achieve single-shot universality. We discuss each gadget in a separate paragraph below.

\paragraph{Single-Shot State Preparation}
\label{paragraph:single_shot_state_prep}
To prepare logical $\ket{\overline{0}}$ and $\ket{\overline{+}}$ states in the 2D HGP code $\calQ$ in a single shot, we use the protocol of Bergamaschi and Liu \cite{bergamaschi2024fault}, which requires a spatial overhead of $\Theta(\sqrt{n})$, i.e. $\Theta(n^{3/2})$ additional ancilla qubits. Since 2D HGP codes generically lack any metacheck structure, they do not typically possess single-shot state preparation without additional overhead.

\paragraph{Single-Shot Error Correction.}
\label{paragraph:single_shot_error_corr}
It is known that 2D HGP codes built from classical expander codes can perform single-shot error correction~\cite{fawzi2018efficient, campbell2019theory,quintavalle2021single}. To be specific, 2D HGP codes are known to exhibit single-shot error correction under both the adversarial and local-stochastic noise models~\cite{fawzi2018efficient, campbell2019theory}. Less is known about single-shot error correction in higher-dimensional HGP codes. Campbell and Quintavalle, et al.~\cite{quintavalle2021single} showed that 3D HGP codes built from classical expander codes can perform single-shot error correction under the adversarial noise model in one Pauli basis. Our new result discussed above in the single-shot state preparation section implies that 3D HGP codes can perform single-shot error correction under the local-stochastic noise model in one Pauli basis. Even though we can only perform single-shot error correction in a single Pauli basis under both noise models, we can always delay the error correction in the other Pauli basis until we switch back to a 2D HGP code where we can perform single-shot error correction in both Pauli bases. This is sufficient for the purpose of our fault-tolerance analysis because all of our gadgets have a constant-depth implementation and hence we can bound the accumulation of errors. Most importantly, for most applications, our code only stays in 3D momentarily to perform the transversal CCZ gate before switching back to 2D. Therefore, we do not expect a significant accumulation of errors in the other Pauli basis during the short time spent in 3D. 

\paragraph{Transversal Logical Clifford and Non-Clifford Gates}
\label{paragraph:transversal_logical_gates}
There are many options for implementing logical Clifford gates in 2D HGP codes. We can use the fold-transversal framework developed by Breuckmann and Burton~\cite{breuckmann2024fold} to implement Clifford gates in a constant-depth circuit~\cite{quintavalle2023partitioning}. Automorphisms are also a powerful tool for implementing logical Clifford gates in HGP codes~\cite{berthusen2025automorphism}. Pauli-based computation~\cite{bravyi2016trading} is another powerful framework for implementing logical Clifford gates in HGP codes. We can use the GPPM framework developed by Xu, et al.~\cite{xu2025fast} to perform targeted logical Pauli measurements by entangling the HGP code block with a punctured or/and augmented HGP code block using homomorphic CNOTs. This allows us to perform addressable logical Clifford gates in a constant-depth circuit when coupled with fold-transversal gates and logical automorphisms.

Before we discuss how the transversal CCZ gates can be implemented on our HGP codes, we first address a possible question that readers might have.
One might ask why do we need to code switch to $\calQ_G$ when we already have an adapter 3D HGP code block $\calQ_S$ that has the ability to perform dimensional jump when coupled with our single-shot lattice surgery protocol. The reason is that the 3D HGP code $\calQ_S$ may not have a transversal CCZ gate because the constraints for allowing CCZ gates are typically extremely restrictive and difficult to satisfy. Using results regarding sheaf codes from Ref.~\cite{lin2024transversal} and the HGP constructions from Ref.~\cite{zhu2025topological,zhu2025transversal}, we prove that $\calQ_G$ is amenable to a transversal CCZ gate even when $\calQ$ is freely constructed from classical Sipser-Spielman codes built from arbitrary expanders and local codes. The only constraint we have to impose is that the three blocks of 3D HGP code must have matching expander graphs in the same physical dimension and the corresponding local codes in the same physical dimension must satisfy the \emph{multiplication property}. While choosing an asymptotically good classical code for the third dimension can help with achieving constant rate for our 3D HGP codes, we intentionally choose a generalized repetition code for the third dimension to allow us to satisfy the multiplication property in a simple, systematic way even though it may not be the most space-efficient choice. 
Because we are using a repetition code as one of the local codes, this multiplication property is automatically satisfied when the other two local codes in the same physical dimension of the two other 3D HGP codes are chosen to be dual to each other. This vastly simplifies the construction of high-rate 3D HGP codes with transversal CCZ gates and allows us to search and consider a much larger family of classical codes as constituents for our HGP codes than before, thus expanding the design space for finite-size codes with good parameters, logical gates, and code automorphisms~\cite{berthusen2025automorphism}. This is because we can effectively pick any two classical codes for the construction of the 2D HGP code $\calQ$ that is used to construct our first 3D HGP code block and then use it to constrain the construction of the other two 3D HGP code blocks. 
We summarize our result regarding transversal CCZ gates in the informal theorem statement below:
\begin{theorem}[Informal Statement of Theorem~\ref{thm:satisfaction_of_local_multiplication_property}]
  \label{thm:informal_dual_multiplication_property}
  Let $\calQ_1, \calQ_2, \calQ_3$ be three 3D HGP code constructed with a generalized repetition code in one of its dimensions and two classical codes in the other two dimensions. Let the following three conditions hold:
  \begin{itemize}
    \item The three 3D HGP codes have the same expander graphs in the same physical dimensions.
    \item The generalized repetition codes are all oriented in different physical dimensions. 
    \item For each physical dimension, the two classical codes used in the other two 3D HGP codes are dual to each other in a local way.
  \end{itemize}
  Then the transversal CCZ gate on the three code blocks implements a logical CCZ gate.
\end{theorem}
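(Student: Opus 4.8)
The plan is to reduce the claim to a single, purely \emph{local} condition on the constituent classical codes --- the local multiplication property --- and then to observe that this condition is forced by the three hypotheses, the reduction itself being an application of the sheaf-code/cup-product machinery of Lin~\cite{lin2024transversal} together with the 3D HGP constructions of Zhu et al.~\cite{zhu2025topological,zhu2025transversal}, which we may invoke as a black box. Concretely, by the first hypothesis the three codes $\calQ_1,\calQ_2,\calQ_3$ are built over a \emph{common} product $G_1\times G_2\times G_3$ of Tanner (expander) graphs, one graph $G_\mu$ for each physical dimension $\mu\in\{1,2,3\}$; the only data that varies is, for each $\mu$, the local code $L_{i,\mu}$ imposed at the vertices of $G_\mu$ in $\calQ_i$. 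In the sheaf picture of~\cite{lin2024transversal}, $\calQ_i$ is the product code attached to the triple $(L_{i,1},L_{i,2},L_{i,3})$ over this fixed cubical complex, and the shared geometry furnishes a canonical, sector-by-sector identification of the physical qubit sets of the three blocks; the transversal gate in question is $\prod_{q}\mathrm{CCZ}(q^{(1)},q^{(2)},q^{(3)})$ over this identified set. The second hypothesis is what makes this identification consistent across the several qubit sectors of a 3D HGP code: each block collapses a different one of the three directions (its repetition direction), and cyclically rotating which direction is collapsed is exactly what lines up the three families of two-dimensional $X$-logical supports so that ``corresponding logical qubits'' is well defined.

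I would then invoke the transversal-$\mathrm{CCZ}$ theorem for sheaf/product codes from~\cite{lin2024transversal,zhu2025transversal}: the transversal gate above implements a logical $\mathrm{CCZ}$ --- simultaneously on each triple of corresponding logical qubits --- provided the \emph{local multiplication property} holds, namely that for every physical dimension $\mu$ and every vertex $v$ of $G_\mu$,
\begin{equation*}
\sum_{e\ni v} a_e\,b_e\,c_e\;\equiv\;0\pmod 2\qquad\text{for all }a\in L_{1,\mu},\ b\in L_{2,\mu},\ c\in L_{3,\mu}.
\end{equation*}
Two points are worth stressing. First, this trilinear condition is symmetric in its three slots, so it is genuinely one condition per dimension, equivalently $L_{i,\mu}\cdot L_{j,\mu}\subseteq L_{k,\mu}^{\perp}$ for any one cyclic assignment $\{i,j,k\}=\{1,2,3\}$. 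Second --- and this is why routing through the sheaf formalism pays off --- only this vertex-local condition is required; no \emph{global} orthogonality of the constituent classical Tanner codes is needed (it generally fails), since the cup product is assembled from the local data alone, and it is precisely this property that guarantees both the invariance of the triple-intersection parities under adding $X$-stabilizers and the fact that the realized logical pairing is the identity.

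To verify the property, fix a physical dimension $\mu$. By the second hypothesis the assignment of repetition directions is a bijection $\{1,2,3\}\to\{1,2,3\}$, so exactly one of $L_{1,\mu},L_{2,\mu},L_{3,\mu}$ comes from a generalized repetition code, and by construction that local code is the length-$\Delta_\mu$ repetition code $\{\mathbf 0,\mathbf 1\}$ (with $\Delta_\mu$ the degree of $G_\mu$); call it $L_{k,\mu}$ and set $\{i,j,k\}=\{1,2,3\}$. For $c=\mathbf 0$ the trilinear sum vanishes trivially; for $c=\mathbf 1$ it collapses to $\sum_{e\ni v}a_e b_e=\langle a,b\rangle$, which vanishes for all $a\in L_{i,\mu}$ and $b\in L_{j,\mu}$ exactly when $L_{i,\mu}\subseteq L_{j,\mu}^{\perp}$. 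But the third hypothesis says the two non-repetition codes in dimension $\mu$ are dual in a local way, which is precisely $L_{i,\mu}\subseteq L_{j,\mu}^{\perp}$. Hence the local multiplication property holds in all three dimensions, and the previous step yields the conclusion.

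The delicate part is the setup rather than this last computation: one has to pin down the cross-block qubit identification correctly. A 3D HGP code carries several qubit sectors coming from the graded pieces of the triple tensor-product chain complex, and the naive ``identity'' identification is wrong --- the right one permutes sectors in step with the cyclic assignment of repetition directions, and one must check that the induced map on $X$-homology is an isomorphism intertwining the three logical-qubit labelings and that under it the qubitwise $\mathrm{CCZ}$ restricts to the joint code space with the asserted logical action. Verifying that these HGP codes genuinely meet the sheaf-code hypotheses of~\cite{lin2024transversal} (so that its cup-product theorem applies) is part of the same bookkeeping; once that is in place, what remains is the one-line local computation above.
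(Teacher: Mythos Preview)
Your proposal is correct and takes essentially the same approach as the paper: reduce to the local multiplication property via Lin's cup-product theorem (the paper's Theorem~\ref{thm:local-multiplication-ccz-property}), then verify it dimension-by-dimension by observing that one local code is always the repetition code $\{\mathbf{0},\mathbf{1}\}$, so the trilinear sum collapses to the standard bilinear pairing between the remaining two local codes, which vanishes by duality. Your discussion of the cross-block qubit identification and sector permutation is more explicit than the paper's treatment, which simply invokes Definition~\ref{def:simple_construction_3D_HGP_CCZ} and Theorem~\ref{thm:local-multiplication-ccz-property} without elaborating on that bookkeeping.
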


\paragraph{Transversal Measurement}
\label{paragraph:transversal_measurement}
Transversal measurement of logical Pauli operators in HGP codes can be done in a single-shot manner without the need for multiple rounds of measurements. By using the GPPM framework developed by Xu, et al.~\cite{xu2025fast}, we can perform targeted logical Pauli measurements by entangling the HGP code block with a punctured or/and augmented HGP code block using homomorphic CNOTs before performing transversal measurements on the ancilla code block using the Steane measurement framework. The measurement outcomes can then be processed classically to infer the desired logical measurement outcome fault-tolerantly. If we are interested in measuring all the logical qubits in some Pauli basis, we can simply perform transversal measurements on all the physical qubits in the code block without the need for any ancilla code blocks just like the case for any CSS code.

\subsection{Discussion and Outlook}
\label{subsec:discussion_outlook}
In this work, we provided a framework for single-shot code-switching between high-rate HGP codes that allows us to achieve single-shot universality for fault-tolerant quantum computation. Our protocol leverages existing and new gadgets that can be performed in a single-shot manner with constant-depth circuits to achieve single-shot universality. To our best knowledge, this is the first universal fault-tolerant quantum computation protocol for high-rate codes that possesses this property. 

\begin{remark}
    We recently became aware of independent and concurrent work by Li, et al.~\cite{li2025universal} that also considers universal fault-tolerant quantum computation that uses ideas related to single-shot code-switching. Our code-switching approach is different from theirs and we believe that both approaches are interesting and complementary to each other. In addition, we provide an explicit construction of high-rate 3D HGP codes with transversal CCZ gates that can be used for universal fault-tolerant quantum computation.

    Independently, Golowich, et al.~\cite{golowich2025constant} and Qian, et al.~\cite{xu2025batched} also have upcoming works that consider code-switching between high-rate codes for addressable gates or/and high-rate logical operations. Their work is complementary to ours because they focus on code-switching between high-rate codes for logical Clifford gates while we focus on code-switching between 2D and 3D HGP codes to access transversal CCZ gates.
\end{remark}

We believe that our work opens up many exciting future directions. We discuss some of the avenues below.

\newtheorem*{problem1}{Open Problem 1}

\begin{problem1}
Can we achieve constant spatial overhead for universal fault-tolerant quantum computation with single-shot universality for HGP codes under the local-stochastic noise model?
\end{problem1}

Achieving constant rate in 3D is not typically required for code-switching since we ultimately have to discard the newly obtained logical qubits when we revert back to the 2D code. Nonetheless, having constant rate in 3D could still be useful for logical compilation purposes and the additional logical qubits can be useful as additional logical workspace in certain quantum algorithms. In addition, fault-tolerance under the adversarial noise is a strong assumption that is not typically satisfied in practice. Therefore, it would be interesting to see if we can go beyond individual fault-tolerance for the gadgets with respect to local-stochastic noise but also show that an arbitrary composition of these gadgets can also tolerate local-stochastic noise. A recent result by He~et al.~\cite{he2025composable} provides some insights into this direction.

\newtheorem*{problem2}{Open Problem 2}
\begin{problem2}
Can we generalize our single-shot code-switching framework to other families of quantum error-correcting codes beyond HGP codes?
\end{problem2}

Another promising direction is to explore the application of our single-shot code-switching framework to other exotic families of quantum error-correcting codes. This could potentially lead to new insights and techniques for achieving fault tolerance in a wider range of quantum computing architectures. Bivariate bicycle codes and trivariate tricycle codes are promising candidates because of their extremely high encoding rates and translational symmetries~\cite{bravyi2024high,jacob2025single, menon2025magic}. Other exotic product codes may also give us better distance scalings and qubit overheads~\cite{panteleev2021quantum, breuckmann2021balanced}.

\newtheorem*{problem3}{Open Problem 3}
\begin{problem3}
Does our single-shot code-switching framework possess an advantage over the state-of-the-art magic state distillation (MSD) techniques?
\end{problem3}

We note that recent advances in MSD and lattice surgery techniques have made them extremely efficient both in the practical and asymptotic regime~\cite{yoder2025tour,nguyen2025quantum}. Comparisons between code-switching and magic state distillation in Ref.~\cite{beverland2021cost} for topological codes have showed that code-switching does not
offer substantial savings over state distillation both in terms of space and space-time overhead. Therefore, most of the community has focused on improving MSD and lattice surgery techniques instead of code-switching. However, these comparisons were done for topological codes. It remains an open question whether code-switching can outperform MSD for high-performing QLDPC codes. Our work takes a step towards answering this question by providing a framework for code-switching for high-rate QLDPC codes. A more detailed comparison is left for future work.

\subsection{Outline}
\label{subsec:outline}
The rest of the paper is organized as follows. In Section~\ref{sec:preliminaries}, we provide the necessary preliminaries regarding our notation as well as graphs, classical and quantum codes, chain complexes and homology, HGP codes, soundness and confinement, and the coboundary invariance property required for CCZ gates. In Section~\ref{sec:static}, we present a simple static scheme for universal fault-tolerant quantum computation using 3D HGP codes with transversal CCZ gates just to illustrate the power of having CCZ gates. In Section~\ref{sec:homomorphic_cnot_3d_hgp_codes}, we present a primitive that allows us to perform a one-way homomorphic CNOT between two different 3D HGP codes that would be useful for our single-shot code-switching protocol. In Section~\ref{sec:single_shot_code_switching}, we detail our single-shot code-switching protocol between 2D and 3D HGP codes using the homormophic CNOT developed in Section~\ref{sec:homomorphic_cnot_3d_hgp_codes} and demonstrate how it can be used to implement a universal set of logical gates with single-shot universality. In Section~\ref{sec:3D_soundness}, we prove that 3D HGP codes can exhibit linear confinement which is a new result that is useful for our single-shot state preparation and error correction. In Section~\ref{sec:adversarial_noise}, we analyze the fault tolerance of our single-shot universal protocol as well as its constitutent gadgets under the adversarial noise model. In Section~\ref{sec:local-stochastic}, we analyze the fault tolerance of the gadgets we utilize in our single-shot universal protocol under the local-stochastic noise model. 
\subsection{Acknowledgements}
\label{sec:acknowledgements}
We thank Noah Berthusen, Xiaozhen Fu, Daniel Gottesman, Yaoyun Shi, and Qian Xu for helpful discussions. We also thank Aleksander Kubica for helpful discussions on single-shot error correction.
This material is based upon work supported in part by the
Defense Advanced Research Projects Agency (DARPA)
under Agreement HR00112490357, the NSF QLCI award OMA2120757, the NSF-funded NQVL:QSTD: Pilot: DLPQC and the DoE ASCR Quantum Testbed
Pathfinder program (awards No.~DE-SC0019040 and No.~DE-SC0024220). This work was performed in part at the Kavli Institute for Theoretical Physics (KITP), which is supported by grant NSF PHY-2309135.
SJST acknowledges funding and support from Joint Center for Quantum Information and Computer Science (QuICS) Lanczos Graduate Fellowship and the National University of Singapore (NUS) Development Grant. 

\textbf{AI Usage:} the authors acknowledges the use of Claude in generating some of the figures in this paper and ChatGPT in proofreading for grammar and spelling.  
\section{Preliminaries}
\label{sec:preliminaries}
This section provides the necessary notation and preliminaries for the paper.

\subsection{Notation}
\label{subsec:notation}
For a positive integer $n$, let $[n] = \{1, 2, \ldots, n\}$.
For a prime power $q$, let $\F_q$ denote the finite field of order $q$.
Given a vector $\mbf{v} \in \F_q^n$, let $\|\mbf{v}\| = \left|\set{i \in [n]: \mbf{v}_i \neq 0}\right|$ denote the Hamming weight  of $\mbf{v}$, i.e., the number of nonzero coordinates in the standard basis.
We use boldface uppercase letters (e.g., $\mbf{H}$) to denote matrices and boldface lowercase letters (e.g., $\mbf{v}$) to denote vectors.
We denote the standard basis vector as $\mbf{e}_i \in \F_q^n$, which has a 1 in the $i$-th position and 0s elsewhere.
We also assume that all $\F_q$-vector spaces in our note are finite-dimensional which implies that all Hamming weights are finite. We provide a summary for the notation and conventions that we use in our paper in Table~\ref{tab:notation}.

\begin{table}[htbp]
\centering
\renewcommand{\arraystretch}{1.15}
\begin{tabularx}{\textwidth}{@{}>{$}l<{$} >{\raggedright\arraybackslash}X >{\raggedright\arraybackslash}p{3.0cm} >{\raggedright\arraybackslash}X@{}}
\toprule
\text{Symbol} & Meaning & Type / domain & Notes \\
\midrule
q & prime power (field size) & $\Z_{>0}$ & Field $\F_q$ \\[2pt]
\F_q^{n} & length-$n$ vector space & vector space & Standard basis unless stated \\[2pt]
\mbf{e}_i & standard basis vector & vector & $i$-th entry is 1, others 0 \\[2pt]
\|\mbf{v}\| & Hamming weight & $\N$ & Number of nonzeros in $\mbf{v}$ \\[2pt]
\calC \subseteq \F_q^{n} & classical linear code $[n,k,d]_q$ & code & Dual $\calC^{\perp}$, $\rho=k/n$ \\[2pt]
\mbf{G},\,\mbf{H} & generator / parity-check of $\calC$ & matrices & $\mbf{H}\mbf{G}^{\top}=0$ \\[2pt]
\mathcal{Q} & quantum CSS code $\llbracket n,k,d \rrbracket$ & quantum code & $\mbf{H}_X \mbf{H}_Z^{\top}=0$ \\[2pt]
d_X,\, d_Z,\, d & X/Z distances;  $d=\min(d_X,d_Z)$ & $\N$ & — \\[2pt]
w & LDPC locality (max row/col weight) & $\N$ & — \\[2pt]
\overline{X}_i,\ \overline{Z}_{i} & logical Paulis & operators & $i$-indexed logical qubits\\[2pt]
\mathcal{A} & chain complex & complex & $\cdots \to A_{i+1} \xrightarrow{\partial_{i+1}} A_i \xrightarrow{\partial_i} A_{i-1} \to \cdots$ \\[2pt]
\partial_i, \delta_i & boundary and coboundary maps & linear map & $\delta_i=\partial_{i+1}^{\top}$ \\[2pt]
Z_i(\calA),\, B_i(\calA),\, H_i(\calA) & cycles, boundaries, homology & subspaces & — \\[2pt]
d_i(\calA),\, d^{\,i}(\calA) & systolic and cosystolic distance & $\N$ & Min.\ weight $Z$ and $X$ logical operators \\[2pt]
\mathrm{HGP}(\mbf{H}_1,\mbf{H}_2) & hypergraph product (HGP) code & construction & — \\[2pt]
\calG=(L,R,E) & biregular bipartite graph & graph & Degrees \ $(\ell,r)$,\ neighbors \ N($\cdot$) \\[2pt]
(\gamma,\alpha)  & expansion parameters & reals & Vertex/edge expansion \\[2pt]
T(\calG,\calC) & Tanner code with local code \ $\calC$ & classical code & — \\[2pt]
\mbf{M}_Z & Z-metacheck matrix & matrix & Syndrome repair \\[2pt]
\mathrm{C}^{(\ell)}\mathrm{Z} & multi-controlled-Z gates & gates & Level \ $\ell{+}1$ \ of Clifford hierarchy \\[2pt]
\mathcal{Q}',\ \tilde{\mathcal{Q}} & thickened / expanded code & quantum codes & — \\
\bottomrule
\end{tabularx}
\caption{Symbols and conventions used throughout the paper.\SAM{CHECK EXPANSION PARAMS}\label{tab:notation}}
\end{table}

\subsection{Graphs and Expansion Parameters}
\label{subsec:graphs_and_expansion_parameters}
In this section, we introduce the necessary graph-theoretic concepts and expansion parameters used in our paper. We will be interested in regular graphs with uniform degree $\Delta$. We begin by first reciting the definition of a spectral expander.

\begin{definition}[Spectral expander] \label{def:spectral-expander}
    Given a simple graph $\calG = (V,E)$ with adjacency matrix $A(\calG) \in \F^{\abs{G}\times\abs{G}}_2$. We say that $\calG$ is a $\lambda$-spectral expander when $\lambda$ is the second-largest eigenvalue of $A(\calG)$.
\end{definition}

For a $\Delta$-regular, $\lambda$-spectral expander graph $\calG = (V,E)$, the Cheeger inequalities tell us that the edge expansion constant
\begin{align}
    h(\calG) = \min_{S \subset V: \abs{S}\leq\abs{V}/2} \frac{\abs{\partial S}}{\abs{S}} \, ,
\end{align}
where $\partial S \subset E$, obeys $\frac{1}{2}(\Delta-\lambda) \leq h(\calG) \leq \sqrt{2\Delta(\Delta-\lambda)}$ \cite{hoory06}. Thus, if we desire large edge-expansion, then we want $\lambda$ to be as small as possible. Explicit constructions of regular graphs with such optimal spectral expansion are known \cite{lubotzky2014ramanujan}.

The spectral expander graph is a useful tool for constructing Tanner codes with good distance properties.

\subsection{Classical Linear Codes}
\label{subsec:classical-codes}
This section states the basic definition for classical linear codes that anchor classical coding theory. 
We also briefly define the dual code of a classical linear code.

\begin{definition}[Classical Linear Code]
  \label{def:classical-linear-code}
  For a finite field $\F_q$, a classical linear code of length $n$ and dimension $k$ over $\F_q$ is a $k$-dimensional linear subspace $\calC \subseteq \F_q^n$.
  The rate of $\calC$ is $\rho \coloneqq k/n$. 
  The distance $d$ of $\calC$ is the minimum Hamming weight of a nonzero element of $\calC$, that is $d = \min_{\mbf{c} \in \calC \setminus \{0\}}\|\mbf{c}\|$.
  In general, we refer to $\calC$ as an $[n, k, d]_q$ code.
  The dual code $\calC^\perp \subseteq \F_q^n$ of $\calC$ is defined by $\calC^\perp = \left\{\mbf{x} \in \F_q^n : \mbf{x}\cdot \mbf{y} = 0\; \forall \mbf{y} \in \calC\right\}$, where $\mbf{x} \cdot \mbf{y} = \sum_{i \in [n]}\mbf{x}_i \mbf{y}_i$ denotes the standard bilinear form. 
\end{definition}

\subsection{Classical Tanner Codes on Spectral Expanders}
\label{subsec:classical-tanner-codes-on-left-right-expanders}
In this section, we introduce the classical Tanner codes defined on spectral expanders that we introduced in Definition~\ref{def:spectral-expander}. 
These classical Tanner codes have good error-correcting capabilities that are inherited from the spectral expanders. These classical codes are then used as ingredients in our quantum code constructions.

\begin{definition}
\label{def:tanner_code}
Given a $(\Delta_\ell,\Delta_r)$-biregular bipartitite graph $\calG = (L, R, E)$ where $|L| = n$ and $|R| = m$, and a code $\calC_0 \subseteq \F_2^{\Delta_r}$, we define the Tanner code of $\calG$ and $\calC_0$ as
\[T(\calG, \calC_0) = \set{\mbf{c} \in \F_2^n\;|\; \forall u \in R, \mbf{c}|_{N(u)} \in \calC_0}\]
where $\mbf{c}|_{N(u)} \in \F_2^{\Delta_r}$ denotes the subsequence of $\mbf{c}$ formed by the bits corresponding to the neighbors of $u \in L$ in the graph $\calG$.
\end{definition}

When $\calC_0$ is the regular parity check code, then $T(\calG, \calC_0)$ is simply a classical code whose parity check matrix $H$ is given by the graph adjacency matrix of the graph $\calG$. When $\calC_0$ is linear, the Tanner code $T(\calG, \calC_0)$ is also a classical linear code.
This means that the Tanner code $T(\calG, \calC_0)$ can be described by a parity check matrix $H$ that is constructed from the graph adjacency matrix of the graph $\calG$ and the parity check matrix of the code $\calC_0$. 
For the sake of concreteness, let us suppose that $\calC_0$ is a $[n_0 = \Delta_r, k_0, d_0]$ code with $m_0$ checks. The parity check matrix $H_T(\calG, \calC_0)$ which we abbreviate as $H_T$ of the Tanner code $T(\calG, \calC_0)$ can be constructed as follows:
\begin{enumerate}
  \item Initialize an empty matrix with a number of rows equal to $m \cdot m_0$ and $n$ columns (one for each node in $L$).
  \item For each check node $w_i \in R$ for $i \in [m]$:
  \begin{enumerate}
    \item Take the matrix $H_0 \in \F_2^{m_0 \times n_0}$ of the local code $\calC_0$ which acts only on $v \in L$ such that $v \in N(w_i)$.
    \item Embed each row of $H_0$ into the global matrix $H_T$ by placing the coefficients at the columns corresponding to $N(w_i)$ and zeros elsewhere. 
  \end{enumerate}
\end{enumerate}
\SAM{Restriction to Sipser-Spielman codes should only be done in the CCZ section. Change it later.}
For our work, we are interested in Tanner codes defined on $\Delta$-regular spectral expanders and a linear local code of length $\Delta$. In particular, we can position the bits of the Tanner code on the edges of a $\Delta$-regular graph whereas the checks are the vertices of the graph. Each check imposes the local code's constraints on the $\Delta$ edges incident to it. This class of codes are referred to as Sipser-Spielman codes or classical expander codes. We shall refer to these classical codes as classical expander codes, Tanner codes, and Sipser-Spielman codes interchangeably in this paper. Building the codes with spectral expanders and short linear codes gives us classical expander codes with asymptotically good parameters and efficient, linear-time decoding algorithms~\cite{sipser2002expander}.

\subsection{Quantum Codes}
\label{subsec:quantum-codes}
This section states the basic definitions in quantum coding theory.
In particular, we pay close attention to quantum CSS codes.

\begin{definition}[Quantum CSS Codes]
  \label{def:quantum-css-codes}
  For a finite field $\F_q$, a quantum CSS code of length $n$ over $\F_q$ is specified by a pair of classical codes $\calC_X, \calC_Z \subseteq \F_q^n$ such that $\calC_X^\perp \subseteq \calC_Z$.
  Let $\calQ$ denote the resulting CSS code. Its dimension is $k\coloneqq \dim(\calC_Z) - \dim(\calC_X^\perp)$, and its rate is $\rho \coloneqq k/n$.
  The distance of $\calQ$ is 
  \[d \coloneqq \min\left\{\min_{\mbf{c} \in \calC_X \setminus \calC_Z^\perp}\|\mbf{c}\|,\ \min_{\mbf{c} \in \calC_Z \setminus \calC_X^\perp}\|\mbf{c}\|\right\}.\]
  Sometimes, we differentiate between the $X$ and $Z$ distances of the code and define them as follows:
  \[d_X \coloneqq \min_{\mbf{c} \in \calC_X \setminus \calC_Z^\perp}\|\mbf{c}\|,\quad\quad d_Z \coloneqq \min_{\mbf{c} \in \calC_Z \setminus \calC_X^\perp}\|\mbf{c}\|.\]
  The locality $w$ of $\calQ$ is the maximum number of nonzero entries in any row or column of the parity-check matrices $\mbf{H}_X$ and $\mbf{H}_Z$ of $\calC_X$ and $\calC_Z$.
\end{definition}

For qLDPC codes, the locality $w$ is a constant that is independent of the code length $n$. This implies that each stabilizer generator checks at most a constant number of qubits and each qubit is checked by at most a constant number of stabilizer generators. 

\subsection{Chain Complexes}
\label{subsec:chain-complexes}
This section states the basic definitions in homological algebra.

\begin{definition}[Chain Complexes]\label{def:chain-complexes}
  A chain complex $\calA_\ast$ over a field $\F_q$ consists of a sequence of $\F_q$-vector spaces $\left(A_i\right)_{i \in \Z}$ and linear boundary maps $\left(\partial_i^\calA: A_i \to A_{i-1}\right)_{i \in \Z}$ satisfying $\partial_{i-1}^\calA \circ \partial_i^\calA = 0$ for all $i \in \Z$.
  When clear from context, we omit the superscript and subscript and write $\partial = \partial_i = \partial^\calA = \partial_i^\calA$. 
  Assuming that each $A_i$ has a fixed basis, then the locality $w^\calA$ of $\calA_\ast$ is the maximum number of nonzero entries in any row or column of any matrix $\partial_i$ in this fixed basis.
  If there exist bounds $\ell < m \in \Z$ such that for all $i < \ell$ and $i > m$ we have $A_i = 0$, then we may truncate the sequence and say that $\calA_\ast$ is the $(m-\ell + 1)$-term chain complex
  \[\calA_\ast = \left(A_m \xrightarrow[]{\partial_m} A_{m-1} \xrightarrow[]{\partial_{m-1}} \ldots \xrightarrow[]{\partial_{\ell + 1}} A_\ell\right).\]
  We furthermore define the following (standard) vector spaces for $i \in \Z$:
  \begin{align}
    \text{the space of } i\text{-cycles: } Z_i\left(\calA\right) &\coloneqq \ker\left(\partial_i\right) \subseteq A_i,\\
    \text{the space of } i\text{-boundaries: } B_i\left(\calA\right) &\coloneqq \mathrm{im}\left(\partial_{i+1}\right) \subseteq A_i,\\
    \text{the space of } i\text{-homology: } H_i\left(\calA\right) &\coloneqq Z_i\left(\calA\right)/B_i\left(\calA\right).
  \end{align}

  The cochain complex $\calA^\ast$ associated to $\calA_\ast$ has vector spaces $\left(A^i \coloneqq A_i\right)_{i \in \Z}$ and boundary maps given by the coboundary maps $\left(\delta_i = \partial_{i+1}^{\top} : A^i \to A^{i+1}\right)_{i \in \Z}$ obtained by transposing all the boundary maps of $\calA_\ast$. 
  Thus, the cochain complex is defined as such:
  \[\calA^\ast = \left(A^m \xleftarrow[]{\delta_{m-1}} A^{m - 1} \xleftarrow[]{\delta_{m-2}} \ldots \xleftarrow[]{\delta_{\ell}} A^\ell\right).\]
  We can analogously define the spaces of cohomology $H^i\left(\calA\right) = Z^i(\calA)/B^i(\calA)$, cocycles $Z^i\left(\calA\right) = \ker\left(\delta_i\right)$, and coboundaries $B^i\left(\calA\right) = \mathrm{im}\left(\delta_{i-1}\right)$.
  We typically refer to the chain complex instead of the cochain complex. We therefore often denote the chain complex by $\calA$ and omit the $\ast$ for notational convenience.
\end{definition}

\begin{definition}
  For a chain complex $\calA$, the $i$-systolic distance $d_i(\calA)$ and the $i$-cosystolic distance $d^{\,i}(\calA)$ are defined as
  \[d_i(\calA) = \min_{\mbf{c} \in Z_i(\calA)\setminus B_i(\calA)}\|\mbf{c}\|,\quad\quad d^{\,i}(\calA) = \min_{\mbf{c} \in Z^i(\calA)\setminus B^i(\calA)}\|\mbf{c}\|.\]
\end{definition}

It is well-known that classical linear codes can be described by 2-term chain complexes where the two vector spaces are the spaces of bits and checks respectively. These two vector spaces are connected by a linear boundary map that can be written as the check matrix $\mbf{H}$.
Quantum CSS codes can be described with a 3-term chain complex by associating the $X$ stabilizers, qubits, and $Z$ stabilizers with the three vector spaces in a 3-term chain complex. 
The condition ``the boundary of a boundary is trivial'' is compatible with the CSS orthogonality condition i.e.,  $\partial_{i-1} \circ \partial_i = 0$ and $\mbf{H}_X \mbf{H}_Z^\top = 0$.
To see how the Pauli logical operators fit in the chain complex picture, we associate the $X$ stabilizers, qubits, and $Z$ stabilizers to the $\F_2$-vector spaces $A_0, A_1$, and $A_2$, then the $X$ and $Z$ logical operator representatives are given by the basis elements of the $1$-cohomology space and $1$-homology space respectively. The $X$ and $Z$ distances of the quantum code are then $d^{\,1}(\calA)$ and $d_1(\calA)$.

\subsection{Homological Product}
\label{subsec:homological-product}
This section states the basic notions of the homological product.

\begin{definition}[Homological Product]\label{def:homological-product}
  For chain complexes $\calA$ and $\calB$, the homological product $\calD = \calA \otimes \calB$ is the chain complex given by the vector spaces
  \[D_i \coloneqq \bigoplus_{j \in \Z} A_j \otimes B_{i-j}\]
  and the boundary maps
  \begin{align}\partial_i^\calD \coloneqq \bigoplus_{j \in \Z}\left(\partial_j^\calA \otimes I + (-1)^j I \otimes \partial_{i-j}^\calB\right).\label{eq:boundary-map}\end{align}
  Because we typically work with $\F_2$, the $(-1)^j$ signs can be ignored.
\end{definition}

We now state a well-known result about the homological product.

\begin{proposition}[K\"unneth Formula]\label{prop:kunneth-formula}
  Let $\calA$ and $\calB$ be chain complexes over a field $\F_q$, each with a finite number of nonzero terms.
  Then for every $i \in \Z$,
  \begin{align}H_i(\calA \otimes \calB) \cong \bigoplus_{j \in \Z} H_j(\calA) \otimes H_{i -j}(\calB).\label{eq:kunneth-formula}\end{align}
  Furthermore, for $\mbf{a} \in Z_j(\calA)$ and $\mbf{b} \in Z_{i-j}(\calB)$, the isomorphism above maps
  \[\mbf{a} \otimes \mbf{b} + B_i(\calA \otimes \calB) \mapsto \left(\mbf{a}+B_j(\calA)\right)\otimes \left(\mbf{b}+B_{i-j}(\calB)\right).\]
\end{proposition}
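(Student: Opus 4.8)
The plan is to reduce $\calA$ and $\calB$ to a canonical form that is available because we are working over a field, and then to verify the claim on the resulting elementary pieces. Since every subspace of an $\F_q$-vector space admits a complement, for each degree $j$ I would fix a subspace $\mathcal{H}_j^{\calA}\subseteq Z_j(\calA)$ complementary to $B_j(\calA)$, and a subspace $W_j^{\calA}\subseteq A_j$ complementary to $Z_j(\calA)$. Then the quotient map restricts to an isomorphism $\mathcal{H}_j^{\calA}\xrightarrow{\ \sim\ }H_j(\calA)$, while $\partial_j^{\calA}$ annihilates $Z_j(\calA)$ and maps $W_j^{\calA}$ isomorphically onto $B_{j-1}(\calA)$. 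This yields a direct-sum decomposition of chain complexes
\[\calA\;\cong\;\Bigl(\bigoplus_{j}\mathcal{H}_j^{\calA}[j]\Bigr)\;\oplus\;\Bigl(\bigoplus_{j}\mathcal{E}_j^{\calA}\Bigr),\]
where $\mathcal{H}_j^{\calA}[j]$ is $\mathcal{H}_j^{\calA}$ concentrated in degree $j$ with zero differential and $\mathcal{E}_j^{\calA}=\bigl(W_j^{\calA}\xrightarrow{\ \sim\ }B_{j-1}(\calA)\bigr)$ is the two-term complex concentrated in degrees $j$ and $j-1$; by the finiteness hypothesis on the number of nonzero terms these are finite direct sums. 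Each $\mathcal{E}_j^{\calA}$ is \emph{contractible}: the identity on $\mathcal{E}_j^{\calA}$ is null-homotopic via the homotopy given by the inverse isomorphism $B_{j-1}(\calA)\to W_j^{\calA}$ (a direct check that $\partial h+h\partial=\mathrm{id}$). The same construction, applied to $\calB$, produces $\mathcal{H}_k^{\calB}$, $W_k^{\calB}$, and contractible complexes $\mathcal{E}_k^{\calB}$.

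Next I would use that the tensor product of chain complexes commutes with finite direct sums, so $\calA\otimes\calB$ becomes a finite direct sum of tensor products of the elementary pieces. These are of three kinds: (i) $\mathcal{H}_j^{\calA}[j]\otimes\mathcal{H}_k^{\calB}[k]$, which is $\mathcal{H}_j^{\calA}\otimes\mathcal{H}_k^{\calB}$ concentrated in degree $j+k$ with zero differential; (ii) a homology piece tensored with some $\mathcal{E}$; and (iii) $\mathcal{E}_j^{\calA}\otimes\mathcal{E}_k^{\calB}$. In kinds (ii) and (iii) at least one tensor factor is contractible, and tensoring a contractible complex with any bounded complex over $\F_q$ again gives a contractible, hence acyclic, complex, because a null-homotopy tensors through (the $(-1)^j$ signs in \eqref{eq:boundary-map} play no role here, and are absent over $\F_2$). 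Therefore only the kind-(i) summands contribute to homology, so, collecting degrees and using $\mathcal{H}_j^{\calA}\cong H_j(\calA)$ and $\mathcal{H}_k^{\calB}\cong H_k(\calB)$,
\[H_i(\calA\otimes\calB)\;\cong\;\bigoplus_{j+k=i}\mathcal{H}_j^{\calA}\otimes\mathcal{H}_k^{\calB}\;\cong\;\bigoplus_{j\in\Z}H_j(\calA)\otimes H_{i-j}(\calB).\]

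Finally I would identify this abstract isomorphism with the explicit map in the statement. For $\mbf{a}\in Z_j(\calA)$ and $\mbf{b}\in Z_{i-j}(\calB)$, the boundary-map formula \eqref{eq:boundary-map} gives $\partial^{\calA\otimes\calB}(\mbf{a}\otimes\mbf{b})=\partial^{\calA}\mbf{a}\otimes\mbf{b}\pm\mbf{a}\otimes\partial^{\calB}\mbf{b}=0$, and for $\mbf{a}'\in A_{j+1}$ one has $\partial^{\calA\otimes\calB}(\mbf{a}'\otimes\mbf{b})=\partial^{\calA}\mbf{a}'\otimes\mbf{b}$; hence the bilinear assignment $([\mbf{a}],[\mbf{b}])\mapsto[\mbf{a}\otimes\mbf{b}]$ descends to a well-defined linear map $\bigoplus_{j}H_j(\calA)\otimes H_{i-j}(\calB)\to H_i(\calA\otimes\calB)$. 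It is inverse to the projection constructed above: decomposing $\mbf{a}=\mbf{a}_B+\mbf{a}_{\mathcal{H}}$ and $\mbf{b}=\mbf{b}_B+\mbf{b}_{\mathcal{H}}$ according to $Z_j(\calA)=B_j(\calA)\oplus\mathcal{H}_j^{\calA}$ and $Z_{i-j}(\calB)=B_{i-j}(\calB)\oplus\mathcal{H}_{i-j}^{\calB}$ and expanding $\mbf{a}\otimes\mbf{b}$, every term other than $\mbf{a}_{\mathcal{H}}\otimes\mbf{b}_{\mathcal{H}}$ lies in a kind-(ii) or kind-(iii) summand and thus vanishes in homology, leaving precisely the class corresponding to $(\mbf{a}+B_j(\calA))\otimes(\mbf{b}+B_{i-j}(\calB))$, as claimed. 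The only ingredient that needs genuine care is the lemma that tensoring with a bounded acyclic (indeed contractible) two-term complex preserves acyclicity, together with the bookkeeping of the degree shifts when the $\mathcal{E}_j$'s straddle adjacent degrees; everything else is routine, and since this is a standard fact one could alternatively just cite a homological-algebra reference.
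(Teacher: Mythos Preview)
Your argument is correct and is one of the standard proofs of the K\"unneth formula over a field: split each complex into its homology plus contractible two-term pieces, use that the tensor product distributes over finite direct sums, and observe that tensoring with a contractible complex yields an acyclic complex. The identification of the explicit map via the decomposition $Z_j=B_j\oplus\mathcal{H}_j$ is also fine.

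The paper, however, does not prove this proposition at all: it is stated as a ``well-known result about the homological product'' in the preliminaries and used as a black box thereafter. So there is nothing to compare against---your proof simply supplies what the paper omits. Your closing remark that one could ``alternatively just cite a homological-algebra reference'' is in fact exactly what the paper does.
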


\subsection{Hypergraph Product Codes and Higher-Dimensional Variants}
\label{subsec:hypergraph-product-codes}
The hypergraph product (HGP) code is one of the first constructions of quantum CSS codes that has constant rate and large distance~\cite{tillich2014quantum}. 
The HGP construction is essentially the same as taking the homological product of two 2-term chain complexes that correspond to two classical linear codes.
Let parity-check matrices $\mbf{H}_1 \in \F_2^{m_1 \times n_1}$ and $\mbf{H}_2 \in \F_2^{m_2 \times n_2}$ correspond to two classical linear codes with parameters $[n_1,k_1,d_1]$ and $[n_2, k_2, d_2]$. By taking a homological product of their chain complexes, we obtain the product complex

The hypergraph product (HGP) construction is among the earliest families of CSS quantum codes achieving constant rate together with large distance~\cite{tillich2014quantum}.
Conceptually, HGP is realized by taking the homological product of two 2-term chain complexes associated with a pair of classical linear codes.
Let $\mbf{H}_1 \in \F_2^{m_1 \times n_1}$ and $\mbf{H}_2 \in \F_2^{m_2 \times n_2}$ correspond to two classical linear codes with parameters $[n_1,k_1,d_1]$ and $[n_2, k_2, d_2]$. Forming the homological product of their chain complexes yields the product complex \yifan{Need to rewrite to match Kunneth formula convention}

\begin{equation}\label{eq:HGP complex}
    \begin{quantikz}[wire types={n,n},nodes={inner sep=2pt}, mystyle]
        & {S_1 \otimes S_2} &&& {S_Z} \\
        {B_1 \otimes S_2} && {S_1 \otimes B_2} && Q \\
        \setwiretype{n} & {B_1 \otimes B_2} &&& {S_X}
        \arrow["{\mbf{H}^{}_1 \otimes \ident}", from=2-1, to=1-2]
        \arrow["{\ident \otimes \mbf{H}^{}_2}"', from=2-3, to=1-2]
        \arrow["{\mbf{H}^{}_Z}", from=2-5, to=1-5]
        \arrow["{\ident \otimes \mbf{H}^{}_2}", from=3-2, to=2-1]
        \arrow["{\mbf{H}^{}_1 \otimes \ident}"', from=3-2, to=2-3]
        \arrow["{\mbf{H}^\top_X}", from=3-5, to=2-5]
    \end{quantikz}
\end{equation}
where $S_X, Q, S_Z$ denote, respectively, the binary spaces of $X$-syndromes, qubit errors, and $Z$-syndromes. If we split the columns so that the left block corresponds to $B_1 \otimes S_2$ and the right block to $S_1 \otimes B_2$, then from \eqref{eq:HGP complex} the CSS parity-check matrices of the HGP code are
\begin{subequations}\label{eq:HGP H_X, H_Z}
\begin{align}
    \mbf{H}_X &= \big( \ident_{n_1} \otimes \mbf{H}^{\top}_2 \;\big|\;  \mbf{H}^\top_1 \otimes \ident_{n_2} \big)  \label{eq:HGP H_X} \\
    \mbf{H}_Z &= \big(\mbf{H}^{}_1 \otimes \ident_{m_2} \;\big|\; \ident_{m_1} \otimes \mbf{H}^{}_2 \big) \, . \label{eq:HGP H_Z}
\end{align}
\end{subequations}
Commutation of X and Z stabilizers is immediate, since $\mbf{H}^{}_X \mbf{H}^\top_Z = 2\mbf{H}^{\top}_1 \otimes \mbf{H}^\top_2 = 0$. The quantum code $\mathrm{HGP}(\mbf{H}_1,\mbf{H}_2)$ thus obtained has parameters
\begin{subequations}\label{eq:HGP code parameters}
\begin{align}
    n &= n_1m_2 + m_1n_2  \\
    k &= k^{}_1k^{\top}_2 + k^\top_1k_2  \\
    d_Z &= \min\big(d^{\top}_1,d^\top_2\big)  \\
    d_X &= \min\big(d^{}_1, d_2\big) \, ,
\end{align}
\end{subequations}
with $d_i^\top$ the distance of the dual code defined by $\mbf{H}_i$ for $i \in \{1,2\}$. As usual, we have the quantum code distance $d = \min(d_X,d_Z)$.

Given $\mbf{H}_1,\mbf{H}_2$, choose generator matrices $\mbf{G}_1,\mbf{G}_2$ (so that $\mbf{H}\mbf{G}^{\top} = 0$) whose rows span the respective classical code spaces. In the HGP setting—and mirroring the data-qubit decomposition—the logical qubits split into $k_1k_2^\top$ ``left'' and $k_1^\top k_2$ ``right'' logicals. A convenient canonical basis for the left logical $\overline{X}$ and $\overline{Z}$ operators is \cite{quintavalle2022reshape, quintavalle2023partitioning}
\begin{subequations}\label{eq:HGP G_X, G_Z}
\begin{align}
    \mathbf{G}_{Z,\text{L}} &= \big( \{ \mbf{e}_i \}  \otimes \mbf{G}_2^\top \;|\; \mbf{0} \big)  \label{eq:HGP G_Z}  \\
    \mathbf{G}_{X,\text{L}} &= \big( \mbf{G}_1 \otimes \{ \mbf{e}_j \}  \;|\; \mbf{0} \big)  \label{eq:HGP G_X}  \, ,
\end{align}
\end{subequations}
where $\mbf{e_i} \notin \im(\mbf{H}_1^{\top})$ and $\mbf{e_j} \notin \im(\mbf{H}_2)$ are unit vectors. This guarantees that distinct logical operators are not stabilizer-equivalent. By placing $\mbf{G}$ in standard form, $\mbf{G} = (\ident_k \;|\; A)$, one can arrange that each pair of logical $\overline{X}/\overline{Z}$ either has disjoint support or intersects in exactly one qubit indexed by $(i,j)$ in \eqref{eq:HGP G_X, G_Z}; this single-site intersection can be used to label the left logicals. This refinement is optional: for any row of \eqref{eq:HGP G_X}, there exists a linear combination of rows of \eqref{eq:HGP G_Z} that anticommutes with it and commutes with the rest. In this canonical choice, left logical $\overline{Z}$ operators occupy single columns of the left block and left logical $\overline{X}$ operators occupy single rows of the left block; the same statements hold for the right sector on the right block. Moreover, left logicals have minimum weight $\min(d_1^\top,d_2)$, while right logicals have minimum weight $\min(d_1,d_2^\top)$. If the right logicals are treated as gauge and ignored, the distance reduces to $d = \min(d_1^\top,d_2)$. A precise justification follows from examining the stabilizer structure of the left/right blocks.

\subsection{Higher-Dimensional Hypergraph Product Codes}
\label{subsec:higher-dimensional-hypergraph-product-codes}
In this section, we discuss the construction of higher-dimensional hypergraph product codes as well as review some of their well-known facts. Let $D$ denote the dimension of the HGP code we wish to construct. In particular, we consider the case where $D \geq 3$ and let $\set{\mbf{H}^{(a)} \in \F_2^{m_a \times n_a}}_{a \in [D]}$ be parity-check matrices of $D$ classical linear codes. For each $a \in [D]$, let $\calA^{(a)}$ be the 2-term chain complex associated with $\mbf{H}^{(a)}$ and vector spaces $A^{(a)}_0 = B_a$ and $A^{(a)}_1 = S_a$ correspond to the bit and check vector spaces of the classical code respectively. Then, consider the $D$-fold tensor product complex:
\[\calA = \calA^{(1)} \otimes \calA^{(2)} \otimes \cdots \otimes \calA^{(D)}\]
with degree-$s$ cell spaces given by
\[A_s = \bigoplus_{\substack{J \subseteq [D] \\ |J| = s}}\left(\bigotimes_{i \in J} S_i \otimes \bigotimes_{j \in [D]\setminus J} B_j\right)\]
for $s \in \{0,1,\ldots,D\}$. This is the chain complex that corresponds to the $D$-dimensional hypergraph product code constructed from the $D$ classical codes. To obtain a valid CSS code, we can choose any interior degree $s \in \{1,2,\ldots,D-1\}$ and associate the vector space $A_s$ with the qubits, $A_{s-1}$ with the $X$ checks, and $A_{s+1}$ with the $Z$ checks. The resulting CSS parity-check matrices are given by
\begin{subequations}\label{eq:D-HGP H_X, H_Z}
\begin{align}
    \mbf{H}_X &= \partial_s\,:\,A_S \to A_{s-1}   \\
    \mbf{H}_Z &= \partial_{s+1}\,:\,A_{s+1} \to A_s \, .
\end{align}
\end{subequations}
where the boundary operators can be derived from Equation~\eqref{eq:boundary-map}. 
With a consistent choice of degree labeling, we see that $s = 1$ reproduces the usual $D = 2$ formulas above for the HGP code. We will adopt this convention throughout so that the 2D case is recovered by substitution and the 3D case is given by the $D = 3$ instance of the same construction.

  The logical operators of the $D$-dimensional HGP code can be derived from the K\"unneth formula in Proposition~\ref{prop:kunneth-formula}. In particular, the $X$ and $Z$ logical operators correspond to the basis elements of the $s$-cohomology space $H^s(\calA)$ and $s$-homology space $H_s(\calA)$ respectively which can be derived from Equation~\eqref{eq:kunneth-formula}. 
  \begin{align}
    H_s(\calA) \cong \bigoplus_{t_1 + \ldots + t_D = s} \bigotimes_{a \in [D]} H_{t_a}(\calA^{(a)}) \\
    H^s(\calA) \cong \bigoplus_{t_1 + \ldots + t_D = s} \bigotimes_{a \in [D]} H^{t_a}(\calA^{(a)})
  \end{align}

  The $X$ and $Z$ distances of the code are then given by $d^{\,s}(\calA)$ and $d_s(\calA)$, i.e., the $s$-cosystolic and $s$-systolic distances of the chain complex $\calA$ respectively.

Because we mainly work with 2D and 3D HGP codes in our paper, we provide some concrete details regarding the construction of a 3D HGP code. For an input CSS code $\llbracket n_Q, k_Q, (d_X, d_Z) \rrbracket$ with parity-checks $\mbf{H}_X, \mbf{H}_Z \in \F_2^{m_Q \times n_Q}$ and a classical code $[n_c, k_c, d_c]$ with parity-check $\mbf{H} \in \F_2^{m_c \times n_c}$, we can form the Kronecker (tensor) product of their underlying chain complexes and associate the result with a homological product code \cite{freedman2013quantum, bravyi2014homological, zeng2019higher, campbell2019theory}. This construction generalizes the usual hypergraph product to higher ``dimensions''/folds in the algebraic—not strictly topological—sense (though several higher-dimensional topological codes fit this framework).
As an illustration, the 3D surface code can be viewed either as the homological product of the 2D surface code with a 1D repetition code, or equivalently as a 3-fold hypergraph product of classical repetition codes, i.e., a 3D HGP code. Mirroring the 3D surface code's freedom over which Pauli type becomes a membrane, the homological product admits two choices. Without loss of generality, we take the version that extends the quantum code's logical $Z$-type; the $X$ case follows by reversing the product-complex orientation. The tensor-product complex is
\begin{equation}\label{eq:3D HGP complex}
\begin{quantikz}[wire types={n,n},nodes={inner sep=2pt}, mystyle]
      & {(S_Z,A_1)} &&& {\tilde{R}_{Z}} \\
      {(S_Z,A_0)} && {(Q,A_1)} && {\tilde{S}_{Z}} \\
      \setwiretype{n}{(Q,A_0)} && {(S_X,A_1)} && \tilde{Q} \\
      \setwiretype{n}& {(S_X, A_0)} &&& {\tilde{S}_{X}}
      \arrow["{\ident \otimes \mbf{H}}", from=2-1, to=1-2]
      \arrow["{\mbf{H}_Z \otimes \ident}"', from=2-3, to=1-2]
      \arrow["{\tilde{\mbf{M}}_{Z}}", from=2-5, to=1-5]
      \arrow["{\mbf{H}_Z \otimes \ident}", from=3-1, to=2-1]
      \arrow["{\ident \otimes \mbf{H}}", from=3-1, to=2-3]
      \arrow["{\mbf{H}_X^\top \otimes \ident}"', from=3-3, to=2-3]
      \arrow["{\tilde{\mbf{H}}_{Z}}", from=3-5, to=2-5]
      \arrow["{\mbf{H}^\top_X \otimes \ident}", from=4-2, to=3-1]
      \arrow["{\ident \otimes \mbf{H}}", from=4-2, to=3-3]
      \arrow["{\tilde{\mbf{H}}^\top_{X}}", from=4-5, to=3-5]
  \end{quantikz}
\end{equation}
where $A_0$ and $A_1$ are the bit and check vector spaces of the classical code respectively. The new CSS parity-check matrices are given by
\begin{subequations}\renewcommand*{\arraystretch}{1.3}
\label{eq:3D HGP H_X,H_Z}
\begin{align}
    \tilde{\mbf{H}}_X &= \big(\, \mbf{H}_X \otimes \ident \;\big|\; \ident \otimes \mbf{H}^\top \,\big)  \label{eq:3D HGP H_X}  \\
    \tilde{\mbf{H}}_Z &= \left(\begin{array}{c|c}
        \mbf{H}_Z \otimes \ident \;\; & \mathbf{0}  \\
        \ident \otimes \mbf{H} & \; \mbf{H}^\top_X \otimes \ident
    \end{array}\right) \, .   \label{eq:3D HGP H_Z}
\end{align}
\end{subequations}
Geometrically, the Tanner graph looks like the Euclidean graph product of the inputs, with nodes relabeled as in \eqref{eq:3D HGP complex}. When the input parity-checks are full rank, homology yields the parameters \cite{zeng2019higher}:
\begin{subequations}\label{eqn:homological_classical_code_params}
\begin{align} 
    \tilde{n} &= n_{\rm Q}n_{\rm c} + m_Zm_{\rm c}  \\
    \tilde{k} &\geq k_{\rm Q}k_{\rm c} \label{eqn:kunneth_example_1}  \\
    \tilde{d}_X &= d_X d_{\rm c} \\
    \tilde{d}_Z &= d_Z \, .
\end{align}
\end{subequations}

In \eqref{eqn:kunneth_example_1}, we used the K\"unneth formula (Proposition~\ref{prop:kunneth-formula}) to compute $\tilde{k}$. Let us write the (co)homologies of the quantum and classical codes as $H[{\calQ}]$ and $H[{\calC}]$ respectively. Adopting the convention that logical $\overline{X}$ operators of the CSS code correspond to $H^1[{\calQ}]$, while the classical code's codespace is $H^0[{\calC}]$ (bits/checks placed on the 0/1-cells of its 2-term complex), K\"unneth gives
\[\tilde{H}^1 \cong H^1[{\calQ}] \otimes H^0[{\calC}].\]
Since the rank of $\tilde{H}^1$ counts the logical qubits of the product code, summing the products of ranks of the constituent groups yields \eqref{eqn:kunneth_example_1}.

From \eqref{eq:3D HGP H_X,H_Z} we also see, as in HGP, a left/right decomposition of qubits. With full-rank inputs, only the left sector carries logical qubits. For these, a canonical basis of logical $\overline{X}$ and $\overline{Z}$ operators is 
\begin{subequations}\label{eq:3D HGP G_X,G_Z}
\begin{align}
    \tilde{\mbf{G}}_{Z,{\rm L}} &= \left(\, \mbf{G}_Z \otimes \{ \mathbf{e}_i \} \;\big|\; \mathbf{0} \,\right) \label{eq:3D HGP G_Z} \\
    \tilde{\mbf{G}}_{X,{\rm L}} &= \left(\, \mbf{G}_X \otimes \mbf{G} \;\big|\; \mathbf{0} \,\right)  \, , \label{eq:3D HGP G_X}
\end{align}
\end{subequations}
with $\mbf{G}_X$ and $\mbf{G}_Z$ a symplectic logical basis for the input quantum code, $\{ \mathbf{e}_i \} \notin \mathrm{rs}(h)$, and $\mbf{G}$ is the generator matrix for the classical code $\calC$. If the input CSS code is itself HGP, we may take $\mbf{G}_Z$ and $\mbf{G}_X$ from the canonical basis \eqref{eq:HGP G_X, G_Z} and index logicals by triples $(i,j,k)$.

\subsection{Confinement and Soundness}
\label{subsec:confinement-soundness}
Confinement and soundness are two important properties of quantum codes that are used to analyze the performance of quantum error-correcting codes.
In this section, we state the definitions of confinement and soundness for quantum codes.

\begin{definition}[Confinement~{\cite{quintavalle2021single}}] \label{def:confinement}
  Let $t > 0$ be an integer and $f\,:\, \Z \to \R$ be an increasing function. For a parity-check matrix $\mbf{H} \in \F_2^{m \times n}$, we say it is $(t, f)$-confining if for any Pauli errors $\mbf{x}$ with reduced weight $\|\mbf{x}\| \leq t$, its syndrome $\sigma(\mbf{x}) = \mbf{H}\mbf{x}$ obeys
  \[f(\|\sigma(\mbf{x})\|) \geq \|\mbf{x}\|.\]
\end{definition}

Confinement can be understood as a condition that lower bounds the weight of the error syndrome given the weight of the error. In other words, if the weight of the error is large, then the weight of the syndrome cannot be too small. Given Definition~\ref{def:confinement}, we can also define what good linear confinement entails.

\begin{definition}[Good Linear Confinement~{\cite{quintavalle2021single}}] \label{def:good-linear-confinement}
  Consider an infinite check family $\calM_n$. We say the family has good linear confinement if each $\calM_n$ is $(t, f)$-confined where:
  \begin{enumerate}
    \item $t$ grows with $n$ such that $t \geq an^b$ for some positive constants $a, b$. That is $t \in \Omega(n^b)$ with $b > 0$;
    \item and $f(x)$ is some linear function that is monotonically increasing with $x$ and independent of $n$.
  \end{enumerate}
\end{definition}

In Ref.~\cite{quintavalle2021single}, the authors show that good linear confinement is sufficient to ensure that the quantum code is single-shot against both adversarial and stochastic noise. Prior to their work, Leverrier, Tillich and Zemor showed that hypergraph products of classical expander codes built from lossless expanders have good linear confinement~\cite{leverrier2015quantum}.

\begin{lemma}[Linear Confinement of Expander HGP Codes~{\cite{leverrier2015quantum}}] \label{lem:linear-confinement-expander-codes}
  There exist families of hypergraph product codes, called quantum expander codes, such that an arbitrary error $\mbf{x}$ with reduced weight $\|\mbf{x}\| < t = \Theta(\sqrt{n})$ has a syndrome with weight bounded from below as $\|\sigma(\mbf{x})\| \geq \|\mbf{x}\|/3$.
\end{lemma}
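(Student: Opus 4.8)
The plan is to derive the quantum confinement bound entirely from the near-lossless vertex expansion of the classical Tanner graph underlying the quantum expander code. Recall that a quantum expander code is $\mathrm{HGP}(\mbf{H},\mbf{H})$ for $\mbf{H}$ the parity-check matrix of a classical Sipser--Spielman code on a $\Delta$-regular bipartite graph $\calG = (V,C,E)$ which, together with its transpose, is chosen so that every vertex set $S$ on either side with $|S| \le \gamma n_c$ satisfies $|N(S)| \ge (1-\epsilon)\Delta|S|$ for a small absolute constant $\epsilon$, where $n_c = \Theta(\sqrt{n})$ is the classical block length. A one-line double-counting step upgrades this to a \emph{unique-neighbor} bound: any such $S$ has at least $2|N(S)| - \Delta|S| \ge (1-2\epsilon)\Delta|S|$ checks adjacent to exactly one vertex of $S$. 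This is the only structural input; I would set the confinement radius to $t = \gamma n_c = \Theta(\sqrt{n})$ precisely so that every error of reduced weight below $t$ stays in the expansion regime line by line.

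First I would fix coordinates for the hypergraph product and take $\mbf{x}$ to be a $Z$-type error equal to the minimum-weight representative of its coset. Decompose $\operatorname{supp}(\mbf{x})$ as a disjoint union $E_{\mathrm A} \sqcup E_{\mathrm B}$ over the two qubit sectors, recording $\mbf{x}$ as a pair of $\F_2$-matrices. The salient feature of $\mathrm{HGP}(\mbf{H},\mbf{H})$ is that the $X$-syndrome $\sigma(\mbf{x})$ at a check splits as the sum of two \emph{transverse} classical syndromes --- one obtained by applying $\mbf{H}$ (or $\mbf{H}^{\top}$) to a row-slice of the first sector, the other by applying it to a column-slice of the second --- and that a single $X$-stabilizer generator couples exactly one such row-slice to exactly one such column-slice. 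Minimality of $\mbf{x}$ enters here in two ways: it bounds the overlap of $\mbf{x}$ with every (constant-weight) stabilizer generator, which rules out the trivial counterexamples such as $\mbf{x}$ being a stabilizer; and, more importantly, it forbids the two sectors from being organized so that they systematically cancel each other's contributions to $\sigma(\mbf{x})$.

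Next I would run the expansion estimate. Since $\|\mbf{x}\| < t = \gamma n_c$, every row- and column-slice of both matrices lies in the expansion regime. Passing to the heavier sector --- without loss of generality $|E_{\mathrm A}| \ge \|\mbf{x}\|/2$ --- and applying the unique-neighbor bound to the slice-support of $E_{\mathrm A}$ in the factor where $\mbf{H}$ acts yields at least $(1-2\epsilon)\Delta|E_{\mathrm A}|$ checks that meet $E_{\mathrm A}$ in exactly one qubit; each such check carries an odd $A$-contribution and is violated unless its $B$-contribution is also odd. The syndrome weight is therefore at least this count minus the number of those checks reachable from $E_{\mathrm B}$, and the latter is controlled by expansion applied to the (necessarily small) slices of $E_{\mathrm B}$ in the complementary factor together with $|E_{\mathrm B}| \le \|\mbf{x}\|/2$ and the minimality constraint that prevents $E_{\mathrm B}$ from lining up with the unique neighbors of $E_{\mathrm A}$. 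Tracking the constants --- the $(1-2\epsilon)$ loss from passing to unique neighbors, the factor from the sector split, and the expansion-controlled overlap between sectors --- gives $\|\sigma(\mbf{x})\| \ge \|\mbf{x}\|/3$ once $\epsilon$ is taken small enough; the $X$-error case follows by exchanging the roles of $\mbf{H}_X$ and $\mbf{H}_Z$.

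The main obstacle is this last cancellation bound: a crude estimate of how many unique-neighbor checks of $E_{\mathrm A}$ are killed by $E_{\mathrm B}$ is too lossy to survive a balanced split against the $(1-2\epsilon)$ factor, so the argument must genuinely exploit that $E_{\mathrm B}$ occupies only few lines transverse to those carrying $E_{\mathrm A}$ (a consequence of $\|\mbf{x}\| < t$) and that reducedness precludes a conspiratorial alignment of the two sectors. Balancing expansion against minimality on both tensor factors at once is the heart of the proof; the remaining ingredients --- the unique-neighbor lemma, the slice decomposition, and the bookkeeping --- are routine.
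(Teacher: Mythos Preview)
The paper does not prove this lemma; it is stated in the preliminaries with a citation to \cite{leverrier2015quantum} and used as a black box. There is therefore no ``paper's own proof'' to compare your sketch against.

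That said, your outline tracks the original Leverrier--Tillich--Z\'emor argument at the level of strategy: unique-neighbor expansion applied slice by slice, a two-sector decomposition of the error, and minimality of the representative to rule out systematic cancellation. Where your proposal is honest but incomplete is exactly where you flag it --- the cross-sector cancellation bound. Your subtraction scheme, ``unique neighbors of $E_{\mathrm A}$ minus checks reachable from $E_{\mathrm B}$,'' does not survive the balanced case $|E_{\mathrm A}| \approx |E_{\mathrm B}|$ on its own, since $\Delta|E_{\mathrm B}|$ can swamp $(1-2\epsilon)\Delta|E_{\mathrm A}|$, and the appeal to ``$E_{\mathrm B}$ occupies few transverse lines'' together with minimality is asserted rather than argued. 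In the original proof this step is not handled by a global subtraction at all: one instead works check by check, observing that a syndrome bit at position $(v,c)$ is the $\F_2$-sum of a row-contribution and a column-contribution, and the unique-neighbor count is organized so that violated checks coming from the two sectors are charged to \emph{disjoint} index sets rather than competing for the same checks. Minimality enters not as a vague alignment-prevention principle but concretely: if too many unique-neighbor checks of one sector were neutralized by the other, one could exhibit an explicit $X$-stabilizer generator whose addition strictly decreases $\|\mbf{x}\|$, contradicting reducedness. Filling in that counting carefully is the entire proof; the rest of your sketch is correct scaffolding.
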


In other words, expander HGP codes have good linear confinement and are therefore single-shot against both adversarial and stochastic noise. Thus, it is possible to just perform a single round of (noisy) syndrome measurement for error correction. This is in contrast to the case of surface codes where we need to perform $O(d)$ rounds of syndrome measurements to perform error correction. In addition, the authors of Ref.~\cite{quintavalle2021single} also prove the following for all 3D homological product codes. Letting $X$-confinement be the confinement with respect to Pauli $Z$ errors, we have the following theorem.

\begin{theorem}[Good $X$ Confinement of 3D Homological Product Codes~{\cite[Theorem 3]{quintavalle2021single}}] \label{thm:good-X-linear-confinement-3D-homological-product-codes}
  All 3D homological product codes have $(t, f)$ X-confinement, where $t = d_Z$ and $f(x) = x^3/4$ or better.  
\end{theorem}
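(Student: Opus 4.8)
The plan is to work with the chain-complex description of the 3D homological product code and reduce the estimate to the structure of its 2D quantum factor. Using \eqref{eq:3D HGP H_X}, write the qubit space as $\tilde{Q} = (Q \otimes A_0) \oplus (S_X \otimes A_1)$ and the $X$-check space as $\tilde{S}_X = S_X \otimes A_0$, so that $\tilde{\mbf{H}}_X = \big(\mbf{H}_X \otimes \ident \,\big|\, \ident \otimes \mbf{H}^\top\big)$, where $A_0, A_1$ are the bit and check spaces of the classical code. Decompose a $Z$-error as $\mbf{z} = (\mbf{a},\mbf{b})$ with $\mbf{a} \in Q \otimes A_0$ (the ``in-layer'' qubits: one copy of the 2D code for each classical bit $j$) and $\mbf{b} \in S_X \otimes A_1$ (the ``vertical'' qubits). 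If $\mbf{a}_j \in Q$ and $\mbf{b}_i \in S_X$ denote the slices over basis vectors $j \in A_0$, $i \in A_1$, the $X$-syndrome restricts in layer $j$ to $\mbf{s}_j = \mbf{H}_X \mbf{a}_j + \sum_{i\,:\,\mbf{H}_{ij}=1} \mbf{b}_i$, with $\|\mbf{s}\| = \sum_j \|\mbf{s}_j\|$. We may take $\mbf{z}$ to be a minimum-weight representative of its coset modulo $Z$-stabilizers with $\|\mbf{z}\| \le t = d_Z = \tilde{d}_Z$; since a nontrivial logical has zero syndrome and weight exactly $\tilde{d}_Z$, the claim is non-vacuous only for genuine errors, so we also assume $\mbf{z}$ is not stabilizer-equivalent to a logical, and in particular no closed in-layer slice $\mbf{a}_j$ is a nontrivial logical of the 2D factor. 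The goal is then $\|\mbf{z}\| \le \|\mbf{s}\|^3/4$.

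First I would clean the vertical sector. The 3D $Z$-stabilizers act on the vertical block through $\mbf{H}_X^\top \otimes \ident$ and link it to the in-layer block only across $\mbf{H}$-adjacent coordinates via the $\ident \otimes \mbf{H}$ block of $\tilde{\mbf{H}}_Z$ in \eqref{eq:3D HGP H_Z}; using this and the fact that $\mbf{H}$ is full rank (so every uncancelled vertical excitation is visible in $\mbf{s}$), one reduces $\mbf{z}$ to a stabilizer-equivalent error whose vertical part is supported on $O(\|\mbf{s}\|)$ classical checks. Then slice over the layers of $A_0$: a layer that contributes nothing to $\mbf{s}$ and is not met by the now-small vertical part carries a closed in-layer 2D cycle which, being reduced and of weight below $d_Z$, is a 2D stabilizer and is removed; after this only $O(\|\mbf{s}\|)$ ``active'' layers survive, and within each the in-layer error is a 2D chain whose 2D boundary together with the vertical spill into that layer has weight $O(\|\mbf{s}\|)$, hence (by the distance, and for expander inputs the confinement, of the 2D factor) 2D reduced weight $O(\|\mbf{s}\|)$. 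Multiplying the number of active layers, the per-layer weight, and the $O(\|\mbf{s}\|)$-bounded spread of the vertical part across adjacent layers yields $\|\mbf{z}\| = O(\|\mbf{s}\|^3)$; tracking the $O(1)$ losses through the bounded row/column weights of the LDPC checks and of $\mbf{H}$ pins the constant to $1/4$ (or better). Heuristically this is roughly one lost power per product factor --- a single classical parity check is linearly confining, and each further homological product raises the degree by one --- so a threefold product is cubically confined.

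The hard part is the middle coupling, i.e.\ the $\ident \otimes \mbf{H}$ and $\mbf{H}_X^\top \otimes \ident$ blocks of $\tilde{\mbf{H}}_Z$, which entangle the in-layer and vertical sectors across neighbouring classical-code coordinates: a 2D stabilizer used to clear one layer necessarily disturbs the vertical qubits, and hence the $O(1)$ neighbouring layers, so the cleaning is not layer-by-layer and one must run the reductions of $\mbf{a}$ and $\mbf{b}$ in tandem while controlling how much weight ``spills'' into the neighbours of each cleared layer. Bounding that spill is the crux, and it is exactly this spill that degrades the linear confinement available within a layer into the cubic bound for the 3D code. For 2D inputs built from the spectral/lossless expanders of Lemma~\ref{lem:linear-confinement-expander-codes} one has a stronger linear 2D confinement to feed into the per-layer step, but the bare 2D distance $d_Z$ already suffices to push the argument through, which is why the cubic $(t,f)$ X-confinement with $t = d_Z$, $f(x) = x^3/4$ holds for all 3D homological product codes.
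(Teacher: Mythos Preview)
The paper does not contain a proof of this theorem; it is cited as Theorem~3 of \cite{quintavalle2021single}. So there is no in-paper argument to compare against, and your sketch has to stand on its own.

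Your slice-and-fill strategy is structurally the right idea, but the per-layer step has a genuine gap. You pass from ``the 2D boundary in layer $j$ has weight $O(\|\mbf{s}\|)$'' to ``the 2D reduced weight of $\mbf{a}_j$ is $O(\|\mbf{s}\|)$'' by invoking confinement of the 2D factor. For expander-based inputs that is fine, but the theorem is about \emph{all} 3D homological products, and generic 2D HGP codes are not linearly confining. Your fallback---``the bare 2D distance $d_Z$ already suffices''---does not rescue the step: knowing $\|\mbf{a}_j\| < d_Z^{\,2D}$ and that $\mbf{a}_j$ has a small 2D $X$-syndrome does not, by itself, bound the minimum weight in its stabilizer coset. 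What actually makes the argument go through in the cited work is that \emph{every} 2-fold product has quadratic soundness $f_{2D}(x)=x^2/4$ in the relevant direction (an algebraic fact requiring no expansion), and one further product raises this to cubic; this is the same mechanism behind Lemma~\ref{lem:4D HGP soundness}, and confinement then follows from Lemma~\ref{lem:soundness implies confinement}. Your sketch never invokes the 2D quadratic soundness, and without it the per-layer bound does not hold in general.

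A second, smaller gap: you assert that the vertical block $\mbf{b}$ can be cleaned to support on $O(\|\mbf{s}\|)$ classical checks ``using the fact that $\mbf{H}$ is full rank''. Full rank only gives injectivity of $\ident\otimes\mbf{H}^\top$ on $\mbf{b}$; it does not prevent a large-support $\mbf{b}$ from cancelling against the $\mbf{H}_X\mbf{a}_j$ terms to produce a small $\mbf{s}$. The $Z$-stabilizers you would use to reduce $\mbf{b}$ act through $\mbf{H}_X^\top\otimes\ident$ (see \eqref{eq:3D HGP H_Z}), which couples the vertical cleaning back into the in-layer chains---precisely the ``spill'' you flag as the crux but never bound. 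Finally, the constant $1/4$ is asserted rather than derived; in the soundness route it falls out of the quadratic base case and the inductive step, not from LDPC bookkeeping.
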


Now, we can define soundness for quantum codes.

\begin{definition}[Soundness~{\cite{campbell2019theory}}]\label{def:soundness}
  Let $t$ be an integer and $f\,:\,\Z\to \R$ be some function called the soundness function with $f(0) = 0$. Given some set of Pauli checks $M$, we say it is $(t, f)$-sound if for all Pauli errors $\mbf{x}$ with $|\sigma(\mbf{x})| = x < t$, it follows that there exists an $\mbf{e}'$ with $\sigma(\mbf{e}') = \sigma(\mbf{x})$ such that $|\mbf{e}'| \leq f(x)$.
\end{definition}

Soundness can be understood as a condition that upper bounds the weight of the error given the weight of the syndrome. In other words, if the weight of the syndrome is small, then there exists an error that matches the syndrome and has small weight. We can also define what good soundness entails.

\begin{definition}[Good Soundness~{\cite{campbell2019theory}}] \label{def:good-soundness}
  Consider an infinite check family $\calM_n$. We say the family has good soundness if each $\calM_n$ is $(t, f)$-sound where:
  \begin{enumerate}
    \item $t$ grows with $n$ such that $t \geq an^b$ for some positive constants $a, b$. That is $t \in \Omega(n^b)$ with $b > 0$;
    \item and $f(x)$ is some polynomial function that is monotonically increasing with $x$ and independent of $n$.
  \end{enumerate}
\end{definition}

While redundant checks are not strictly necessary for good soundness of the quantum code, it should not be too hard to see that increasing the number of redundant checks will typically improve the soundness of the quantum code as it raises $t$ and makes the function $f$ ``smaller''.
We can also use Definitions~\ref{def:confinement} and \ref{def:soundness} to conclude that good soundness implies good confinement for qLDPC codes.

\begin{lemma}[Good Soundness Implies Good Confinement~{\cite[Lemma 2]{quintavalle2021single}}]
\label{lem:soundness implies confinement}
  Consider a qLDPC code that is $(t, f)$-sound with increasing $f$. If its qubit degree is at most $\omega$, then it has $(t/\omega, f)$-confinement.
\end{lemma}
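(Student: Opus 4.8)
The plan is to treat the qubit degree $\omega$ as nothing more than an exchange rate between the weight of an error and the weight of its syndrome, and then invoke $(t,f)$-soundness as a black box. First I would fix a Pauli error $\mbf{x}$ whose reduced weight satisfies $\|\mbf{x}\| \leq t/\omega$, and pick a minimum-weight representative $\hat{\mbf{x}}$ in its reduction class, so that $|\hat{\mbf{x}}| = \|\mbf{x}\|$ while $\sigma(\hat{\mbf{x}}) = \sigma(\mbf{x})$. Because each qubit is contained in at most $\omega$ checks, every nonzero coordinate of $\hat{\mbf{x}}$ can flip at most $\omega$ bits of the syndrome, so
\[\|\sigma(\mbf{x})\| \;=\; \|\mbf{H}\hat{\mbf{x}}\| \;\leq\; \omega\,|\hat{\mbf{x}}| \;=\; \omega\,\|\mbf{x}\| \;\leq\; t.\]
This is the only place the LDPC locality enters: it certifies that the syndrome weight $x \coloneqq \|\sigma(\mbf{x})\|$ lies in the window $x \leq t$ on which soundness has content (the borderline $x = t$ is harmless and can be absorbed by reading the confinement bound for reduced weights strictly below $t/\omega$, or by the monotonicity of $f$).

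Next I would apply $(t,f)$-soundness to $\sigma(\mbf{x})$: it returns an error $\mbf{e}'$ with $\sigma(\mbf{e}') = \sigma(\mbf{x})$ and $|\mbf{e}'| \leq f(x) = f(\|\sigma(\mbf{x})\|)$. Since $\mbf{e}'$ and $\mbf{x}$ carry the same syndrome, $\mbf{e}'$ belongs to the reduction class of $\mbf{x}$, and therefore the reduced weight obeys $\|\mbf{x}\| \leq |\mbf{e}'| \leq f(\|\sigma(\mbf{x})\|)$. As $f$ is increasing, this is precisely the $(t/\omega, f)$-confinement inequality $f(\|\sigma(\mbf{x})\|) \geq \|\mbf{x}\|$, and we are done.

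The only point requiring a moment's thought is what ``reduction class'' means. If reduced weight is computed modulo the full kernel of $\mbf{H}$ --- equivalently, over all errors with a prescribed syndrome --- then the last inequality is immediate as written, and there is genuinely no obstacle: the lemma is a one-way translation between the soundness and confinement definitions with $\omega$ supplied by the check--qubit incidence bound. If instead reduced weight is taken only modulo the stabilizer group, one must additionally argue that $\mbf{e}' + \hat{\mbf{x}} \in \ker\mbf{H}$ is a stabilizer and not a nontrivial logical; this follows since its weight is at most $|\mbf{e}'| + |\hat{\mbf{x}}| \leq f(t) + t/\omega$, which stays below the code distance in the single-shot regime of interest, so it cannot represent a logical operator. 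I do not anticipate this caveat causing trouble; the substance of the proof is the two-line weight bookkeeping above.
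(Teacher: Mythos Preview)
The paper does not supply its own proof of this lemma; it is quoted directly from \cite{quintavalle2021single} without argument. Your proof is correct and is essentially the standard two-step argument one would expect: bound the syndrome weight by $\omega$ times the reduced error weight to enter the soundness window, then use the witness $\mbf{e}'$ returned by soundness as an upper bound on the reduced weight. Your discussion of the two possible meanings of ``reduced weight'' is a nice bit of care, though in the context of this paper (and the cited reference) the relevant notion is minimization over errors with the same syndrome, so the first branch of your case analysis suffices.
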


Next, we state the following lemma from Ref.~\cite{campbell2019theory}:
\begin{lemma}[Soundness of 4D Homological Product Codes~{\cite[Restatement of Lemma 6]{campbell2019theory}}]
  \label{lem:4D HGP soundness}
  Let 
  \[\calA = \left(A_{0} \xrightarrow[]{\delta_{0}} A_1 \xrightarrow[]{\delta_1} A_2\right)\]
  be a chain complex such that $\delta_1^\top$ is $(t, f)$-sound and $\delta_{0}$ is $(t, f)$-sound with $f(x) = x^2/4$. 
  Applying the homological product with two 1-term chain complexes of two classical codes, we obtain a new length-5 chain complex 
  \[\breve{\calA} = \left(\breve{A}_{-1} \xrightarrow[]{\breve{\delta}_{-1}} \breve{A}_{0} \xrightarrow[]{\breve{\delta}_{0}} \breve{A}_{1} \xrightarrow[]{\breve{\delta}_{1}} \breve{A}_{2} \xrightarrow[]{\breve{\delta}_{2}} \breve{A}_{3}\right)\]
  where the map $\breve{\delta}_1$ is $(t, g)$-sound and $\breve{\delta}_{0}^\top$ is $(t, g)$-sound with $g(x) = x^3/4$.
\end{lemma}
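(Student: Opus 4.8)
The plan is to recognize the length-five complex $\breve{\calA}$ as the iterated homological product $\breve{\calA} = \calA \otimes \mathcal{R}^{(1)} \otimes \mathcal{R}^{(2)}$, where $\mathcal{R}^{(1)} = (C^{(1)}_1 \xrightarrow{\mbf{h}_1} C^{(1)}_0)$ and $\mathcal{R}^{(2)} = (C^{(2)}_1 \xrightarrow{\mbf{h}_2} C^{(2)}_0)$ are the two-term complexes of the two classical codes (degree-shifted so the product is indexed from $-1$ to $3$). By \Cref{def:homological-product}, each space $\breve{A}_d$ decomposes as $\bigoplus A_i \otimes C^{(1)}_j \otimes C^{(2)}_k$, the sum ranging over the triples with $i+j+k$ fixed by $d$ (up to the offset), and each map of $\breve{\calA}$ acquires a block form whose blocks are $\partial^{\calA}_i \otimes \ident$, $\ident \otimes \mbf{h}_1 \otimes \ident$, and $\ident \otimes \ident \otimes \mbf{h}_2$. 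The two claims — soundness of $\breve{\delta}_1\colon \breve{A}_1 \to \breve{A}_2$ and of $\breve{\delta}_0^{\top}\colon \breve{A}_1 \to \breve{A}_0$, i.e. soundness of the $Z$- and $X$-check maps of the resulting ``4D'' code — are interchanged by reversing the whole product complex and transposing all maps; since reversing $\calA$ turns the hypothesis pair $(\delta_1^{\top}\text{ sound},\,\delta_0\text{ sound})$ back into a pair of exactly the same form (same $f(x) = x^2/4$), and reversing each $\mathcal{R}^{(a)}$ just produces the complex of the dual code, it suffices to prove one of the two soundness statements and invoke this symmetry for the other.

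To prove soundness of $\breve{\delta}_1$, I would take a syndrome $\mbf{s} = \breve{\delta}_1 \mbf{x}$ with $\|\mbf{s}\| = x < t$ and expand it along the tensor factors. The construction of a small preimage has two stages. First, for each fixed index in the $\mathcal{R}^{(1)}$- and $\mathcal{R}^{(2)}$-directions, the corresponding ``slice'' of $\mbf{s}$ is a syndrome living in the factor $\calA$ — and because the block form of $\breve{\delta}_1$ carries both a $\delta_1$-type block and a $\delta_0$-type block tensored with identities, these slice-syndromes lie in the images of $\delta_1$ and of $\delta_0$, so \emph{both} soundness hypotheses are used; applying $(t, x^2/4)$-soundness fills each active slice with weight at most $x^2/4$. (This step is exactly a discrete isoperimetric inequality for the 2D code $\calA$, which is where the constant $1/4$ comes from.) Second, these independent slice-fillings must be stitched into a globally consistent preimage by correcting the $\mbf{h}_1$- and $\mbf{h}_2$-cross-terms along the seams between consecutive slices: one of the two classical directions is used to connect the slices, at a multiplicative cost of at most the number of active slices, which is bounded by $\|\mbf{s}\| = x$, while the other direction needs only local, constant-overhead corrections once the first stitching is done. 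Assembling everything yields $\mbf{e}'$ with $\breve{\delta}_1 \mbf{e}' = \mbf{s}$ and $\|\mbf{e}'\| \le x \cdot (x^2/4) = x^3/4 = g(x)$, after absorbing the lower-order seam terms.

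The complementary statement, soundness of $\breve{\delta}_0^{\top}$, then follows by running the identical argument on the reversed product complex: reversing $\calA$ swaps the roles of $\delta_0$ and $\delta_1^{\top}$ (preserving the hypothesis), reversing the classical factors swaps $\breve{\delta}_1$ with $\breve{\delta}_0^{\top}$, and the roles of the two classical directions in the stitching step simply interchange. Crucially, the exponent lands at $3$ and not $4$ even though two classical factors are present, because only one of the two classical directions contributes the multiplicative $\|\mbf{s}\|$ factor in the stitching while the other is already ``consistent'' after the first stitch; and the constant $1/4$ is not compounded because the per-slice isoperimetric bound is invoked only once. This is essentially the structure of the argument in \cite{campbell2019theory}.

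The step I expect to be the main obstacle is the weight accounting inside the stitching stage, in three related parts: (i) arranging the slice decomposition so that every per-slice sub-syndrome genuinely lies in the image of a factor map of $\calA$, so that $(t, x^2/4)$-soundness is actually applicable slice by slice; (ii) establishing that the number of \emph{active} slices is bounded by the global syndrome weight $\|\mbf{s}\|$ rather than by something larger, which is what pins the exponent at $3$; and (iii) making precise the asymmetry between the two classical directions — showing that the $\mbf{h}$-governed corrections needed to glue the per-slice fillings together can be chosen so their total weight stays within the $x^3/4$ budget, with only one direction incurring the $\|\mbf{s}\|$-fold blowup. These are exactly the places where the tensor-product structure and the LDPC locality of the classical codes must be used carefully.
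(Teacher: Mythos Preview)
The paper does not prove this lemma: it is stated as a restatement of \cite[Lemma~6]{campbell2019theory} and is simply cited, with no argument given in the present paper. So there is nothing on the paper's side to compare against directly.

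Your sketch follows the high-level shape of Campbell's original argument --- slice along the classical tensor factors, fill each slice using the $x^2/4$ soundness hypotheses on $\calA$, then stitch the fillings across the classical directions --- and the duality reduction between the $\breve{\delta}_1$ and $\breve{\delta}_0^{\top}$ statements is correct. The place where your accounting is loose is exactly where you flag it. In the stitching step you assert a multiplicative factor of ``number of active slices $\le x$'' applied to a per-slice bound of $x^2/4$, but the per-slice syndrome weights $s_i$ actually \emph{partition} the total weight $x$, so slice-wise soundness gives total filling weight $\sum_i s_i^2/4 \le x^2/4$, not $x \cdot x^2/4$; and the stitching does not literally multiply this by $x$. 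In Campbell's proof the extra power of $x$ arises from the fact that making the independently chosen slice-fillings globally consistent forces you to propagate a correction along one of the classical directions, and the length of that propagation is what is bounded by the syndrome weight --- this is where the cone structure of the homological product is actually used, and it is not simply a count of active slices. Your items (ii) and (iii) correctly identify this as the crux, but as written the arithmetic does not close without making that propagation step precise.
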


\subsection{Coboundary expansion}
\label{subsec:coboundary expansion}

In this section, we review the notions of (co)boundary expansion for (co)chain complexes and their relations to confinement and soundness in the previous subsection.

\begin{definition}[Small-set coboundary expansion \cite{hopkins2022explicit}]
    Let $A_0 \xrightarrow{\delta_0} A_1 \xrightarrow{\delta_1} A_2$ be a cochain complex. We say that $\delta_1$ is $(t,\alpha)$-small-set coboundary expanding if for all $a_1 \in A_1$ with $\abs{a_1} \leq t$, we have
    \begin{align}
        \abs{\delta_1 a_1} \geq \alpha \cdot \min_{a_0\in A_0} \abs{a_1 + \delta_0 a_0} \, .
    \end{align}
\end{definition}
Intuitively, coboundary expansion asserts that if $a_1$ is far from any coboundary with respect to $\delta_0$, then its coboundary with respect $\delta_0$ must be large. The small-set label just means that this property only holds below a certain radius $t$.

There is also a closely related notion of local (co)minimality and locally (co)minimal distance.

\begin{definition}[Locally co-minimal \cite{dinur2023good}]
\label{def:locally minimal}
    Let $A_0 \xrightarrow{\delta_0} A_1 \xrightarrow{\delta_1} A_2$ be a cochain complex. We say that $a_1 \in A_1$ is locally co-minimal if $\forall a_0 \in A_0$ with $\abs{a_0}=1$, we have
    \begin{align}
        \abs{a_1} \leq \abs{a_1 + \delta_0 a_0} \, .
    \end{align}
\end{definition}

Observe that if $a_1$ is not locally co-minimal, then we locally reduce its weight by applying a single coboundary.

\begin{definition}[Locally co-minimal distance \cite{dinur2024expansion}]
\label{def:locally minimal distance}
    Let $A_0 \xrightarrow{\delta_0} A_1 \xrightarrow{\delta_1} A_2$ be a cochain complex. The locally co-minimal distance (at level 1) is given by
    \begin{align}
        d_{\rm coLM} \coloneqq \min\{ \abs{a_1} : a_1 \in \ker\delta_1 \,,\, a_1 \text{ is locally co-minimal} \} \, .
    \end{align}
\end{definition}

\subsection{Coboundary Invariance and CCZ gates}
\label{subsec:coboundary-invariance-ccz}

In this section, we discuss the coboundary invariance property of certain quantum codes and its implications for the implementation of CCZ gates. Broadly speaking, coboundary invariance guarantees that a set of physical CCZ gates map codewords to codewords since the logical action remains the same even as the code state differs by a coboundary i.e. $X$ stabilizers. We mostly restate the definitions and ideas from Ref.~\cite{lin2024transversal} and refer the reader to the original paper for more details.

Consider the case where we have three quantum codes $\calQ_1, \calQ_2, \calQ_3$ and we want to implement a CCZ gate between the logical qubits of these codes. Suppose we have a trilinear map $f\,:\, \F_2^{n_1 \times n_2 \times n_3} \to \F_2$ where $n_i$ is the length of the code $\calQ_i$. Recall that a function is trilinear if the function is linear when we fix any of the other two inputs. The trilinear map $f$ is constructed such that it captures the desired action of the following unitary operator $CCZ^f$:
\[CCZ^f \ket{\mbf{x}_1, \mbf{x}_2, \mbf{x}_3} = (-1)^{f(\mbf{x}_1, \mbf{x}_2, \mbf{x}_3)}\ket{\mbf{x}_1, \mbf{x}_2, \mbf{x}_3}\]
where $\mbf{x}_i \in \F_2^{n_i}$ is any computational basis state of the code $\calQ_i$ and $\ket{\mbf{x}_1, \mbf{x}_2, \mbf{x}_3}$ is the state on $n_1 + n_2 + n_3$ qubits. Another important way to look at these states is to think of them as the qubits which we apply Pauli $X$ operators on.

In order for $CCZ^f$ to be a valid logical operator, it has to map codewords to codewords. Before we discuss the actual condition in greater detail, we note that any codeword in $\calQ_1 \otimes \calQ_2 \otimes \calQ_3$ can be described as follows:
\[\ket{[\mbf{\zeta}_1], [\mbf{\zeta}_2], [\mbf{\zeta}_3]} \coloneqq \sum_{\mbf{\beta}_1 \in B^1(\calA_1)} \sum_{\mbf{\beta}_2 \in B^1(\calA_2)} \sum_{\mbf{\beta}_3 \in B^1(\calA_3)} \ket{\zeta_1 + \beta_1, \zeta_2 + \beta_2, \zeta_3 + \beta_3}\]
for $\zeta_i \in Z^1(\calA_i)$ and $B^1(\calA_i)$ being the space of $X$ stabilizers of the code $\calQ_i$ for $i \in \{1, 2, 3\}$. Note that $[\zeta_i]$ is the logical equivalence class of the element $\zeta_i$.

Because the action of $CCZ^f$ is essentially a change of phase when applied to any computational basis state, we can guarantee that $CCZ^f$ maps codewords to codewords if and only if 
\[CCZ^f \ket{[\zeta_1], [\zeta_2], [\zeta_3]} \propto \ket{[\zeta_1], [\zeta_2], [\zeta_3]}.\]
The aforementioned condition is the same as the following:
\[f(\zeta_1, \zeta_2, \zeta_3) = f(\zeta_1 + \beta_1, \zeta_2 + \beta_2, \zeta_3 + \beta_3)\]
for all $\zeta_i \in Z^1(\calA_i)$ and $\beta_i \in B^1(\calA_i)$. This intuitively means that the action of $CCZ^f$ is uniquely defined on the logical equivalence classes of the codewords.

We now define what it means for a triple of classical codes to satisfy what we term the \emph{multiplication property}:

\begin{definition}[Multiplication Property]
  \label{def:multiplication-property}
  Let $\calC_1, \calC_2, \calC_3 \subseteq \F_2^{\Delta}$ be classical linear codes of length $\Delta$. Let us denote the classical code $\calC_1 \odot \calC_2 \odot \calC_3$ as the code generated by the following:
  \[\calC_1 \odot \calC_2 \odot \calC_3 = \spn\set{\mbf{c}_1 \ast \mbf{c}_2 \ast \mbf{c}_3\,|\,\mbf{c}_1 \in \calC_1, \mbf{c}_2 \in \calC_2, \mbf{c}_3 \in \calC_3}\]
  where $\ast$ denotes the component-wise multiplication of vectors.

  We say that the triple $(\calC_1, \calC_2, \calC_3)$ satisfies the multiplication property if $\calC_1 \odot \calC_2 \odot \calC_3 \subseteq \calC' \subseteq \F_2^{\Delta}$ for the classical parity check code $\calC'$, which is the dual of the repetition code. \david{changed} In other words, we have 
  \[\sum_{i = 1}^\Delta (\mbf{c}_1)_i (\mbf{c}_2)_i (\mbf{c}_3)_i = 0.\] 
\end{definition}

In the recent work done by Lin, an explicit connection was made between the multiplication property and a quantum code's amenability to transversal logical CCZ gates~\cite{lin2024transversal}. While it was originally stated for sheaf codes, this connection can be made for higher-dimensional HGP codes as well since they are captured by the same sheaf formalism. Before we state the theorem from Ref.~\cite{lin2024transversal}, we first define the cup product operation on cochains:
\begin{definition}[Cup Product on Cochains]
  Let $A^k$ be the space of $k$-cochains on a cell complex. The cup product is a bilinear map
  \[
  \smile\,:\,A^k \times A^{\ell} \to A^{k+\ell}
  \]
  defined on the level of cochains by
  \[
  (\mbf{c}_1 \smile \mbf{c}_2)(\tau) = \mbf{c}_1(\sigma_1) \mbf{c}_2(\sigma_2)
  \]
  where $\sigma_1$ and $\sigma_2$ are the faces of $\tau$ such that $\sigma_1 \smile \sigma_2 = \tau$, and $\mbf{c}_1$ and $\mbf{c}_2$ are cochains in $A^k$ and $A^\ell$, respectively. The cup product is associative and satisfies the Leibniz rule with respect to the coboundary operator $\delta$, i.e.,
  \[
  \delta(\mbf{c}_1 \smile \mbf{c}_2) = (\delta \mbf{c}_1) \smile \mbf{c}_2 + (-1)^k \mbf{c}_1 \smile (\delta \mbf{c}_2),
  \]
  where $\mbf{c}_1 \in A^k$ and $\mbf{c}_2 \in A^\ell$.
\end{definition}

Now, we are ready to state the theorem from Ref.~\cite{lin2024transversal} that connects the multiplication property to the implementation of transversal CCZ gates on 3D HGP codes.

\begin{theorem}[Local Multiplication and CCZ Property~{\cite[Restatement of Theorem 6.8]{lin2024transversal}}]
  \label{thm:local-multiplication-ccz-property}
  Let $\calQ_1(\set{G_{i}}_{i = 1}^{3}, \set{\calC_{1,j}}_{j=1}^3)$, $\calQ_2(\set{G_{i}}_{i = 1}^{3}, \set{\calC_{2,j}}_{j=1}^3)$, and $\calQ_3(\set{G_{i}}_{i = 1}^{3}, \set{\calC_{3,j}}_{j=1}^3)$ be three 3D HGP codes where we have $G_{i}$ for $i = 1, 2, 3$ as the respective base spectral expander graphs and $\calC_{1,j}, \calC_{2,j}, \calC_{3,j}$ are the linear local codes for each of the spectral expanders used to construct the Sipser-Spielman codes used in the HGP construction for each of the three 3D HGP codes.
  Let us define a trilinear function
  \begin{align*}
  f\,:\,A^1 \times A^1 \times A^1 &\to \F_2\\
  f(\mbf{q}_1, \mbf{q}_2, \mbf{q}_3) &\mapsto \sum_{\tau \in A_3} ((\mbf{q}_1 \smile \mbf{q}_2)\smile \mbf{q}_3)(\tau)
  \end{align*}
  where the three $A^1$s are the 1-cochains of the cochain complexes associated with the three 3D HGP codes and $\tau \in A_3$ denotes the basis elements of the 3-chains of the chain complex constructed from the tensor product between the three spectral expanders $\{G_i\}_{i=1}^3$. Define the unitary operator $CCZ^f$ as the following:
  \[CCZ^f = \prod_{j_1, j_2, j_3 \in [n]} CCZ_{j_1, j_2, j_3}^{f(\mbf{e}_{j_1}, \mbf{e}_{j_2}, \mbf{e}_{j_3})}\]
  where $CCZ_{j_1, j_2, j_3}^{f(\mbf{e}_{j_1}, \mbf{e}_{j_2}, \mbf{e}_{j_3})}$ acts on the $j_1, j_2, j_3$-th physical qubits of $\calQ_1, \calQ_2, \calQ_3$ respectively if the trilinear map $f$ evaluates nontrivially on the corresponding basis elements $\mbf{e}_{j_1}, \mbf{e}_{j_2}, \mbf{e}_{j_3}$ of the 1-cochains. If $(\calC_{1, j} \odot \calC_{2, j} \odot \calC_{3, j})$ satisfies the multiplication property for every $j = 1, 2, 3$, then the $CCZ^f$ implements a logical $CCZ$ gate across the logical qubits of the three 3D HGP codes.
\end{theorem}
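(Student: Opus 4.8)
The plan is to follow the sheaf‑code framework of Lin~\cite{lin2024transversal}: a $3$D HGP code built from Sipser--Spielman codes is an instance of a sheaf code whose per‑factor local codes are exactly the $\calC_{k,j}$ in the statement, and the three codes $\calQ_1,\calQ_2,\calQ_3$ share the underlying cell structure precisely because they share the expander graphs $\{G_i\}_{i=1}^3$. Under this identification the assertion is the specialization of \cite[Theorem~6.8]{lin2024transversal}, and there are two things to verify: \textbf{(A)} the physical circuit $CCZ^f$ maps the joint code space $\calQ_1\otimes\calQ_2\otimes\calQ_3$ into itself, and \textbf{(B)} its induced logical action is a tensor product of logical $CCZ$ gates across the matching triples of logical qubits.

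For \textbf{(A)} I would use the coboundary‑invariance criterion recalled above: since $CCZ^f$ is diagonal in the $X$‑basis, it preserves the code space iff $f(\zeta_1+\beta_1,\zeta_2+\beta_2,\zeta_3+\beta_3)=f(\zeta_1,\zeta_2,\zeta_3)$ for all cocycles $\zeta_i\in Z^1(\calA_i)$ and coboundaries $\beta_i\in B^1(\calA_i)$. Expanding the left side by trilinearity of $f$ and using $B^1\subseteq Z^1$, this collapses to the single family of identities $f(\delta_0 a,\zeta_2,\zeta_3)=0$ for all $0$‑cochains $a$ of $\calA_1$ and all $\zeta_2\in Z^1(\calA_2)$, $\zeta_3\in Z^1(\calA_3)$, together with its two cyclic permutations. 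Applying the Leibniz rule for the cup product and $\delta\zeta_2=\delta\zeta_3=0$ gives $\delta(a\smile\zeta_2\smile\zeta_3)=(\delta_0 a)\smile\zeta_2\smile\zeta_3$, so that
\[
f(\delta_0 a,\zeta_2,\zeta_3)=\sum_{\tau\in A_3}\bigl(\delta(a\smile\zeta_2\smile\zeta_3)\bigr)(\tau)\,.
\]
The crucial input is that, expanded cell by cell over the triple product of the expanders $G_1,G_2,G_3$, the right‑hand side reorganizes into a sum, over the vertices of the three expanders, of vertex‑local contributions that are componentwise triple products $\sum_{e} u_e v_e w_e$ of local restrictions $u,v,w$ of the relevant (co)cycles lying in $\calC_{1,j},\calC_{2,j},\calC_{3,j}$ respectively; each such contribution vanishes by the multiplication property $\calC_{1,j}\odot\calC_{2,j}\odot\calC_{3,j}\subseteq\calC'$ of Definition~\ref{def:multiplication-property}. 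Hence $f(\delta_0 a,\zeta_2,\zeta_3)=0$, and likewise for the cyclic permutations, which establishes~\textbf{(A)}.

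For \textbf{(B)} I would evaluate $f$ on the canonical logical‑$X$ representatives of the three codes. Writing them in the K\"unneth basis of Proposition~\ref{prop:kunneth-formula} (equivalently the canonical basis of \eqref{eq:3D HGP G_X,G_Z}), the triple cup product $\overline{X}^{(1)}_{\alpha}\smile\overline{X}^{(2)}_{\beta}\smile\overline{X}^{(3)}_{\gamma}$ evaluated on the top chain factorizes across the three expander directions into a product of three intersection numbers between a $1$‑(co)homology class and the relevant $0$‑classes of the Sipser--Spielman factors; by the single‑site intersection structure of the canonical HGP logical basis this product equals $1$ exactly when $(\alpha,\beta,\gamma)$ is a matching triple of logical labels and $0$ otherwise. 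Therefore $CCZ^f=\bigotimes_{(\alpha,\beta,\gamma)\ \text{matching}}\overline{CCZ}_{\alpha,\beta,\gamma}$, i.e.\ a logical $CCZ$ gate (up to the fixed relabeling of logical qubits induced by the canonical basis), which is the claim.

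The main obstacle is the bookkeeping in~\textbf{(A)}: making precise how the global cochain $\delta(a\smile\zeta_2\smile\zeta_3)$, summed over all top cells, decomposes into the vertex‑local componentwise triple products to which the multiplication property applies. This requires unwinding the combinatorial diagonal / cup‑product structure on the triple product of the Tanner complexes—in particular checking that the ``fundamental'' top chain behaves correctly, or else passing to the appropriate relative complex as in \cite{lin2024transversal}—and it is exactly at this step that the multiplication property is both used and seen to be the natural sufficient condition. Part~\textbf{(B)} is comparatively routine given the explicit K\"unneth description of the logical operators, modulo the same care in identifying the cup product with a genuine intersection pairing on each factor.
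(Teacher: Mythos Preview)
The paper does not give its own proof of this theorem: it is stated purely as a restatement of \cite[Theorem~6.8]{lin2024transversal}, with no argument supplied. So there is no in-paper proof to compare against; your proposal is essentially a sketch of how the cited proof goes, and in broad strokes it is correct---the two ingredients are exactly coboundary invariance of $f$ (your \textbf{(A)}) and evaluation on the K\"unneth logical basis (your \textbf{(B)}).

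One point worth tightening in \textbf{(A)}: you first invoke the Leibniz rule to get $f(\delta_0 a,\zeta_2,\zeta_3)=\sum_\tau\delta(a\smile\zeta_2\smile\zeta_3)(\tau)$, and then separately invoke the multiplication property to make the local contributions vanish. In the sheaf/Tanner setting these are not really two independent steps. The three cochains $a,\zeta_2,\zeta_3$ live in \emph{different} sheaves over the common base $G_1\times G_2\times G_3$, and the cup product between them is only compatible with the respective coboundaries---i.e.\ the Leibniz identity you wrote only holds---because the multiplication property forces the local triple products to land in the parity-check code at each vertex. Equivalently, the ``fundamental class'' $\sum_\tau\tau$ is only a cycle for the product sheaf when $\calC_{1,j}\odot\calC_{2,j}\odot\calC_{3,j}$ has even-weight words. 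So the multiplication property is what licenses the Leibniz/Stokes step in the first place, rather than being an additional vanishing condition applied afterward. You flag this bookkeeping as the main obstacle, which is accurate; just be aware that the property enters at the level of making the cup product a genuine map of complexes, not merely at the final summation.
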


\section{Static Fault-Tolerant Single-Shot Universal Computation}
\label{sec:static}

In this section, we present a \emph{static} fault-tolerant single-shot universal computation scheme based on some of the recent constructions of (near-)asymptotically good QLDPC codes with transversal CCZ gates~\cite{golowich2025quantum, lin2024transversal, breuckmann2024cups,zhu2025topological,zhu2025transversal}.
We term it \emph{static} because it does not involve dimension jumping or code switching.
The codes constructed in the references listed above are all of the homological product variety. To be precise, all of them are 3D HGP codes that are obtained from the homological product of three classical Sipser-Spielman codes~\cite{sipser2002expander}.
To describe how we might be able to achieve fault-tolerant universal computation, we state the two important facts. The first fact pertains to the computational universality of a small gate set~\cite{shi2002both, aharonov2003simple, childs2017lecture}.

\begin{fact}[{\cite[Computationally Universal Gate Set]{shi2002both, aharonov2003simple,childs2017lecture}}]
  \label{fact:computationally_universal}
  The gate set that contains the Hadamard gate $H$ and the controlled-controlled-Z gate $CCZ$ is computationally universal.   
\end{fact}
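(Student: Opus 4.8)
The statement to prove is Fact~\ref{fact:computationally_universal}: the gate set $\{H, CCZ\}$ is computationally universal. I outline how I would establish this.

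\textbf{Overall strategy.} The plan is to show that $\{H, CCZ\}$ generates a dense subgroup of the unitary group (up to global phase) on $n$ qubits, which by the Solovay--Kitaev theorem suffices for computational (efficient) universality. Rather than argue density directly, I would build up to a known universal gate set in two stages: first recover the Toffoli gate and the full Clifford group, then invoke the classic result that Clifford $+$ Toffoli (equivalently, Clifford $+$ any non-Clifford gate) is universal. The key elementary observation is that $CCZ$ and $H$ together give Toffoli, since $\mathrm{CCX}_{abc} = H_c \cdot CCZ_{abc} \cdot H_c$ — conjugating the target of a $CCZ$ by Hadamards turns the controlled-$Z$ into a controlled-$X$.

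\textbf{Key steps in order.} First, I would derive the following gates by conjugation and composition: (i) Toffoli $=\mathrm{CCX}$ as above; (ii) $CZ$ and $CX$ by using an ancilla fixed to $\ket{1}$ as one of the controls of $CCZ$ (so $CCZ_{abc}$ with qubit $a$ in state $\ket 1$ acts as $CZ_{bc}$), then conjugating by $H$ to get $CNOT$; (iii) the $S = \mathrm{diag}(1,i)$ gate — this is the one genuinely non-obvious step. One clean route: $S$ can be obtained from Toffoli and Hadamard via the standard circuit identity that realizes a controlled-$S$ or directly uses the fact that $CCZ$ applied with a $\ket{+}$-ancilla control and measurement-free gadgetry yields phase gates; concretely, the doubly-controlled phase combined with Hadamards on a catalytic ancilla gives access to the $\pi/4$-rotation structure, and more simply $S$ is implementable since $\{H, CCZ\}$ is known to contain all of $\mathcal{C}_2$ (the second level of the Clifford hierarchy). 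Once we have $\{H, S, CNOT\}$ we have the full Clifford group. Second, with the Clifford group plus the Toffoli gate (a strictly non-Clifford gate) in hand, I would invoke the standard theorem (see e.g.~\cite{shi2002both, aharonov2003simple, childs2017lecture}) that Clifford $+$ Toffoli is an exactly universal gate set in the approximate sense — i.e.\ generates a dense subgroup of $\mathrm{PU}(2^n)$. Third, I would invoke Solovay--Kitaev to conclude efficient approximation of arbitrary unitaries, giving computational universality.

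\textbf{Main obstacle.} The crux is step (iii): extracting a single-qubit non-Clifford-hierarchy-completing phase like $S$ (or showing directly that the group generated is dense) purely from $H$ and $CCZ$ without ancillas-in-superposition is slightly delicate, because naively $CCZ$ with classical $\ket 0/\ket 1$ ancillas only gives diagonal gates with entries in $\{\pm 1\}$, which together with $H$ lands in the Clifford group — not enough. The resolution is that one must allow ancillas prepared in $H\ket 0 = \ket +$ (available since $H$ is in the set) and use a catalytic/Hadamard-gadget construction: e.g., the identity that a $CCZ$ acting on two data qubits and a $\ket{+}$ ancilla, after a Hadamard and re-use, effectively injects a controlled phase, and iterating yields the $T$-like rotation needed. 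I would cite the references~\cite{shi2002both, aharonov2003simple, childs2017lecture} for the precise gadget, since this is exactly the content of their universality proofs, and present the argument at the level of ``$\mathrm{CCX} = H\, CCZ\, H$, hence the claim follows from the known universality of Clifford${}+{}$Toffoli together with the fact that $H$ alone already generates enough of the single-qubit Clifford group to complete the picture.''
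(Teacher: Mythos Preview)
Your proposal has a genuine gap at precisely the point you flag as the ``main obstacle'': step (iii), extracting $S$ from $\{H, CCZ\}$, is not merely delicate --- it is impossible. Both $H$ and $CCZ$ are real orthogonal matrices, so every circuit built from them (including any ancilla preparation, since ancillas start in the real state $\ket{0}$ and evolve under real gates) implements a real orthogonal transformation on the full Hilbert space. The gate $S = \mathrm{diag}(1,i)$ has a genuinely complex entry and therefore cannot appear as the action on any subset of qubits, regardless of how you use $\ket{+}$ ancillas or catalytic gadgets. Your proposed resolution via ``$CCZ$ on a $\ket{+}$ ancilla injects a controlled phase'' does not escape the reals: $\ket{+}$ is itself a real vector, and the resulting phase kickback only ever produces signs $\pm 1$.

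The paper does not prove this fact --- it simply cites it --- but immediately after the statement it explicitly notes the point you are missing: ``because both $H$ and $CCZ$ have real matrix representations, they are clearly unable to approximate complex unitary matrices.'' The correct meaning of ``computationally universal'' here is that $\{H, CCZ\}$ (equivalently $\{H, \text{Toffoli}\}$) is dense in the \emph{orthogonal} group, and then one invokes the standard rebit-style encoding to simulate arbitrary complex unitaries by tracking real and imaginary parts separately on an enlarged real system. Your argument should therefore drop the attempt to reach the full Clifford group and instead (a) establish density in $O(2^n)$ --- this is the actual content of Shi's and Aharonov's results --- and (b) append the real-simulates-complex reduction.
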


Fact~\ref{fact:computationally_universal} implies that we can approximate any orthogonal matrix to arbitrary precision using the gate set $\{H, CCZ\}$ as long as we have access to ancilla qubits.
However, because both $H$ and $CCZ$ have real matrix representations, they are clearly unable to approximate complex unitary matrices.
Nonetheless, it is possible for us to simulate the effect of arbitrary complex unitary operations using $H$ and $CCZ$ gates by simulating the real and imaginary part of the operations separately.

Let us define the $C^{(\ell)}Z$ gate as the controlled-$Z$ gate that is controlled on $\ell$ qubits and applies $Z$ to the target qubit if and only if all the control qubits are in the state $\ket{1}$.
It is known that the $C^{(\ell)}Z$ gate is in the $\left(l+1\right)$-th level of the Clifford hierarchy.
In other words, the $CCZ$ gate is in the third level of the Clifford hierarchy and lies outside of the Clifford group that contains the $CZ$ gate.
We now state another well-known fact:

\begin{fact}[Descending the Diagonal Clifford Hierarchy]
  \label{fact:descending_clifford_hierarchy}
  Let $C^{(\ell)}Z$ be a gate that is controlled on the first $\ell$ qubits and applies $Z$ to the $\left(\ell + 1\right)$-th qubit if and only if all the control qubits are in the state $\ket{1}$.
  For any $i \in [\ell]$, we have
  \[X_i C^{(\ell)}Z_{1, \ldots, \ell + 1} X_i C^{(\ell)}Z_{1, \ldots, \ell + 1} = C^{(\ell-1)}Z_{1, \ldots, i-1, i+1,\ldots, \ell+1}.\]
  Note that the $C^{(\ell - 1)}Z$ gate is controlled on the same $\ell$ qubits except for the $i$-th qubit.
\end{fact}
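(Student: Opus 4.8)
The statement to prove is Fact~\ref{fact:descending_clifford_hierarchy}, which asserts that conjugating a $C^{(\ell)}Z$ gate by an $X$ on one of its control qubits and multiplying by the original gate yields a $C^{(\ell-1)}Z$ gate on the remaining qubits. This is a purely computational identity about diagonal gates in the computational basis.

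\textbf{Plan.} The cleanest approach is to work directly in the computational basis, since all the operators involved are diagonal there. I would write the action of $C^{(\ell)}Z$ on a basis state $\ket{b_1, \ldots, b_{\ell+1}}$ with $b_i \in \{0,1\}$ as multiplication by the phase $(-1)^{b_1 b_2 \cdots b_{\ell+1}}$, i.e.\ the phase is $-1$ exactly when all $\ell+1$ bits equal $1$. The gate $X_i C^{(\ell)}Z X_i$ then acts by the phase obtained from flipping the $i$-th bit, namely $(-1)^{b_1 \cdots b_{i-1}(1-b_i)b_{i+1}\cdots b_{\ell+1}}$. Multiplying the two operators, the total phase on $\ket{b_1,\ldots,b_{\ell+1}}$ is $(-1)^{P(1 + (1-b_i))}$ where $P = \prod_{j \neq i} b_j$ is the product of all bits except the $i$-th. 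Since $1 + (1 - b_i) = 2 - b_i \equiv b_i + (\text{even}) \pmod 2$... more carefully, $1 \cdot P + (1-b_i)P = P(2 - b_i)$, and working in the exponent modulo $2$ this is $-b_i P \equiv b_i P \pmod 2$. Wait — that gives back essentially $C^{(\ell)}Z$, so I need to be careful: the point is that the exponent $P + (1-b_i)P = P\big(2 - b_i\big)$, and $2P$ is even, so modulo $2$ this equals $(-b_i)P \equiv b_i P$... that is not right either. Let me reconsider: we want the phase to depend only on $\prod_{j\neq i} b_j$ with no $b_i$ dependence. The sum of exponents is $\big(\prod_j b_j\big) + \big(\prod_{j\neq i}b_j\big)(1-b_i) = b_i P + P - b_i P = P$, where $P = \prod_{j \neq i} b_j$. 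So the combined phase is exactly $(-1)^{P} = (-1)^{\prod_{j \neq i} b_j}$, which is precisely the action of $C^{(\ell-1)}Z$ on the qubits $\{1,\ldots,\ell+1\}\setminus\{i\}$.

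\textbf{Steps in order.} First I would fix notation: let $U = C^{(\ell)}Z_{1,\ldots,\ell+1}$ and note $U\ket{\mathbf b} = (-1)^{\prod_j b_j}\ket{\mathbf b}$ for all $\mathbf b \in \{0,1\}^{\ell+1}$. Second, compute $X_i U X_i \ket{\mathbf b}$: since $X_i$ flips bit $i$, we get $X_i U X_i \ket{\mathbf b} = (-1)^{(\prod_{j\neq i} b_j)(1-b_i)}\ket{\mathbf b}$. Third, compose: $(X_i U X_i)\,U\,\ket{\mathbf b} = (-1)^{(\prod_{j\neq i}b_j)(1-b_i) + \prod_j b_j}\ket{\mathbf b}$. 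Fourth, simplify the exponent over $\F_2$ using $\prod_j b_j = b_i \prod_{j\neq i} b_j$, so the exponent is $\big(\prod_{j\neq i}b_j\big)\big((1-b_i) + b_i\big) = \prod_{j\neq i} b_j$. Fifth, recognize that $\ket{\mathbf b}\mapsto (-1)^{\prod_{j\neq i}b_j}\ket{\mathbf b}$ is by definition the action of $C^{(\ell-1)}Z$ on the $\ell$ qubits indexed by $\{1,\ldots,\ell+1\}\setminus\{i\}$, acting as identity on qubit $i$. Since the two operators agree on every computational basis vector, they are equal.

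\textbf{Main obstacle.} Honestly there is no deep obstacle here — the only thing to be careful about is bookkeeping of the exponent arithmetic and being explicit that products of bits are taken over $\mathbb{Z}$ but the overall sign only depends on the exponent modulo $2$, so identities like $(1-b_i) + b_i = 1$ in $\Z$ (not needing a reduction) and $\prod_j b_j = b_i \prod_{j\neq i}b_j$ are used. One could alternatively give a one-line conceptual proof: $C^{(\ell)}Z$ is the diagonal operator $\operatorname{diag}$-valued gate whose phase is the AND of all control bits; conjugating by $X_i$ replaces $b_i$ by $\overline{b_i}$; and multiplying the two uses $(b_1\wedge\cdots\wedge b_{\ell+1}) \oplus (b_1\wedge\cdots\wedge \overline{b_i}\wedge\cdots\wedge b_{\ell+1}) = b_1\wedge\cdots\wedge b_{i-1}\wedge b_{i+1}\wedge\cdots\wedge b_{\ell+1}$, since conditioning on the complementary values of $b_i$ and XORing marginalizes out $b_i$. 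I would present the basis-state computation as the main proof and perhaps remark on this marginalization intuition.
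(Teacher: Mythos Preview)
Your proof is correct; the computational-basis calculation in your ``Steps in order'' section is clean and complete, and the exponent identity $(\prod_{j\neq i}b_j)(1-b_i) + b_i\prod_{j\neq i}b_j = \prod_{j\neq i}b_j$ is exactly right. The paper states this as a well-known fact without proof, so there is no approach to compare against; your direct basis-state verification is the standard argument and would serve well as the proof if one were to be included.
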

In other words, assuming we have a transversal implementation of $C^{(\ell)}Z$, we can descend the Clifford hierarchy by conjugating the gate with $X_i$ for any $i \in [\ell]$ to obtain a transversal implementation of $C^{(\ell-1)}Z$ as long as the target qubit remains the same. This fact was exploited in Ref.~\cite{he2025quantum} to build quantum codes with addressable CCZ gates.
Using Fact~\ref{fact:descending_clifford_hierarchy}, we obtain the following fact:
\begin{fact}[Transversal CCZ to CZ]
  \label{fact:transversal_CCZ_to_transversal_CZ}
  All the CSS codes that have transversal $CCZ$ gates constructed in Refs.~\cite{zhu2023non, wills2024constant, golowich2025quantum, lin2024transversal, nguyen2024good, breuckmann2024cups, scruby2024quantum,zhu2025topological, he2025quantum} possess a transversal implementation of the $CZ$ gate.
\end{fact}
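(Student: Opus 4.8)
The plan is to obtain Fact~\ref{fact:transversal_CCZ_to_transversal_CZ} directly from Fact~\ref{fact:descending_clifford_hierarchy}, by lifting that single-gate identity to the encoded level. For every construction in the cited list, the transversal $CCZ$ is implemented by a depth-one physical circuit $CCZ^f = \prod_{j_1, j_2, j_3} CCZ_{j_1, j_2, j_3}^{\,f(\mbf{e}_{j_1}, \mbf{e}_{j_2}, \mbf{e}_{j_3})}$ of physical $CCZ$ gates --- acting either on three copies of one code block (the inter-block case) or within a single block --- which implements a logical $\overline{CCZ}$ by the coboundary-invariance mechanism of Section~\ref{subsec:coboundary-invariance-ccz} (in some constructions $CCZ^f$ is additionally dressed by transversal $CZ$ and $Z$ layers, which are themselves transversal Clifford gates and hence harmless for what follows).

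First I would fix a transversal logical $\overline{X}$ supported on the ``third'' tensor factor $\calQ_3$: let $T \subseteq [n_3]$ be the support of a product of canonical logical $\overline{X}$ representatives for some (or all) logical qubits of $\calQ_3$, so that $X_T^{(3)} \coloneqq \prod_{j_3 \in T} X_{j_3}^{(3)}$ is a depth-one Pauli circuit implementing that logical $\overline{X}$. Next I would expand the composite circuit $X_T^{(3)}\, CCZ^f\, X_T^{(3)}\, CCZ^f$ factor by factor: conjugation distributes over the product of physical $CCZ$ gates, a factor $CCZ_{j_1, j_2, j_3}$ with $j_3 \notin T$ commutes past $X_T^{(3)}$, and for $j_3 \in T$ one has $X_{j_3} CCZ_{j_1, j_2, j_3} X_{j_3} = CZ_{j_1, j_2}\, CCZ_{j_1, j_2, j_3}$, which is a rearrangement of Fact~\ref{fact:descending_clifford_hierarchy} with $\ell = 2$ using that $CCZ_{j_1,j_2,j_3}$ is self-inverse. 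Since all of these diagonal gates commute and $(CCZ^f)^2 = \ident$, the two copies of $CCZ^f$ cancel and one is left with
\[
  X_T^{(3)}\, CCZ^f\, X_T^{(3)}\, CCZ^f \;=\; \prod_{j_1, j_2} CZ_{j_1, j_2}^{\,g(j_1, j_2)}, \qquad g(j_1, j_2) \;=\; \bigoplus_{j_3 \in T} f(\mbf{e}_{j_1}, \mbf{e}_{j_2}, \mbf{e}_{j_3}),
\]
a depth-one circuit of physical $CZ$ gates between $\calQ_1$ and $\calQ_2$, i.e.\ a transversal $CZ$. For the genuinely single-block constructions (e.g.\ \cite{he2025quantum}) one conjugates instead by an in-block logical $\overline{X}$ representative; the same computation then leaves a transversal product of physical $Z$ and $CZ$ gates.

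It then only remains to account for the encoded action. Because $CCZ^f$ and $X_T^{(3)}$ are each honest logical operators --- the first by the multiplication property (Theorem~\ref{thm:local-multiplication-ccz-property} and Section~\ref{subsec:coboundary-invariance-ccz}), the second because it is a product of logical Paulis --- the logical action of their composite is the composite of their logical actions, so Fact~\ref{fact:descending_clifford_hierarchy} read at the logical level shows the circuit above realizes $\overline{CZ}$ on the logical-qubit pairs of $(\calQ_1, \calQ_2)$ selected by $T$; taking $T$ to cover all of $\calQ_3$'s logical qubits yields a simultaneous transversal $\overline{CZ}$. The one step needing care is precisely this logical bookkeeping, and it is not really an obstacle: functoriality of the encoded action reduces the whole statement to the elementary identity of Fact~\ref{fact:descending_clifford_hierarchy}, which does all the work.
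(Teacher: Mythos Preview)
Your proposal is correct and follows exactly the approach the paper intends: the paper simply states that Fact~\ref{fact:transversal_CCZ_to_transversal_CZ} is obtained ``using Fact~\ref{fact:descending_clifford_hierarchy}'' with no further elaboration, and your write-up supplies the physical-level bookkeeping that makes this derivation explicit. One minor quibble: calling the resulting $CZ$ circuit ``depth-one'' is slightly optimistic, since a given qubit $j_1$ may participate in $CZ_{j_1,j_2}$ for several $j_2$'s; what you actually get is a constant-depth circuit (bounded by the locality of $CCZ^f$), which is all that ``transversal'' means here.
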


Now, we proceed to describe a quantum teleportation scheme that allows us to implement the Hadamard gate using the CZ gate, Pauli-X measurements, and an ancilla $\ket{+}$ state.

\begin{figure}[H]
  \centering
  \begin{quantikz}[column sep=0.8cm, row sep=0.8cm]
  \lstick{\(\ket{\psi}\)}
  & \ctrl{1}
  & \meter[label style={inner sep=1pt}]{X}  \\
  \lstick{\(\ket{+}\)}
  & \ctrl{-1}
  & \gate{X}\wire[u]{c}
  & \rstick{$H\ket{\psi}$}
  \end{quantikz}
\end{figure}

In the case where $\ket{\psi}$ and $\ket{+}$ are logical states of a 3D HGP code $\tilde{\calQ}$ that encodes many logical qubits, performing the logical Hadamard on any subset of logical qubits might appear to be challenging especially when we are constrained by the addressability of the transversal logical CZ gate and the tranversal logical X measurement gadget that might collapse all logical qubits in the code block. Fortunately, we can employ the Grid Pauli Product Measurement (GPPM) gadget introduced by Xu \emph{et al.}~\cite{xu2025fast} to perform the necessary logical teleportation and measurements to ensure that the circuit shown above only acts on the logical qubits that we want to perform a logical Hadamard on. 

By combining Fact~\ref{fact:transversal_CCZ_to_transversal_CZ} and the quantum teleportation scheme in the above figure, we obtain the ability to perform a computationally universal gate set that contains the Hadamard gate and the CCZ gate.
In addition, the transversal implementation of both components imply that we have a fault-tolerant implementation of these gates that take $O(1)$ logical cycles. 
The only caveat that remains is to show how we can fault-tolerantly prepare the ancilla logical $\ket{\overline{+}}$ state for the gate teleportation scheme.

Because our 3-dimensional homological product is not a 5-term chain complex but a 4-term chain complex, only one of our chain maps can be $(t, g)$-sound. We now reuse and adapt the notation from Lemma~\ref{lem:4D HGP soundness} that discusses the soundness of a 4D HGP code from the chain complex picture.
Let us assign $\tilde{A}_{0}, \tilde{A}_1, \tilde{A}_2$ to be the vector spaces that correspond to the $X$ stabilizer generators, qubits, and $Z$ stabilizer generators of the 3D HGP code respectively. In addition, let the code parameters of the 3D HGP code $\tilde{\calQ}$ be $\llbracket \tilde{n}, \tilde{k}, \tilde{d}_X, \tilde{d}_Z\rrbracket$.
Suppose we let $\tilde{A}_{3}$ be a non-trivial vector space in our 4-term chain complex that is spanned by the $Z$ metachecks that are associated to the $Z$ stabilizer generators by the map $\tilde{\delta}_1$.
The result from Lemma~\ref{lem:4D HGP soundness} stil holds for the $Z$ checks of our 3D HGP code.
Now, we can use these $Z$ metachecks to prepare the ancilla $\ket{+}$ state in our 3D HGP code $\tilde{\calQ}$ in a single-shot fashion as described in the following protocol:
\begin{enumerate}
  \item Initialize all physical qubits in the $\ket{+}^{\otimes \tilde{n}}$ state.
  \item Measure all $Z$ stabilizer generators to obtain a preliminary $Z$ syndrome $\vec{s}_0$.
  \item Digitally compute the $Z$ metacheck syndrome based on $\vec{s}_0$ to obtain a $Z$ metasyndrome $\vec{m}$.
  \item Use the metasyndrome $\vec{m}$ to obtain a repaired syndrome $\vec{s}$.
  \item Decode using the repaired syndrome $\vec{s}$ and $\tilde{\delta}_1$ and apply the correction to obtain $\ket{\overline{+}}^{\otimes \tilde{k}}$.
\end{enumerate}

With the above gadgets, we are now able to perform computationally universal fault-tolerant quantum computation on these 3D HGP codes with transversal CCZ gates.
As long as we have access to ancilla blocks that are crucial for gate teleportation for the Hadamard gates, we can achieve the desired computation.
Suppose we have 3 blocks of such 3D HGP codes that are each equipped with an ancilla 3D HGP code, we can now perform (in parallel) logical CCZ on logical qubits across the 3 blocks as well as the Hadamard gate on the logical qubits within each of the 3 blocks using their individual ancilla code blocks.
While our 3D HGP code does not directly allow for a single-shot state preparation of the $\ket{\overline{0}}$ state, we can use the $Z$ metachecks to prepare the $\ket{\overline{+}}$ state in a single-shot fashion and then apply a transversal $H$ gate using the gate teleportation gadget to obtain the $\ket{\overline{0}}$ state. In our scheme, we do not focus too much on the routing problem of moving logical qubits between different code blocks for the implementation of logical gates.
As mentioned before, the Pauli-measurement-based logical gates defined in Ref.~\cite{xu2025fast} may be integrated to the scheme to facilitate logical qubit teleportation between 3D HGP code blocks as well as standard Clifford computation using Pauli-based computation techniques.
In addition, we might also be able to take advantage of the automorphisms inherited from the underlying classical codes of the 3D HGP code and utilize the automorphism gadgets for homological product codes to permute logical qubits fault-tolerantly~\cite{berthusen2025automorphism}.
Coupled with the inter-block $CCZ$ gates and Hadamard gates, we can perform universal computation in a single-shot and fault-tolerant fashion. We defer a more in-depth discussion of the single-shotness to the later sections since they overlap heavily with the single-shotness of the dynamical scheme that we will introduce in Section~\ref{sec:dynamical}.

While the static scheme we have constructed above uses 3D homological product codes, we point out that the scheme can be generalized to any $i$\textsuperscript{th}-dimensional homological product code. By using $i$ blocks of $i$-dimensional homological product codes, we can descend the Clifford hierarchy down to any $C^{(j)}Z$ gate for $j \leq i$ and use the gadgets described above to perform Clifford gates fault-tolerantly. Thus, we can achieve fault-tolerant universal computation using $i$-dimensional homological product codes as long as we have access to ancilla blocks that are crucial for gate teleportation for the Hadamard gates.

\usetikzlibrary{arrows.meta,positioning,calc,fit,decorations.pathreplacing,backgrounds}

\tikzset{
  code/.style={draw, rounded corners, fill=blue!6, minimum width=42mm, minimum height=8mm, align=center, font=\normalsize},
  gate/.style={draw, rounded corners, fill=orange!10, minimum width=48mm, minimum height=8mm, align=center, font=\normalsize},
  map/.style={-{Latex[length=3.2mm]}, very thick},
  step/.style={font=\scriptsize, fill=white, inner sep=1pt, text=black},
  note/.style={font=\scriptsize, align=left},
  brace/.style={decorate, decoration={brace, amplitude=5pt}},
}

\begin{tikzpicture}[node distance=7mm and 18mm, scale=0.8, transform shape]

\node[code] (HPC) {$\mathbf{i}$-dimensional HGP code};
\node[gate, right=80mm of HPC] (Gtop) {$C^{(i-1)}Z_{1,\ldots,i+1}$};

\draw[map] (HPC) -- node[above, font=\small, inner sep=1pt]{realizes diagonal $C^{(i-1)}Z$ by constant depth} (Gtop);

\node[gate, below=10mm of Gtop] (G1) {$C^{(i{-}2)}Z_{\{1,\ldots,i+1\} \setminus \{s_1\}}$};
\node[gate, below=7mm of G1]    (G2) {$C^{(i{-}3)}Z_{\{1,\ldots,i+1\} \setminus \{s_1,s_2\}}$};
\node[gate, below=7mm of G2]    (G3) {$\cdots$};
\node[gate, below=7mm of G3]    (Gcz) {$C^{(j+1)}Z_{\{1,\ldots,i+1\} \setminus \{s_1,\ldots, s_{i - j - 2}\}}$};
\node[gate, below=7mm of Gcz]   (Gz)  {$C^{(j)}Z_{\{1,\ldots,i+1\} \setminus \{s_1,\ldots, s_{i - j - 1}\}}$};

\draw[map] (Gtop) -- node[left=2mm, font=\small, fill=white, inner sep=1pt]{conjugate by $\overline{X}_{s_1}$: $\overline{X}_{s_1}\,C^{(i-1)}Z\,\overline{X}_{s_1}\,C^{(i-1)}Z = C^{(i-2)}Z$} (G1);
\draw[map] (G1) -- node[left=2mm, font=\small, fill=white, inner sep=1pt]{conjugate by $\overline{X}_{s_2}$} (G2);
\draw[map] (G2) -- node[left=2mm, font=\small, fill=white, inner sep=1pt]{repeat on a remaining control} (G3);
\draw[map] (G3) -- node[left=2mm, font=\small, fill=white, inner sep=1pt]{$\ldots$} (Gcz);
\draw[map] (Gcz) -- node[left=2mm, font=\small, fill=white, inner sep=1pt]{final step} (Gz);

\node[draw=none, rounded corners, fit=(HPC), inner sep=5mm, label={[note, font=\normalsize]above:Code resource}] (LBOX) {};
\node[draw=none, rounded corners, fit=(Gtop)(Gz)(G1)(G2)(G3)(Gcz), inner sep=5mm, label={[note, font=\normalsize]above:Descending the diagonal hierarchy via control removal}] (RBOX) {};

\end{tikzpicture}

We note that an alternative approach is to consider six-dimensional hypergraph product (6D HGP) codes. The 3D HGP codes we have considered so far only have one chain map that is $(t,g)$-sound, which allows us to perform single-shot state preparation and measurement in only one basis (the $X$ basis in our case). This is because the 4-term chain complex that corresponds to a 3D HGP code only has one non-trivial boundary map that can be made $(t,g)$-sound. To achieve single-shot state preparation in both bases, we need soundness in both $X$ and $Z$ bases without giving up non-Clifford transversality. A 6D HGP code yields a longer chain complex that can be arranged so that two distinct boundary maps are $(t,g)$-sound, providing independent metachecks for repairing both $Z$ and $X$ syndromes and hence single-shot state preparation in both bases. At the same time, by associating the appropriate with vector spaces in the chain complex with qubits and checks such that the $X$ logical operators are 4-dimensional membranes and the $Z$ logical operators are 2-dimensional membranes, the HGP code now supports a transversal CCZ gate. At the same time, the code has soundness in both basis which allows for single-shot error correction and state preparation in both basis. One can think of this as the high-rate version of the 6D color code that Bomb{\'\i}n explored in Ref.~\cite{bombin2009self}. Combined with Hadamard via teleportation, this preserves computational universality while achieving dual-basis soundness. However, it is known that the rate of the code worsens as we increase the dimensionality. There are practical reasons to stay in 3D instead of 6D. Suppose the classical codes used to construct the HGP codes have rate approximately 1/2. Then, a back of the envelope calculation shows that the 3D HGP code will have rate 1/8 and the 6D HGP code will have rate 1/64. 

From the above discussion regarding the rate of higher-dimensional HGP codes, it should be apparent that the static scheme we have constructed above is not necessarily the most efficient way to achieve universal computation.
While we are able to enjoy potential time savings from not having to switch between codes, a 3D homological product code suffers from a lower rate compared to a 2D code. In addition, for certain hardware architectures, it may be more efficient or even necessary to keep the code in 2D for the most parts unless we need to perform logical non-Clifford gates which we then switch to 3D codes for the duration of the non-Clifford computation. In some sense, this shares parallels with the motivations that underlie the work done by Hill \emph{et al.}~\cite{hill2011fault}.
Thus, it may be overall more efficient in terms of space-time costs to use a dynamical scheme that switches between 2D and 3D codes to achieve universal computation. An exact characterization of the conditions under which such a scheme would be beneficial is an important open question for future work.

In the subsequent sections, we first describe a homomorphic CNOT primitive before showing how it can be utilized for a dynamical fault-tolerant single-shot universal computation scheme. This primitive is designed to help facilitate the switching between different code dimensions and ensure the overall efficiency of the computation.

\section{Homomorphic CNOT between a 2D HGP code and a 3D HGP code}
\label{sec:homomorphic_cnot_3d_hgp_codes}
In this section, we describe the 2D and 3D HGP code constructions we utilize for the dimensional expansion and contraction steps. We then describe a chain map between these two HGP codes. Finally, we describe the homomorphic CNOT that we can perform between these codes.

Recall from Definition~\ref{def:tanner_code} that we denote the $n$-bit Sipser-Spielman code constructed from a $\Delta$-regular spectral expander and a code $\calC_0$ as $T(\calG, \calC_0)$. When $\calC_0$ is the regular parity check code, then $T(\calG, \calC_0)$ is simply a classical code whose parity check matrix $\mbf{H}$ is given by the graph adjacency matrix of the graph $\calG$. In our case, we are interested in the case where $\calC_0$ is a length-$\Delta$ repetition code with parameters $[\Delta, 1, \Delta]$ that has a parity-check matrix $\mbf{H}_0$.

Note that we have chosen $\mbf{H}_0$ so that every bit is involved in two checks by introducing a redundant check. The typical repetition code check matrix typically only has $\Delta-1$ linearly independent checks but we introduce an extra check so that the graph representation now forms a loop. This is useful for the expander graph construction that we will use in the subsequent sections. In this form, each of the $n$ classical bits of $T(\calG, \calC_0)$ is involved in an even number of checks.

We now state the following simple lemma that provides a useful property of the Sipser-Spielman code $T(\calG, \calC_0)$.
\begin{lemma}[Codewords of $T(\calG, \calC_0)$]
\label{lemma:tanner_codewords}
Let $\calG = (L, R, E)$ be a $\Delta$-regular spectral expander and let $\calC_0 \subseteq \F_2^{\Delta}$ be a $[\Delta, 1, \Delta]$ repetition code. If $\calG$ is a connected graph, then 
\[T(\calG, \calC_0) = \{0^{n}, 1^{n}\}\]
and $\dim(T(\calG, \calC_0)) = 1$.
\end{lemma}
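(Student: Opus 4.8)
The plan is to unpack the definition of the Tanner code $T(\calG,\calC_0)$ and use the structure of the repetition local code together with connectivity of $\calG$ to force every codeword to be constant on all edges. First I would set up notation: the bits of $T(\calG,\calC_0)$ live on the edges $E$ of the $\Delta$-regular graph $\calG$, and the checks live on the vertices; for each vertex $u$, the constraint is that $\mbf{c}|_{N(u)} \in \calC_0$, where $N(u)$ is the set of $\Delta$ edges incident to $u$. Since $\calC_0 = [\Delta,1,\Delta]$ is the repetition code, $\mbf{c}|_{N(u)} \in \calC_0$ is equivalent to saying that all $\Delta$ coordinates of $\mbf{c}$ restricted to the edges incident to $u$ are equal, i.e. $\mbf{c}$ is constant on the edge-neighborhood of every vertex.

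Next I would run a connectivity/propagation argument. Fix $\mbf{c} \in T(\calG,\calC_0)$ and pick any edge $e = \{u,v\} \in E$. The value $\mbf{c}_e$ equals $\mbf{c}_{e'}$ for every edge $e'$ incident to $u$ (by the constraint at $u$) and also for every edge $e'$ incident to $v$ (by the constraint at $v$). Now define an equivalence-type relation on edges: $e \sim e'$ if they are forced equal by a chain of such vertex-constraints. Since $\calG$ is connected, any two edges $e, e'$ are joined by a walk in $\calG$, and consecutive edges along that walk share a vertex, hence are forced equal; therefore $\mbf{c}_e = \mbf{c}_{e'}$ for all edges. So $\mbf{c}$ is one of the two constant vectors $0^n$ or $1^n$. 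Conversely, both $0^n$ and $1^n$ trivially satisfy every repetition-code constraint, so they lie in $T(\calG,\calC_0)$. This gives $T(\calG,\calC_0) = \{0^n, 1^n\}$ and hence $\dim(T(\calG,\calC_0)) = 1$.

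There is essentially no serious obstacle here — the lemma is a straightforward consequence of connectivity plus the rigidity of the repetition code — but the one point to be careful about is a potential notational mismatch in Definition~\ref{def:tanner_code}: the definition as written places the $n$ bits on the left vertex set $L$ and restricts $\mbf{c}|_{N(u)}$ for $u \in R$, whereas for Sipser--Spielman codes one usually places bits on edges and checks on vertices (as the surrounding text makes explicit: ``we can position the bits of the Tanner code on the edges of a $\Delta$-regular graph whereas the checks are the vertices of the graph''). I would phrase the proof in the edges-on-$\calG$ picture consistent with that text, noting that the argument is identical under either reading: in both cases a length-$\Delta$ repetition local code at each check forces equality across that check's support, and connectivity of $\calG$ propagates a single value everywhere. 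One should also remark that the redundant check in $\mbf{H}_0$ (so that every bit sits in two checks) plays no role in this particular lemma — it only matters later for the expander-graph loop structure — so the conclusion holds regardless of whether one uses the $(\Delta-1)$-row or $\Delta$-row parity-check matrix for $\calC_0$, as long as the local code is exactly the repetition code.
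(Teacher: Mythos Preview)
Your proposal is correct and follows essentially the same propagation-via-connectivity argument as the paper: the repetition local code forces equality across each check's support, and connectivity of $\calG$ spreads a single bit value to all of $T(\calG,\calC_0)$. Your version is in fact slightly more thorough (you explicitly verify the converse inclusion and flag the edges-vs-$L$ notational ambiguity), but the core idea is identical.
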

\begin{proof}
We first note that the codewords of the repetition code $\calC_0$ are simply the all-zero and all-one vectors in $\F_2^{\Delta}$. Suppose the first bit $v_1 \in L$ in the codeword of $T(\calG, \calC_0)$ is $0$. From the fact that $\calG$ is bipartite and connected, for all $v' \in R$ such that $v_1 \in N(v'$), all bits corresponding to the neighbors of $v'$ in $L$ must also be $0$ because of the constraints from $\mbf{H}_0$. We can then extend this argument to all bits in $L$ iteratively because $G$ is connected. Since the only allowed vectors in $\F_2^{\Delta}$ are $0^{\Delta}$ and $1^{\Delta}$, we conclude that
\[T(\calG, \calC_0) = \{0^{n}, 1^{n}\}.\]
\end{proof}

Now, we discuss the choice of the connected bipartite graph $\calG$ that we will use to construct the Sipser-Spielman code $T(\calG, \calC_0)$. In particular, we can use the spectral expanders to construct Sipser-Spielman codes with good distance properties that are equipped with extremely efficient decoding algorithms. By choosing to plant $\calC_0$ as the local codes in the spectral expander graph, we can ensure that the resulting Sipser-Spielman code inherits the good properties of both the expander graph and the repetition code.
We refer to this code $T(\calG, \calC_0)$ as the \emph{generalized repetition code} $\calC_G$ due to the fact that it is a repetition code defined on an expanding bipartite graph with local codes $\calC_0$. 
We refer to the classical repetition code that has the standard parity check matrix $\mbf{H}_S$ with linearly independent checks as the \emph{standard repetition code} and denote it by $\calC_S$. We provide a pictorial representation of these different repetition codes in Figure~\ref{fig:repetition-codes}.

\begin{figure}[htbp]
\centering

\begin{subfigure}[b]{\textwidth}
\centering
\begin{tikzpicture}[
    bit/.style={circle, draw, fill=blue!20, minimum size=7mm, font=\small},
    check/.style={rectangle, draw, fill=red!20, minimum size=6mm, font=\small},
    edge/.style={-,thick},
    scale=0.85
]
    \node[bit] (b1) at (0,0) {$b_1$};
    \node[bit] (b2) at (1.5,0) {$b_2$};
    \node[bit] (b3) at (3,0) {$b_3$};
    \node[bit] (b4) at (4.5,0) {$b_4$};
    \node[bit] (b5) at (6,0) {$b_5$};
    \node[bit] (b6) at (7.5,0) {$b_6$};
    
    \node[check] (c1) at (0.75,-1.2) {$c_1$};
    \node[check] (c2) at (2.25,-1.2) {$c_2$};
    \node[check] (c3) at (3.75,-1.2) {$c_3$};
    \node[check] (c4) at (5.25,-1.2) {$c_4$};
    \node[check] (c5) at (6.75,-1.2) {$c_5$};
    \draw[edge] (b1) -- (c1);
    \draw[edge] (b2) -- (c1);
    \draw[edge] (b2) -- (c2);
    \draw[edge] (b3) -- (c2);
    \draw[edge] (b3) -- (c3);
    \draw[edge] (b4) -- (c3);
    \draw[edge] (b4) -- (c4);
    \draw[edge] (b5) -- (c4);
    \draw[edge] (b5) -- (c5);
    \draw[edge] (b6) -- (c5);
    \node[font=\footnotesize,text=gray] at (3.25,-2) {Each check enforces: $b_i = b_{i+1}$};
\end{tikzpicture}
\caption{Standard repetition code $\calC_S$ with $n$ bits and $n-1$ checks for $n = 6$.}
\label{fig:standard-rep}
\end{subfigure}

\vspace{0.5cm}

\begin{subfigure}[b]{\textwidth}
\centering
\begin{tikzpicture}[
    bit/.style={circle, draw, fill=blue!20, minimum size=7mm, font=\small},
    check/.style={rectangle, draw, fill=red!20, minimum size=6mm, font=\small},
    edge/.style={-,thick},
    scale=0.85
]
    \node[bit] (cb1) at (90:1.3) {$b_1$};
    \node[bit] (cb2) at (30:1.3) {$b_2$};
    \node[bit] (cb3) at (-30:1.3) {$b_3$};
    \node[bit] (cb4) at (-90:1.3) {$b_4$};
    \node[bit] (cb5) at (-150:1.3) {$b_5$};
    \node[bit] (cb6) at (150:1.3) {$b_6$};
    
    \node[check] (cc1) at (60:2.3) {$c_1$};
    \node[check] (cc2) at (0:2.3) {$c_2$};
    \node[check] (cc3) at (-60:2.3) {$c_3$};
    \node[check] (cc4) at (-120:2.3) {$c_4$};
    \node[check] (cc5) at (180:2.3) {$c_5$};
    \node[check] (cc6) at (120:2.3) {$c_6$};

    \draw[edge] (cb1) -- (cc1) -- (cb2);
    \draw[edge] (cb2) -- (cc2) -- (cb3);
    \draw[edge] (cb3) -- (cc3) -- (cb4);
    \draw[edge] (cb4) -- (cc4) -- (cb5);
    \draw[edge] (cb5) -- (cc5) -- (cb6);
    \draw[edge] (cb6) -- (cc6) -- (cb1);

    \node[font=\footnotesize,text=gray] at (0,-3) {Ring topology with cyclic constraints};
\end{tikzpicture}
\caption{Circular repetition local code $\calC_0$ with $n$ bits and $n$ checks for $n = 6$.}
\label{fig:circular-rep}
\end{subfigure}

\vspace{0.5cm}

\begin{subfigure}[b]{\textwidth}
\centering
\begin{tikzpicture}[
    bit/.style={circle, draw, fill=blue!20, minimum size=7mm, font=\small},
    check/.style={rectangle, draw, fill=red!20, minimum size=6mm, font=\small},
    edge/.style={-,thick},
    scale=0.85
]
    \node[bit] (eb1) at (-3,0) {$b_1$};
    \node[bit] (eb2) at (-1.5,0) {$b_2$};
    \node[bit] (eb3) at (0,0) {$b_3$};
    \node[bit] (eb4) at (1.5,0) {$b_4$};
    \node[bit] (eb5) at (3,0) {$b_5$};
    \node[bit] (eb6) at (4.5,0) {$b_6$};

    \node[check] (ec1) at (-5.85,-4.5) {$c_1$};
    \node[check] (ec2) at (-4.65,-4.5) {$c_2$};
    \node[check] (ec3) at (-3.45,-4.5) {$c_3$};
    \node[check] (ec4) at (-2.25,-4.5) {$c_4$};
    \node[check] (ec5) at (-1.05,-4.5) {$c_5$};
    \node[check] (ec6) at (0.25,-4.5) {$c_6$};
    \node[check] (ec7) at (1.45,-4.5) {$c_7$};
    \node[check] (ec8) at (2.65,-4.5) {$c_8$};
    \node[check] (ec9) at (3.85,-4.5) {$c_9$};
    \node[check] (ec10) at (5.05,-4.5) {$c_{10}$};
    \node[check] (ec11) at (6.25,-4.5) {$c_{11}$};
    \node[check] (ec12) at (7.45,-4.5) {$c_{12}$};

    \draw[edge] (eb1) -- (ec1);
    \draw[edge] (eb4) -- (ec1);
    \draw[edge] (eb4) -- (ec2);
    \draw[edge] (eb5) -- (ec2);
    \draw[edge] (eb5) -- (ec3);
    \draw[edge] (eb1) -- (ec3);
    \draw[edge] (eb2) -- (ec4);
    \draw[edge] (eb4) -- (ec4);
    \draw[edge] (eb4) -- (ec5);
    \draw[edge] (eb6) -- (ec5);
    \draw[edge] (eb6) -- (ec6);
    \draw[edge] (eb2) -- (ec6);
    \draw[edge] (eb1) -- (ec7);
    \draw[edge] (eb2) -- (ec7);
    \draw[edge] (eb2) -- (ec8);
    \draw[edge] (eb3) -- (ec8);
    \draw[edge] (eb3) -- (ec9);
    \draw[edge] (eb1) -- (ec9);
    \draw[edge] (eb3) -- (ec10);
    \draw[edge] (eb5) -- (ec10);
    \draw[edge] (eb5) -- (ec11);
    \draw[edge] (eb6) -- (ec11);
    \draw[edge] (eb6) -- (ec12);
    \draw[edge] (eb3) -- (ec12);
    
    \node[font=\footnotesize,text=gray] at (0,-5.2) {Long-range checks that do not only act on neighboring bits};
\end{tikzpicture}
\caption{Generalized repetition code $\calC_G$ with $n$ bits and $>n$ checks for $n = 6$. $\calC_G$ is constructed from an $(n = 6, m = 4, 2, \Delta = 3)$ left-right expander $\calG = (L, R, E)$ and a $[3, 1, 3]$ repetition code $\calC_0$ as the local code. For example, $c_1, c_2, c_3$ are the parity checks decomposed from a node $u_1 \in V$ that was adjacent to nodes $b_1, b_4, b_5 \in E$. Similarly, $c_4, c_5, c_6$ are the parity checks decomposed from a node $u_2 \in V$ that was adjacent to nodes $b_2, b_4, b_6 \in E$.}
\label{fig:expander-rep}
\end{subfigure}

\caption{Comparison of three repetition code structures: (a) standard repetition code $\calC_S$, (b) circular repetition local code $\calC_0$, and (c) generalized repetition code $\calC_G$.}
\label{fig:repetition-codes}
\end{figure}
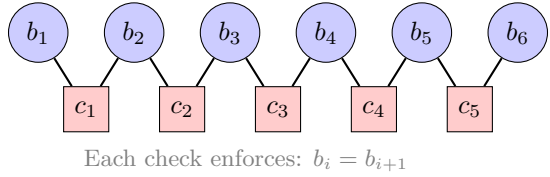
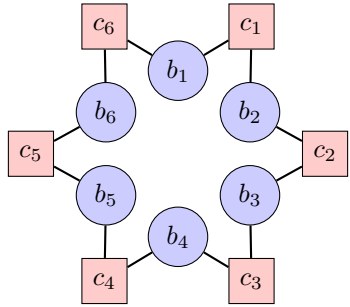
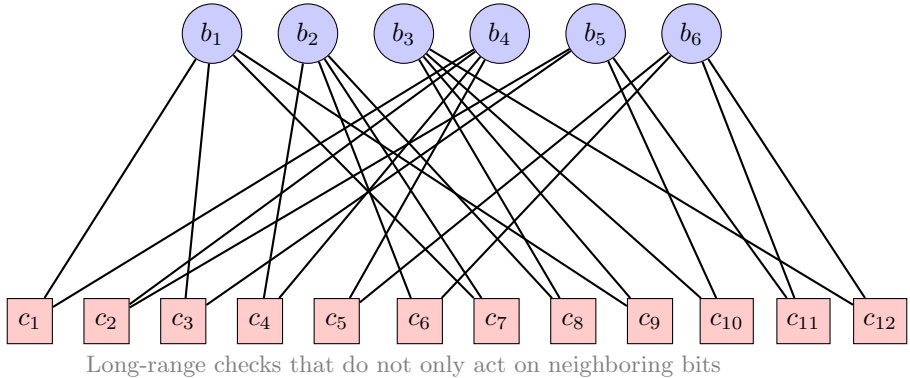

Now that we are done with describing the generalized repetition code, we proceed to describe an important 3D HGP quantum code $\calQ_G$. Consider an arbitrary 2D HGP code $\calQ$ constructed from two classical Sipser-Spielman expander codes. We can construct $\calQ_G$ by taking a tensor product of $\calQ$ with the generalized repetition code $\calC_G$. More formally, we have the following definition.

\begin{definition}[Quantum Code $\calQ_G$ ] \label{def:quantum_codes_QG_QS}
  Suppose we have a 2D HGP code $\calQ$ constructed from two classical Sipser-Spielman expander codes $\calC_1$ and $\calC_2$ that have associated 2-term chain complexes $\calA_{\calC_1}$ and $\calA_{\calC_2}$ respectively. Then, the 2D HGP code $\calQ$ has an associated 3-term chain complex $\calA_{\calQ}$ given by $\calA_{\calQ} = \calA_{\calC_1} \otimes \calA_{\calC_2}$. Define $\calA_{\calQ_G} = \calA_{\calQ} \otimes \calA_{\calC_G}$ i.e., the homological product between the chain complex of the 2D HGP code $\calQ$ and the chain complex $\calC_G$. Then, let the 3D HGP code $\calQ_{G}$ be the code associated with the chain complex $\calA_{\calQ_G}$.
\end{definition}

\subsection{Homomorphic CNOT between $\calQ$ and $\calQ_G$}
\label{sec:homomorphic_cnot}
In this section, we describe the homomorphic CNOT that we can perform between two HGP quantum codes $\calQ_G$ and $\calQ$ which we will define later. To be concrete, the homomorphic CNOT is controlled on the logical qubits of the 3D HGP code $\calQ_G$ and targeted on the logical qubits of the 2D HGP code $\calQ$. We note that our homomorphic CNOT construction is a \emph{one-way CNOT} where the control and target codes are fixed. In other words, we cannot swap the control and target codes. This is a generalization of the homomorphic CNOT construction in Ref.~\cite{heussen2025efficient} and addresses the open question of constructing such a homomorphic CNOT between codes with a growing number of logical qubits. Intuitively, reversing the control and target codes would require a chain map in the opposite direction which may not be sparse considering we would need to map a string-like $X$ logical operator in $\calQ$ to a surface-like $X$ logical operator in $\calQ_G$. Our homomorphic CNOT is a crucial component of our dimension expansion and contraction steps.

To construct the homomorphic CNOT, we first define what we mean by a chain map between two $k$-term chain complexes for arbitrary integer $k$. A chain map is a linear map that commutes with the boundary operators of the two chain complexes.

\begin{definition}[Chain Map between $k$-term Chain Complexes]
\label{def:chain_map_between_complexes}
Suppose we are given two $k$-term $\F_2$-chain complexes for $k \in \Z^+$ 
\[\calA = \left(A_k \xrightarrow{\partial^A_k} A_{k-1} \xrightarrow{\partial^A_{k-1}} \cdots \xrightarrow{\partial^A_1} A_0\right)\text{, and}\] \[\mathcal{B} = \left(B_k \xrightarrow{\partial^B_k} B_{k-1} \xrightarrow{\partial^B_{k-1}} \cdots \xrightarrow{\partial^B_1} B_0\right).\] A \emph{chain map} $\varphi: \calA \to \calB$ is a collection of linear maps $\{\varphi_i: A_i \to B_i\}_{i=0}^k$ such that for all $1 \leq i \leq k$, the following diagram commutes:
\[
\begin{quantikz}[wire types={n,n},nodes={inner sep=2pt}, mystyle]
A_i \arrow[r, "\partial^A_i"] \arrow[d, "\varphi_i"'] & A_{i-1} \arrow[d, "\varphi_{i-1}"] \\
B_i \arrow[r, "\partial^B_i"'] & B_{i-1}
\end{quantikz}
\]
That is, $\varphi_{i-1} \circ \partial^A_i = \partial^B_i \circ \varphi_i$ for all $1 \leq i \leq k$.
\end{definition}

Given $k$-term chain complexes, we are interested in the existence of a chain map between the two chain complexes where each of the linear maps $\varphi_i$ is $w$-limited for some integer $w \in \Z^+$. A $w$-limited linear map is a sparse linear map where its matrix representation has at most $w$ non-zero entries in each column and row. Next, we state the homomorphic CNOT framework defined in Refs.~\cite{huang2023homomorphic,xu2025fast}.

\begin{definition}
[Homomorphic CNOT~{\cite[Restatement of Definition 3]{xu2025fast}}]
\label{def:homomorphic_CNOT}
 Let $\calQ$ and $\calQ^{\prime}$ be two quantum CSS codes associated with two 3-term chain complexes $\calA = \{\{A_i\}_{i = 0}^2, \{\partial_i\}_{i = 1}^2\}$, and $\calA' = \{\{A^{\prime}_i\}_{i = 0}^2, \{\partial^{\prime}_i\}_{i = 1}^2\}$, respectively. Let $\varphi = \{\varphi_i: A_i^{\prime} \rightarrow A_i\}_{i = 0}^2$ be a homomorphism between the two chain complexes, i.e., the following diagram is commutative:
    \begin{equation}
    \label{eq:quantum_code_homo}
        \begin{quantikz}[wire types={n,n},nodes={inner sep=2pt}, mystyle]
	{A_2} & {A_1} & {A_0} \\
	{A_2^{\prime}} & {A_1^{\prime}} & {A_0^{\prime}}
	\arrow["{\partial_2}", from=1-1, to=1-2]
	\arrow["{\partial_1}", from=1-2, to=1-3]
        \arrow["{\partial_2^{\prime}}", from=2-1, to=2-2]
	\arrow["{\partial_1^{\prime}}", from=2-2, to=2-3]
        \arrow["{\varphi_2}", from=2-1, to=1-1]
	\arrow["{\varphi_1}", from=2-2, to=1-2]
        \arrow["{\varphi_0}", from=2-3, to=1-3]
\end{quantikz}
    \end{equation}
     Then physical $\calQ$-controlled CNOTs specified by $\varphi_1$, i.e. a physical CNOT controlled by the $i$-th qubit of $\calQ$ and targeted the $j$-th qubit of $\calQ^{\prime}$ is applied if and only if $\varphi_1[i,j] = 1$, give some $\calQ$-controlled logical CNOT gates between $\calQ$ and $\calQ^{\prime}$. We refer to such a logical gadget as a homomorphic CNOT associated with the homomorphism $\varphi$.
\end{definition}

With Definition~\ref{def:quantum_codes_QG_QS}, we can now describe the homomorphic CNOT from the quantum code $\calQ_G$ to $\calQ$ by applying the framework stated in Definition~\ref{def:homomorphic_CNOT} to the chain complexes $\calA_{\calQ_G}$ and $\calA_{\calQ}$ defined in Definition~\ref{def:quantum_codes_QG_QS} with a $1$-limited chain map $\varphi$ that we define in the following proposition. 

\begin{proposition}[Homomorphic CNOT from $\calQ_G$ to $\calQ_S$]\label{prop:homomorphic_cnot_from_QG_to_QS}
  Let $\calQ$ and $\calQ_G$ be the quantum codes associated with the chain complexes $\calA_{\calQ}$ and $\calA_{\calQ_G}$ defined in Definition~\ref{def:quantum_codes_QG_QS} respectively. Then, there exists a homomorphic CNOT from $\calQ_G$ to $\calQ$ specified by some $1$-limited chain map $\varphi$. In particular, the homomorphic CNOT uses a transversal implementation of inter-block CNOTs and induces a logical CNOT from each logical qubit of $\calQ_G$ to the corresponding logical qubit of $\calQ$.
\end{proposition}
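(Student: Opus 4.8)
The plan is to produce an explicit $1$-limited chain map $\varphi\colon \calA_{\calQ} \to \calA_{\calQ_G}$ and then hand it to the homomorphic CNOT machinery of Definition~\ref{def:homomorphic_CNOT} (following Refs.~\cite{huang2023homomorphic,xu2025fast}), which turns any chain map ``from the target complex to the control complex'' into a valid logical CNOT gadget. Since we want the control code to be $\calQ_G$ and the target to be $\calQ$, the direction we need is precisely $\calA_{\calQ}\to\calA_{\calQ_G}$; the point to keep in mind is that although an $X$-logical of $\calQ_G$ is a membrane whereas an $X$-logical of $\calQ$ is a string, the chain map runs \emph{into} $\calQ_G$, and in that direction a sparse map does exist.

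First I would unwind the tensor structure. By Definition~\ref{def:quantum_codes_QG_QS}, $\calA_{\calQ_G}=\calA_{\calQ}\otimes\calA_{\calC_G}$, where $\calA_{\calC_G}$ is the two-term complex of the generalized repetition code $\calC_G=T(\calG,\calC_0)$, with the bit space in degree $0$ and the check space in degree $1$; under the degree conventions of Section~\ref{subsec:higher-dimensional-hypergraph-product-codes}, the CSS code $\calQ_G$ is carved out of the middle three terms of the resulting four-term complex, with qubits in degree $1$ (matching~\eqref{eq:3D HGP complex}). Writing $\mathcal{I}$ for the trivial two-term complex ($\F_2$ in degree $0$, zero elsewhere), so that $\calA_{\calQ}\cong\calA_{\calQ}\otimes\mathcal{I}$, I would fix one bit $e_0$ of $\calC_G$ (an edge of $\calG$) and define $\iota\colon\mathcal{I}\to\calA_{\calC_G}$ by $\iota_0(1)=\mbf{e}_{e_0}$ and $\iota_i=0$ for $i\neq 0$. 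Because $\mathcal{I}$ is concentrated in degree $0$, the chain-map squares commute trivially, so $\iota$ is a chain map; moreover $[\mbf{e}_{e_0}]$ generates $H_0(\calA_{\calC_G})$, since $\im(\partial_1^{\calC_G})=\calC_G^{\perp}$ is the even-weight subspace — here I use Lemma~\ref{lemma:tanner_codewords}, which gives $\calC_G=\{0^n,1^n\}$ and hence $\dim H_0(\calA_{\calC_G})=1$ — while $\mbf{e}_{e_0}$ has odd weight. The chain map I want is then $\varphi\coloneqq\ident_{\calA_{\calQ}}\otimes\iota\colon\calA_{\calQ}\to\calA_{\calQ_G}$, a chain map as a tensor product of chain maps, whose restriction to degrees $0,1,2$ is a chain map of the three-term CSS complexes (automatic once $\varphi$ is a chain map of the full complexes, since $\calA_{\calQ}$ vanishes in degree $3$), so that Definition~\ref{def:homomorphic_CNOT} is applicable.

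Next I would verify the two properties we need. For sparsity: on each degree, $\varphi$ sends a basis element $a$ of $\calA_{\calQ}$ to the single basis element $a\otimes\mbf{e}_{e_0}$ of $\calA_{\calQ_G}$, so every $\varphi_i$ — in particular $\varphi_1$ — has at most one nonzero entry per row and per column, i.e.\ $\varphi$ is $1$-limited. Consequently the physical circuit prescribed by Definition~\ref{def:homomorphic_CNOT} is, for each qubit $b$ of $\calQ$, a single inter-block CNOT controlled on the qubit $(b,e_0)$ of $\calQ_G$ (the copy of $b$ in the $e_0$-slice, living in the $(Q,A_0)$ sector of $\tilde{Q}$) and targeted on $b$; since each physical qubit participates in at most one such CNOT, the gadget is transversal and depth-$1$. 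For the logical action: these CNOTs propagate $Z$ from target to control and $X$ from control to target, so using the canonical logical bases of $\calQ_G$ in~\eqref{eq:3D HGP G_X,G_Z} (with $\mbf{G}=1^n$ the generator of $\calC_G$ and $\mbf{e}_i=\mbf{e}_{e_0}$) together with~\eqref{eq:HGP G_X, G_Z} for $\calQ$, I would compute $\overline{Z}^{\calQ}_{q}\mapsto\overline{Z}^{\calQ}_{q}\,(\overline{Z}^{\calQ}_{q}\otimes\mbf{e}_{e_0})$ and $\overline{X}^{\calQ_G}_{q}=\overline{X}^{\calQ}_{q}\otimes 1^n\mapsto(\overline{X}^{\calQ}_{q}\otimes 1^n)\,\overline{X}^{\calQ}_{q}$. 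Since $\overline{Z}^{\calQ}_{q}\otimes\mbf{e}_{e_0}$ and $\overline{X}^{\calQ}_{q}\otimes 1^n$ are exactly the canonical $\overline{Z}$ and $\overline{X}$ of the logical qubit of $\calQ_G$ that corresponds to $q$ (through the K\"unneth summand $H_1(\calA_{\calQ})\otimes H_0(\calA_{\calC_G})$, the pairing being induced by $H_0(\iota)$), this is exactly $\overline{\mathrm{CNOT}}$ controlled on that logical qubit of $\calQ_G$ and targeted on qubit $q$ of $\calQ$, qubit by qubit, and the standard homomorphic-CNOT bookkeeping shows that the gadget is trivial on all remaining degrees of freedom.

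The hard part here is bookkeeping rather than mathematics: one must pin down the degree and orientation conventions for $\calA_{\calQ}$, $\calA_{\calC_G}$ and the four-term $\calA_{\calQ_G}$ consistently enough that (i) $\calQ_G$ really is the degree-$(2,1,0)$ CSS code with qubits in degree $1$, (ii) the restriction of $\varphi$ is a chain map of the three-term CSS complexes, and (iii) the canonical logical representatives of~\eqref{eq:3D HGP G_X,G_Z} sit in the sectors that $\varphi_1$ actually touches. The single genuinely code-specific input is that $\mbf{e}_{e_0}$ — rather than the weight-$n$ codeword $1^n$ — represents the generator of $H_0(\calA_{\calC_G})$; this is exactly what makes $\varphi$ sparse, and it is where the redundant check of the generalized repetition code (equivalently, the fact that $\calC_G^{\perp}$ consists of all even-weight strings) enters.
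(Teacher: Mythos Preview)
Your proposal is correct and follows essentially the same route as the paper: both pick a single basis bit $g_0=\mbf{e}_{e_0}\in G_0$ of the generalized repetition code and define the chain map by $a\mapsto a\otimes g_0$ on each degree (your $\ident_{\calA_{\calQ}}\otimes\iota$ is just a categorical repackaging of this), then read off $1$-limitedness and the logical action from the layer structure. Your write-up is a bit more explicit about why $[\mbf{e}_{e_0}]$ generates $H_0(\calA_{\calC_G})$ (via Lemma~\ref{lemma:tanner_codewords}), which the paper leaves implicit, but there is no substantive difference in strategy.
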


\begin{proof}
  To prove the above proposition, we need to construct a chain map from the chain complex $\calA_{\calQ}$ to the chain complex $\calA_{\calQ_G}$ that is $1$-limited. We can then use Definition~\ref{def:homomorphic_CNOT} to conclude that the physical CNOTs specified by the chain map give a homomorphic CNOT from $\calQ_G$ to $\calQ$. 
  
  We can decompose the 4-term chain complex $\calA_{\calQ_G}$ into the following form:
  \begin{align*}
    A_{\calQ_G, 0} &= A_{\calC_1, 0} \otimes A_{\calC_2, 0} \otimes G_0,\\
    A_{\calQ_G, 1} &= \bigoplus_{i, j, k \in \{0, 1\}\,:\, i + j + k = 1} A_{\calC_1, i} \otimes A_{\calC_2, j} \otimes G_k,\\
    A_{\calQ_G, 2} &= \bigoplus_{i, j, k \in \{0, 1\}\,:\, i + j + k = 2} A_{\calC_1, i} \otimes A_{\calC_2, j} \otimes G_k \\
    A_{\calQ_G, 3} &= A_{\calC_1, 1} \otimes A_{\calC_2, 1} \otimes G_1,
  \end{align*}
  where $G_k$ for $k \in \{0, 1\}$ are the vector spaces in the chain complex $\calA_{\calC_G}$ that is associated with the generalized repetition code $T(\calG, \calC_0)$.

  Now, we can construct the chain map $\varphi: \calA_{\calQ} \to \calA_{\calQ_G}$. Let $g_{0}$ be an arbitrary standard basis vector in $G_0$. Then, we can define the chain map as follows:
  \begin{align*}
  \varphi_0\,:\, A_{\calQ, 0} &\to A_{\calQ_G, 0},\\
  a_0 &\mapsto a_0 \otimes g_{0},\\
  \varphi_1\,:\, A_{\calQ, 1} &\to A_{\calQ_G, 1},\\
  a_1 &\mapsto a_1 \otimes g_{0},\\
  \varphi_2\,:\, A_{\calQ, 2} &\to A_{\calQ_G, 2},\\
  a_2 &\mapsto a_2 \otimes g_{0},
  \end{align*}
  where $a_i$ for $i \in \{0, 1, 2\}$ are arbitrary vectors in the vector spaces $A_{\calQ, i}$ that belong to the chain complex $\calA_{\calQ}$. It is clear from the construction of the chain map $\varphi$ that each of the linear maps $\varphi_i$ for $i \in \{0, 1, 2\}$ is $1$-limited since each basis vector in $A_{\calQ, i}$ is mapped to a unique basis vector in $A_{\calQ_G, i}$. This property directly implies that the physical CNOTs specified by $\varphi^{\calO(1)}_{\calQ,1}$ act transversally: each physical qubit in $\calQ$ is paired with exactly one unique physical qubit in $\calQ_G$ for the CNOT operation, with no overlap or entanglement between different pairs. It is straightforward to verify that the above chain map $\varphi$ satisfies the commutative diagram in Equation~\ref{eq:quantum_code_homo} in Definition~\ref{def:homomorphic_CNOT}.

  Similarly, it is easy to check that the application of the physical CNOTs leads to logical CNOTs between the corresponding logical qubits in $\calQ_G$ and $\calQ$. The $Z$ logical operators are strings in $\calQ$ that are mapped to strings in $\calQ_G$ in the layer associated with the basis vector $g_{0}$; the $X$ logical operators are membranes in $\calQ_G$ whose string-like boundaries that reside on the layer associated with the basis vector $g_{0}$ are mapped to strings in $\calQ$ under the chain map $\varphi$. Thus, we conclude that the homomorphic CNOT specified by $\varphi$ is transversal at the physical level and induces logical CNOTs between individual pairs of logical qubits in $\calQ_G$ and $\calQ$, as claimed.

\end{proof}

\begin{remark}
  We remark that our homomorphic CNOT gadget is compatible with the Grid Pauli Product Measurement framework formulated in Ref.~\cite{xu2025fast}. This stems from the well-known fact that chain maps compose to form another chain map. In particular, if we have two chain maps $\varphi: \calA \to \calB$ and $\psi: \calB \to \calC$ between three $k$-term chain complexes $\calA, \calB,$ and $\calC$, then the composition of the two chain maps $\psi \circ \varphi: \calA \to \calC$ is also a chain map. This property directly implies that if we have a homomorphic CNOT from a quantum code $\calQ_1$ to another quantum code $\calQ_2$ specified by a chain map $\varphi$, and another homomorphic CNOT from $\calQ_2$ to a third quantum code $\calQ_3$ specified by a chain map $\psi$, then we can construct a homomorphic CNOT from $\calQ_1$ to $\calQ_3$ specified by the composition of the two chain maps $\psi \circ \varphi$. In other words, we can puncture or/and augment the classical codes that constitute the HGP codes and only perform a logical CNOT on select logical qubits between the 2D and 3D HGP codes. This flexibility can potentially be useful in practice for addressable code-switching applications.
\end{remark}

\begin{remark} \label{remark:constant-rate-homomorphic-cnot}
  We also remark that the homomorphic CNOT that we have developed can be generalized to the case where the third dimension of the code block $\calQ_G$ is constructed from a different classical code than the generalized repetition code $\calC_G$. Instead of performing a single transversal homomorphic CNOT between $\calQ_G$ and $\calQ$, we now transversal homomorphic CNOTs from $\calQ_G$ to multiple copies of $\calQ$ where the number of copies scales with the dimension of the classical code used to construct the third dimension of $\calQ_G$. By expressing the parity-check matrix of the classical code in some canonical basis where the information bits are separated from the parity bits, the logical strings in $\calQ_G$ can be mapped to disjoint layers in $\calQ_G$. Likewise the boundaries of the logical membranes in $\calQ_G$ can also be mapped to disjoint layers in $\calQ_G$. We can then perform a transversal homomorphic CNOT from each of these layers to a different copy of $\calQ$ in parallel without blowing up the depth of the circuit. This generalization can be useful in practice if we want to use a classical code with a better rate than the generalized repetition code to construct the third dimension of $\calQ_G$. 
\end{remark}
\section{Single-Shot Code Switching}
\label{sec:single_shot_code_switching}

In this section, we describe the code-switching protocol that allows us to switch between different high-rate QLDPC codes in a single-shot manner. Our code-switching involves two kinds of operations: dimensional expansion and dimensional contraction that take place via logical teleportation. 
We now state our single-shot code-switching theorem.

\begin{theorem}[Single-Shot Code-switching]
\label{thm:SSCS}
Given $D \geq 2$, suppose we are given a $D$-dimensional HGP code $\calQ$ with parameters $\llbracket n, \Theta(n), \Omega(n^{1/D})\rrbracket$. Let $\calQ_G$ be an $\llbracket \Theta(n^{(D+1)/D}), \Theta(n), \Omega(n^{1/D})\rrbracket$ $(D+1)$-dimensional HGP code that has a chain complex $\calA_{\calQ_G}$ that is obtained from the tensor product between the chain complex $\calA_{\calQ}$ that corresponds to the HGP code $\calQ$ and a 2-term chain complex $\calA_{\calC_{G}}$ that corresponds to a classical Tanner code $\calC_{G} = T(\calG, \calC_0)$ where $\calG$ is a $\Delta$-regular spectral expander with $\Omega(n^{1/D})$ edges and $\calC_0$ is a $[r, 1, r]$ classical repetition local code for some constants $\Delta > 0$. Then, there exists a single-shot code-switching protocol that fault-tolerantly switches between $\calQ$ and $\calQ_G$ using a constant-depth circuit that can tolerate $\Omega(n^{1/D})$ adversarial errors. In addition, the single-shot code-switching protocol exhibits a threshold against a constant physical error rate under the local stochastic noise model. 
\end{theorem}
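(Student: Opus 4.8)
The plan is to realize each code-switch as a one-bit logical teleportation circuit assembled from single-shot state preparation, the depth-$1$ homomorphic CNOT of Proposition~\ref{prop:homomorphic_cnot_from_QG_to_QS}, a transversal destructive measurement, and offline classical post-processing, and then to check that faults propagate controllably through this constant-depth gadget. First I would pin down the parameters of $\calQ_G$: by Lemma~\ref{lemma:tanner_codewords} the generalized repetition code $\calC_G = T(\calG,\calC_0)$ has parameters $[N,1,N]$ with $N = |E(\calG)| = \Omega(n^{1/D})$, so from $\calA_{\calQ_G} = \calA_\calQ\otimes\calA_{\calC_G}$ the homological-product formulas~\eqref{eqn:homological_classical_code_params} and the K\"unneth formula (Proposition~\ref{prop:kunneth-formula}) give $\tilde n = \Theta(nN) = \Theta(n^{(D+1)/D})$, $\tilde k = k_\calQ = \Theta(n)$, $\tilde d_Z = d_Z(\calQ) = \Omega(n^{1/D})$ and $\tilde d_X = d_X(\calQ)\cdot N = \Omega(n^{2/D})$, hence $\tilde d = \Omega(n^{1/D})$ --- a routine computation, with the product oriented so that the $X$-logicals of $\calQ_G$ are membranes as needed elsewhere for the transversal $CCZ$.

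\textbf{The teleportation circuits.} For dimensional expansion $\calQ\to\calQ_G$: prepare $\calQ_G$ in $\ket{\overline{+}}^{\otimes\tilde k}$ single-shot via the $Z$-metacheck protocol of Section~\ref{sec:static} (correctness uses the $(t,g)$-soundness of the $Z$-check map from Lemma~\ref{lem:4D HGP soundness} and its higher-dimensional analogue); apply the $1$-limited homomorphic CNOT of Proposition~\ref{prop:homomorphic_cnot_from_QG_to_QS}, controlled on $\calQ_G$ and targeting $\calQ$, which is a depth-$1$ layer of transversal inter-block CNOTs; transversally measure every physical qubit of $\calQ$ in the $Z$ basis; from the outcome record extract the $Z$-stabilizer syndrome of $\calQ$, decode it, read off the corrected values of the joint observables $\overline{Z}_\calQ\,\overline{Z}_{\calQ_G}$, and apply the matching $\overline{X}$-Pauli correction to $\calQ_G$. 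Dimensional contraction $\calQ_G\to\calQ$ is the mirror image: prepare $\calQ$ in $\ket{\overline{0}}^{\otimes\tilde k}$ single-shot via Bergamaschi--Liu~\cite{bergamaschi2024fault} at the stated $\Theta(\sqrt n)$ spatial overhead, apply the same homomorphic CNOT, transversally measure $\calQ_G$ in the $X$ basis, decode the $X$-stabilizer syndrome, and apply the inferred $\overline{Z}$-correction to $\calQ$. Because $\varphi_1$ induces a transversal logical CNOT between corresponding pairs of logical qubits, conjugating the measured transversal Pauli backwards through the CNOT layer shows each circuit implements exactly the joint logical measurement of one-bit teleportation, so in the noiseless case the logical state transfers faithfully up to the decoded Pauli; addressable switching of a subset of logical qubits then follows by precomposing $\varphi$ with the GPPM chain maps of~\cite{xu2025fast}.

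\textbf{Constant depth and adversarial fault tolerance.} Each ingredient above is $O(1)$ depth or $O(1)$ rounds --- single-shot state preparation, the depth-$1$ homomorphic CNOT, the depth-$1$ destructive transversal measurement --- and all decoding is offline, so the protocol is constant-depth with no $\Theta(d)$ repeated-syndrome rounds. For gadget fault-tolerance I would track the fault weights: since $\varphi$ is $1$-limited the CNOT layer pairs each physical qubit with exactly one qubit of the other block, so $X$ errors flow only $\calQ_G\to\calQ$ and $Z$ errors only $\calQ\to\calQ_G$ with weight growing by at most a constant factor; single-shot state-prep leaves a residual of weight $O(\#\mathrm{faults})$; and a measurement fault is a single bit-flip of the record. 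Hence the error seen by the decoder of the outgoing block's destructive measurement has weight $O(\#\mathrm{faults})$, which for at most $c\,n^{1/D}$ total faults (suitable constant $c$) stays below $d_X(\calQ)/2 = \Omega(n^{1/D})$ in expansion and below $d_Z(\calQ_G)/2 = \Omega(n^{1/D})$ in contraction; minimum-weight decoding then returns a stabilizer-equivalent correction, the inferred joint-logical outcomes are correct, and the residual left on the surviving block has weight $O(n^{1/D})$, below its distance, so no logical fault occurs and the output equals the ideal teleported state up to a small correctable Pauli --- giving tolerance of $\Omega(n^{1/D})$ adversarial errors.

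\textbf{Local-stochastic threshold and the main obstacle.} For local-stochastic noise I would compose thresholds: the single-shot state-prep gadgets have thresholds (\cite{bergamaschi2024fault}, and the metacheck preparation via the confinement of Lemma~\ref{lem:linear-confinement-expander-codes}, Theorem~\ref{thm:good-X-linear-confinement-3D-homological-product-codes}, and Section~\ref{sec:3D_soundness}); the depth-$1$ transversal CNOT and measurement trivially do; and the single-round decoder applied to a destructive transversal measurement of an expander-based HGP code with good distance and an efficient linear-time decoder has a threshold against the resulting i.i.d.-type error, so a constant-depth composition of threshold gadgets has a threshold by the standard fault-path counting argument. I expect the main obstacle to be the \emph{joint} bookkeeping inside the fault-tolerance step: one must ensure the state-preparation residual, the errors propagated through the homomorphic CNOT, and the measurement faults fit together so that the single decoding pass of the destructive measurement still succeeds and the surviving block's residual stays below distance --- which hinges on $\calQ_G$, a higher-dimensional HGP code, retaining good confinement/decodability in the relevant Pauli sector and on the single-shotness of its $\ket{\overline{+}}$ preparation, precisely the role of the new confinement result of Section~\ref{sec:3D_soundness}; carrying those confinement inputs from $D\in\{2,3\}$ to general $D$ is the remaining loose end, though only $D=2$ is needed for the universal-computation application.
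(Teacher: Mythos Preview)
Your proposal is correct and, in one respect, actually cleaner than the paper's own construction. The paper realizes both expansion and contraction via the two-outcome teleportation of Figure~\ref{fig:dimensional_expansion_circuit}, implementing the joint $\overline{Z}\,\overline{Z}$ measurement through an \emph{adapter} ancilla block $\calQ$ (Figures~\ref{fig:dimensional_expansion_circuit_2} and \ref{fig:dimensional_contraction_circuit}): the homomorphic CNOT couples $\calQ_G$ to a fresh $\calQ$-ancilla in $\ket{\overline{0}}$, a standard transversal CNOT couples the data block to the same ancilla, and the ancilla is then destructively read out in $Z$ to extract the $ZZ$ outcome, followed by a separate $X$ readout of the data block. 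You instead apply the $1$-limited homomorphic CNOT \emph{directly} between $\calQ_G$ and the data $\calQ$ and use a single destructive $Z$-measurement of $\calQ$ (respectively, a single $X$-measurement of $\calQ_G$ for contraction) to implement one-bit teleportation. Since the one-way CNOT is controlled on $\calQ_G$ and targeted on $\calQ$, measuring $Z$ on $\calQ$ after the CNOT is precisely the joint $\overline{Z}_\calQ\overline{Z}_{\calQ_G}$ measurement, and one Pauli correction suffices---so your circuit is logically equivalent and saves one ancilla code block and one state-preparation per switch. The paper's adapter-based route buys compatibility with the GPPM framework of~\cite{xu2025fast} (useful when one wants addressable partial switching or to reuse the data block), but for the bare statement of Theorem~\ref{thm:SSCS} your streamlined circuit is sufficient.

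The remainder of your argument---parameter check via K\"unneth, error propagation through the $1$-limited map, the appeal to linear soundness of $\calQ_G$ (Section~\ref{sec:3D_soundness}) for single-shot $\ket{\overline{+}}$ preparation, and the local-stochastic composition---tracks the paper's Sections~\ref{sec:adversarial_noise} and \ref{sec:local-stochastic} essentially one-for-one. Your identification of the ``main obstacle'' is accurate: the new technical content underlying the theorem is exactly the linear confinement/soundness of the $(D{+}1)$-dimensional product established in Section~\ref{sec:3D_soundness}, and the paper likewise only proves this carefully for $D=2$.
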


\begin{proof}
We prove Theorem~\ref{thm:SSCS} by constructing an explicit single-shot code-switching protocol and provide an intuitive explanation for its fault-tolerance. We defer a more rigorous proof of fault-tolerance under both the adversarial and local stochastic noise models to Sections~\ref{sec:adversarial_noise} and~\ref{sec:local-stochastic} respectively. We now state the dimensional expansion scheme and break it down into several constant-depth circuit components that utilizes known gadgets.

\subsection{Dimensional Expansion}
\label{sec:dimensional_expansion}
In this section, we describe the dimensional expansion protocol for the code $\calQ$ to the code $\calQ_G$.
We utilize the homomorphic CNOT primitive developed in Section~\ref{sec:homomorphic_cnot} as well as a single-shot state preparation scheme that is discussed in greater detail in Section~\ref{sec:single_shot_state_preparation} to perform the dimensional expansion. Our dimensional expansion scheme is effectively a logical teleportation scheme that allows us to transfer the logical state of the code $\calQ$ into the code $\calQ_G$. This addresses the open question raised in Ref.~\cite{heussen2025efficient} of how to generalize the one-way CNOT gate for color codes to high-rate codes for the purpose of code-switching via logical teleportation. The effective logical circuit for the dimensional expansion is shown in Figure~\ref{fig:dimensional_expansion_circuit}.

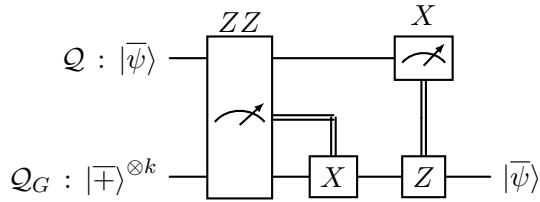
\begin{figure}[H]
	\tikzset{
noisy/.style={starburst,fill=yellow,draw=red,line
width=1pt}
}
	\centering
\begin{quantikz}
  \lstick{$\calQ\,:\,\ket{\overline{\psi}}$} & \meter[3, label style={inner sep=1pt}]{ZZ} & & \meter{X}\wire[d]{c}  \\
  \setwiretype{n} &\wire[r]{c} & & \\  
  \lstick{$\calQ_G\,:\,\ket{\overline{+}}^{\otimes k}$} & \wire[u]{c} & \gate{X}\wire[u]{c} & \gate{Z}\wire[u]{c} & \rstick{$\ket{\overline{\psi}}$}
\end{quantikz}
					\caption{Logical quantum circuit for performing logical teleportation between two code blocks $\calQ$ and $\calQ_G$ using Pauli-based measurements and gates.~\label{fig:dimensional_expansion_circuit}}
\end{figure}

To perform the $X$ measurement shown in Figure~\ref{fig:dimensional_expansion_circuit} in a single-shot and fault-tolerant manner, we can simply employ the Steane measurement technique and measure all of the physical qubits in $\calQ$ in the $X$ basis. As for the $ZZ$ measurement, we can use the Grid Pauli Product Measurement (GPPM) gadget developed in Ref.~\cite{xu2025fast} which gives us the following circuit shown in Figure~\ref{fig:dimensional_expansion_circuit_2}.

This teleportation scheme is very similar to the standard EPR teleportation scheme. From the onset, there are a few technical challenges involved in this dimensional expansion gadget that we have devised here. Typically, we teleport logical qubits between codes of the same ``shape'' or ``size'' which allows us to use standard transversal entangling gadgets. However, when we are teleporting logical qubits from a 2D HGP code to a 3D HGP code, we cannot directly use the standard transversal CNOT gate because the two codes have different ``shapes''. While it is possible to employ the homomorphic CNOT gadget, a naive implementation of the homomorphic CNOT gadget would likely require a physical CNOT circuit with CNOT depth that grows with the length of the third dimension of the 3D HGP code. The intuition behind this is that the logical $X$ operators of the 3D HGP code are membrane-like while the logical $X$ operators of the 2D HGP code are string-like. In order for the physical CNOT gadget to perform the intended logical CNOT action, we have to connect the physical qubits that lie in the support of these string-like operators in the 2D code to the physical qubits that lie in the support of these membrane-like operators in the 3D code.

To sidestep this technical issue, we utilize the transversal homomorphic CNOT gadget that we designed in Section~\ref{sec:homomorphic_cnot}. In other words, we first entangle the logical qubits in the 3D HGP code $\calQ_G$ with the logical qubits in some 2D HGP ancilla code block $\calQ$ that is initialized in the $\ket{\overline{0}}^{\otimes k}$ state using the transversal homomorphic CNOT gadget before entangling 2D ancilla code block with the original 2D HGP data code block $\calQ$ using the standard CSS transversal CNOT gate gadget. This avoids the need to perform a high-depth CNOT circuit, allowing us to preserve the transversal property and constant-time overhead for the procedure.

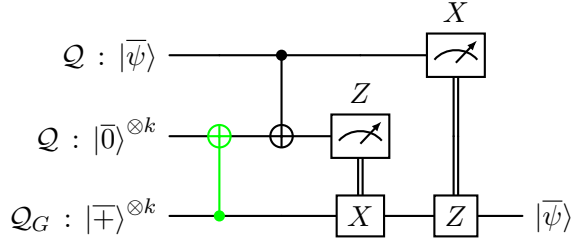
\begin{figure}[htbp]
	\tikzset{
noisy/.style={starburst,fill=yellow,draw=red,line
width=1pt}
}
	\centering
 \begin{quantikz}
  \lstick{$\calQ\,:\,\ket{\overline{\psi}}$}& & \ctrl{1} & &\meter{X}\wire[d]{c} \\
 \lstick{$\calQ\,:\,\ket{\overline{0}}^{\otimes k}$} & \targ[style={green}]{} &  \targ{} &\meter{Z}  & \setwiretype{n}\\
  \lstick{$\calQ_G\,:\,\ket{\overline{+}}^{\otimes k}$}& \ctrl[style={green}]{-1} & &\gate{X}\wire[u]{c} & \gate{Z}\wire[u]{c} & \rstick{$\ket{\overline{\psi}}$}
 \end{quantikz}
					\caption{Logical quantum circuit for performing logical teleportation from a code block $\calQ$ to another code block $\calQ_G$ using an adapter ancilla code block $\calQ$. The sequence of gates including the two CNOTs as well as the $Z$ measurement is effectively the $ZZ$ measurement. The first CNOT (colored in green) is the homomorphic CNOT described in Section~\ref{sec:homomorphic_cnot}. The last CNOT is the standard transversal logical CNOT between two 2D HGP codes.\label{fig:dimensional_expansion_circuit_2}}
\end{figure}


The only remaining thing to address is the state preparation of the ancilla code block $\calQ$ and the 3D HGP code block $\calQ_G$. To prepare the 3D HGP code block $\calQ_G$ in the $\ket{\overline{+}}^{\otimes k}$ state, we can simply perform the standard CSS state preparation protocol and then utilize the redundant $Z$ checks to ensure that the state preparation is single-shot and fault-tolerant. This is essentially the same as the state preparation protocol discussed in Ref.~\cite{hong2024single} except we do not collapse the third dimension of the code block. As for the state preparation of the 2D HGP ancilla code block $\calQ$ in the $\ket{\overline{0}}^{\otimes k}$ state, we can utilize the single-shot state preparation protocol described in Bergamaschi and Liu's recent work~\cite{bergamaschi2024fault}.

The rest of the gadgets shown in Figure~\ref{fig:dimensional_expansion_circuit_2} are just the standard transversal logical CNOT gadget and Pauli logical operations that can be implemented in the standard manner. The $X$ and $Z$ measurements can be done using the Steane measurement technique by measuring all physical qubits in the $X$ and $Z$ basis respectively. The classical post-processing of the measurement outcomes can be done using standard classical computation.

\subsection{Dimensional Contraction}
\label{sec:dimensional_contraction}
In this section, we discuss the process of collapsing the 3D HGP code $\calQ_G$ back to a 2D HGP code $\calQ$. Because our dimensional expansion scheme is essentially a logical teleportation scheme, we can leverage the exact same techniques used in the expansion process to achieve contraction. For the sake of being concrete, we provide a quantum circuit for the dimensional contraction scheme in the following figure shown in Figure~\ref{fig:dimensional_contraction_circuit}.

\begin{figure}[H]
	\centering
 \begin{quantikz}
  \lstick{$\calQ\,:\,\ket{\overline{+}}^{\otimes k}$}& & \ctrl{1} & \gate{X} & \gate{Z}\wire[d]{c} & \rstick{$\ket{\overline{\psi}}$} \\
 \lstick{$\calQ\,:\,\ket{\overline{0}}^{\otimes k}$} & \targ[style={green}]{} & \targ{} &\meter[label style={yshift=-1.2cm}]{Z}\wire[u]{c} &\setwiretype{n} &\\
  \lstick{$\calQ_G = \tilde{\calQ}\,:\,\ket{\overline{\psi}}$}& \ctrl[style={green}]{-1} & & & \meter[label style={yshift=-1.2cm}]{X}\wire[u]{c} 
 \end{quantikz}
					\caption{Logical quantum circuit for performing logical teleportation from a code block $\calQ_G$ to another code block $\calQ$ using an adapter ancilla code block $\calQ$. The sequence of gates including the two CNOTs as well as the $Z$ measurement is effectively the $ZZ$ measurement. The first CNOT (colored in green) is the homomorphic CNOT described in Section~\ref{sec:homomorphic_cnot}. The last CNOT is the standard transversal logical CNOT between two 2D HGP codes.\label{fig:dimensional_contraction_circuit}}
\end{figure}
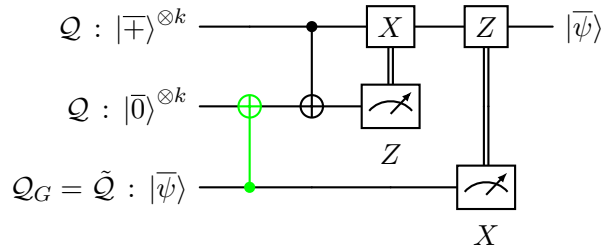

For the fault-tolerance of the code-switching protocol, there are a few different possible error locations. The state preparation step is fault-tolerant by the analysis done in Refs.~\cite{hong2024single} and \cite{bergamaschi2024fault}. 
The transversal CNOT gates including the homomorphic CNOT gate are clearly fault-tolerant. The transversal $X$ and $Z$ measurements that may be done using the Steane measurement technique are also fault-tolerant. Thus, the entire code-switching protocol is fault-tolerant. We defer a more rigorous proof of fault-tolerance under both the adversarial and local stochastic noise models to Sections~\ref{sec:adversarial_noise} and~\ref{sec:local-stochastic} respectively. 

\end{proof}

\begin{remark}
The single-shot code-switching protocol described here can be understood as a generalization of the single-shot lattice surgery protocol to high-rate codes. Suppose we are interested in performing single-shot lattice surgery to merge a 2D HGP code $\calQ$ with another 3D HGP code $\calQ_G$. That can be achieved via teleporting the logical qubits in $\calQ$ to some merged 3D HGP code $\tilde{\calQ}_G$ that is obtained from merging $\calQ_G$ with another copy of $\calQ$. The logical teleportation can be done using the exact same circuit as shown in Figure~\ref{fig:dimensional_expansion_circuit_2}. The splitting step can also be understood as a logical teleportation from the merged 3D HGP code $\tilde{\calQ}_G$ to the original 2D HGP code $\calQ$. This single-shot lattice surgery can be done between higher-dimensional HGP codes as long as the two codes have soundness properties in the appropriate dimensions and basis. \SAM{Polish up this remark}
\end{remark}
\section{Dynamical Fault-Tolerant Universal Computation}
\label{sec:dynamical}

In this section, we present a dynamical fault-tolerant universal computation scheme based on recent HGP constructions of QLDPC codes with transversal CCZ gates~\cite{golowich2025quantum, lin2024transversal, breuckmann2024cups,zhu2025topological, zhu2025transversal}.
Our dynamical scheme involves a code-switching protocol that involves dimensional jumps between two different high-rate QLDPC codes to sidestep the Eastin-Knill theorem~\cite{eastin2009restrictions} and achieve fault-tolerant universal computation.
We describe our scheme in detail for the case of hypergraph product codes, but it can very likely be generalized to other product codes that include lifted product codes and balanced product codes.
We break down our scheme into two components: the first component is a dimensional expansion of an $D$-dimensional HGP code to a $(D+1)$-dimensional HGP code, and the second component is a dimensional contraction of an $(D+1)$-dimensional HGP code to a $D$-dimensional HGP code.
While our scheme holds for any integer $D \geq 2$, we will focus on the case of $D=2$ for simplicity as well as to highlight the change between the Clifford and non-Clifford logical gates in the two codes.

Before we describe the explicit procedures for the dimensional expansion and contraction, we first describe the general framework of our scheme. For the purpose of all Clifford gates and error correction, our code will be a 2D HGP code $\calQ$ with parameters $\llbracket n, k, d_X, d_Z \rrbracket$ constructed using classical codes with the Sipser-Spielman construction. Our construction will be very similar to the one stated by Golowich and Lin in Ref.~\cite{golowich2025quantum} and Zhu in Refs.~\cite{zhu2025topological, zhu2025transversal}. By Lemma~\ref{lem:linear-confinement-expander-codes}, $\calQ$ is amenable to single-shot error correction. With a combination of the fold-transversal logical gates~\cite{breuckmann2024fold, quintavalle2023partitioning} and the parallel Pauli-measurement-based logical gates from Ref.~\cite{xu2025fast}, we can perform addressable Clifford gates in a fault-tolerant manner.
When we want to perform logical non-Clifford gates, we perform a dimensional expansion of the code $\calQ$ by teleporting the logical qubits of $\calQ$ into a 3D HGP code $\calQ_G$ with parameters $\llbracket n_G, k_G, d_{X,G} \cdot n_C, d_{Z,G} \rrbracket$ for integer $n_C = \min(d_X, d_Z)$. The code $\calQ_G$ has the same 2D form as $\calQ$ but its third dimension is given by the classical code $\calC_G$ i.e. the generalized repetition code. Note that $\calQ_G$ is effectively an instance of the 3D HGP code described in Section~\ref{sec:homomorphic_cnot}. We show that a careful choice for the classical codes that are used to construct $\calQ_G$ can make it amenable to inter-block transversal logical CCZ gates. We note that the dimensional expansion step is a single-shot procedure that does not require any additional rounds of syndrome measurements. After we finish performing the logical non-Clifford gates, we perform a dimensional contraction of the code $\calQ_G$ back to $\calQ$ to keep the ``non-constant'' spatial overhead limited to the time steps where we perform the logical non-Clifford gates. The dimensional contraction step is also a single-shot procedure that does not require any additional rounds of syndrome measurements. Lastly, when we want to perform state preparation, we either use the single-shot state preparation scheme developed by Bergamaschi and Liu in Ref.~\cite{bergamaschi2024fault} to prepare logical states directly in the 2D HGP code $\calQ$, or the scheme developed by Hong in Ref.~\cite{hong2024single} to prepare logical states in the 3D HGP code $\calQ_G$ depending on the dimensionality of the code block in which we hope to prepare the logical state.
With this framework, we can perform a fault-tolerant universal computation scheme with single-shot error correction and state preparation with constant spatial overhead except for the time steps where we perform the logical non-Clifford gates and the state preparation. Most importantly, every gadget in our scheme is single-shot and takes constant time to implement. We present a schematic overview of our scheme in Figure~\ref{fig:dimensional_expansion_contraction_overview}. We now state our main theorem for this section.

\begin{theorem}[Single-Shot, Universal Protocol via Code-switching]
\label{thm:universal_FTQC}
Let $\calQ$ be an $\llbracket n, \Theta(n), \Omega(\sqrt{n})\rrbracket$ 2D HGP code constructed from classical Sipser-Spielman codes. 
Let $\calQ_G$ be a 3D HGP code with parameters $\llbracket \Theta(n^{3/2}), \Theta(n), \Omega(\sqrt{n})\rrbracket$ that has a chain complex $\calA_{\calQ_G}$ that is obtained from the tensor product between the chain complex $\calA_{\calQ}$ that corresponds to the HGP code $\calQ$ and a 2-term chain complex $\calA_{\calC_{G}}$ that corresponds to a classical Tanner code $\calC_{G} = T(\calG, \calC_0)$ where $\calG$ is a $\Delta$-regular spectral expander and $\calC_0$ is a $[\Delta, 1, \Delta]$ classical repetition local code for some constants $\Delta, \gamma, \alpha > 0$. Then, there exists a universal fault-tolerant quantum computation protocol implements a universal gate set using only gadgets that satisfy single-shot universality. The protocol can tolerate $\Omega(\sqrt{n})$ adversarial faults and also exhibits a threshold under local-stochastic noise. \SAM{Add local-stochastic noise model if we can show that it works.}
\end{theorem}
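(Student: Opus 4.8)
The plan is to assemble the protocol from the gadgets established in the preceding sections, verify gadget by gadget that each is single-shot and constant-depth and fault-tolerant, and then invoke computational universality to conclude. \textbf{Setup.} Take $\calQ$ to be a 2D HGP code built from two classical Sipser-Spielman codes on $\Delta$-regular spectral expanders, so $\calQ$ has parameters $\llbracket n, \Theta(n), \Omega(\sqrt n)\rrbracket$ and, by Lemma~\ref{lem:linear-confinement-expander-codes}, good linear confinement — hence single-shot error correction in both Pauli bases under both noise models. Take $\calQ_G$ to be the code of Definition~\ref{def:quantum_codes_QG_QS}, namely $\calA_{\calQ_G} = \calA_{\calQ} \otimes \calA_{\calC_G}$ with $\calC_G = T(\calG,\calC_0)$ the generalized repetition code; its parameters $\llbracket \Theta(n^{3/2}), \Theta(n), \Omega(\sqrt n)\rrbracket$ follow from the K\"unneth formula, the 3D HGP parameter formulas~\eqref{eqn:homological_classical_code_params}, and Lemma~\ref{lemma:tanner_codewords} (which gives $\dim\calC_G = 1$ and $d_{\calC_G} = \Theta(\sqrt n)$).

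\textbf{Gadgets and universality.} The protocol uses: single-shot EC in $\calQ$ (Lemma~\ref{lem:linear-confinement-expander-codes}); single-shot preparation of $\ket{\overline 0}$ and $\ket{\overline +}$ in $\calQ$ via Bergamaschi and Liu~\cite{bergamaschi2024fault} at a cost of $\Theta(n^{3/2})$ ancillas, and of $\ket{\overline +}$ in $\calQ_G$ using the $Z$-metachecks furnished by the tensor factor $\calC_G$ exactly as in Section~\ref{sec:static} and~\cite{hong2024single} (the other basis following from a teleported Hadamard); addressable Clifford gates in $\calQ$ via fold-transversal gates~\cite{breuckmann2024fold, quintavalle2023partitioning}, automorphism gadgets~\cite{berthusen2025automorphism}, and GPPM Pauli-product measurements~\cite{xu2025fast}; single-shot code-switching $\calQ \leftrightarrow \calQ_G$ via Theorem~\ref{thm:SSCS} with $D = 2$, realized by the logical-teleportation circuits of Figures~\ref{fig:dimensional_expansion_circuit_2} and~\ref{fig:dimensional_contraction_circuit} using the $1$-limited homomorphic CNOT of Proposition~\ref{prop:homomorphic_cnot_from_QG_to_QS}; and a transversal inter-block logical CCZ on three copies of $\calQ_G$, obtained by choosing the Sipser-Spielman local codes pairwise dual in each physical dimension so that the multiplication property of Definition~\ref{def:multiplication-property} holds — automatic here because one tensor factor is a repetition code (Theorem~\ref{thm:satisfaction_of_local_multiplication_property}) — whence Theorem~\ref{thm:local-multiplication-ccz-property} yields the logical CCZ. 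Logical $H$ in $\calQ$ together with logical $CCZ$ across $\calQ_G$-blocks is computationally universal by Fact~\ref{fact:computationally_universal}, with complex unitaries simulated by splitting real and imaginary parts as in Section~\ref{sec:static}; adjoined to the Clifford gates and transversal Pauli measurements this is a universal fault-tolerant gate set, and since every constituent is single-shot and constant-depth this establishes single-shot universality.

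\textbf{Fault-tolerance.} Each gadget is a constant-depth Clifford-plus-measurement circuit (the non-Clifford round being a single layer of physical CCZ's), so a weight-$w$ adversarial fault within a gadget propagates to a Pauli of weight $O(w)$. In $\calQ$, confinement (Lemma~\ref{lem:linear-confinement-expander-codes}) reduces a single noisy syndrome round to a residual of weight $O(w)$ provided $w = O(\sqrt n)$, and the teleportation-based switch preserves this because the homomorphic CNOT is $1$-limited and GPPM is single-shot (Theorem~\ref{thm:SSCS}). In $\calQ_G$ only $X$-confinement is single-shot (Theorem~\ref{thm:good-X-linear-confinement-3D-homological-product-codes} in the adversarial case, its stochastic version proved in Section~\ref{sec:3D_soundness}), so $Z$-type errors accrued during the non-Clifford round are deferred and corrected only upon contraction back to $\calQ$; since the 3D sojourn is constant-depth this accrual is $O(1)$ per round and is absorbed by the $\Omega(\sqrt n)$ distance budget. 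Composing $O(1)$ gadgets per logical cycle and applying the standard fault-tolerance formalism then yields tolerance of $\Omega(\sqrt n)$ adversarial faults, which I would spell out in Section~\ref{sec:adversarial_noise}. For local-stochastic noise each gadget individually has a threshold — 2D expander codes~\cite{fawzi2018efficient, campbell2019theory}, the preparation gadgets of~\cite{bergamaschi2024fault, hong2024single}, and the constant-depth switching circuit with its local classical decoder — and combining these with the soundness-implies-confinement chain (Lemma~\ref{lem:soundness implies confinement}) and the 3D confinement result of Section~\ref{sec:3D_soundness} would yield a threshold for the composed protocol, carried out in Section~\ref{sec:local-stochastic}.

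\textbf{Main obstacle.} I expect two parts to be technically substantive. First, the bookkeeping across the dimensional jump: one must show that the $Z$-type errors left uncorrected while the logical information resides in $\calQ_G$ stay within the distance budget when the protocol is composed over many logical cycles, and that the homomorphic-CNOT teleportation never converts low-weight faults into high-weight logical errors despite the mismatch between the string-like $X$-operators of $\calQ$ and the membrane-like $X$-operators of $\calQ_G$. Second, establishing linear ($X$-)confinement for the 3D HGP family built from spectral-expander Tanner codes rather than the lossless expanders of~\cite{leverrier2015quantum}, which is carried out in Section~\ref{sec:3D_soundness} and underpins both the single-shot state preparation in $\calQ_G$ and the local-stochastic threshold.
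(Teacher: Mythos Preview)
Your proposal is correct and follows essentially the same approach as the paper: decompose into the individual gadgets (single-shot preparation, single-shot EC, Cliffords via fold-transversal/automorphisms/GPPM, code-switching via Theorem~\ref{thm:SSCS}, and the transversal CCZ via Theorem~\ref{thm:satisfaction_of_local_multiplication_property}), verify each is constant-depth and fault-tolerant, and compose. The paper's proof is structured exactly this way, with the fault-tolerance analysis deferred to Sections~\ref{sec:adversarial_noise} and~\ref{sec:local-stochastic} as you anticipate.

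One substantive correction: you invoke Lemma~\ref{lem:linear-confinement-expander-codes} for the linear confinement of $\calQ$, but that lemma is the Leverrier--Tillich--Z\'emor result for HGP codes built from \emph{lossless} expanders, whereas $\calQ$ here is built from Sipser--Spielman codes on \emph{spectral} expanders with random local codes. The paper handles this by stating and proving a separate result (Theorem~\ref{thm:linear-confinement-sipser-spielman-2D-HGP}), which derives linear confinement from the product-expansion machinery of~\cite{dinur2023good} rather than from unique-neighbor expansion. You correctly flag the analogous 3D issue in your ``Main obstacle'' paragraph, but the 2D case needs the same care --- the two expander notions are not interchangeable, and the confinement argument is genuinely different in each setting.
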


\begin{proof}
  We break down the proof into several subsections that addresses each individual gadget. Similar to the proof of Theorem~\ref{thm:SSCS}, we provide some intuition for the fault-tolerance of the protocol and defer a more rigorous analysis of the fault-tolerance to Section~\ref{sec:adversarial_noise}. 

\subsection{Single-Shot State Preparation}
\label{sec:single_shot_state_preparation}
To initialize logical states $\ket{\overline{0}}$ and $\ket{\overline{+}}$ in the 2D HGP code $\calQ$, we simply employ the single-shot state preparation scheme developed by Bergamaschi and Liu in Ref.~\cite{bergamaschi2024fault}. If we are interested in preparing logical states in the 3D HGP code $\calQ_G$, we can either use the same scheme by Bergamaschi and Liu or the state preparation scheme developed by Hong in Ref.~\cite{hong2024single} depending on the basis in which the 3D HGP code $\calQ_G$ is sound. If the 3D HGP code $\calQ_G$ is sound in the $Z$ ($X$) basis, we can use Hong's scheme to prepare $\ket{\overline{+}}$ ($\ket{\overline{0}}$) directly in $\calQ_G$. Otherwise, we can utilize the other single-shot state preparation scheme by Bergamaschi and Liu~\cite{bergamaschi2024fault} to prepare the desired logical state in $\calQ_G$.



\subsection{Single-Shot Error Correction}
\label{sec:single_shot_error_correction}
HGP codes built from classical expander codes are known to support single-shot error correction. Specifically, 2D HGP codes built from random lossless expanders admit single-shot correction under both adversarial and local-stochastic noise models \cite{fawzi2018efficient, campbell2019theory, quintavalle2021single, Hong_2025_thermal} with high probability. We now state a theorem regarding the single-shot correction capability for the 2D HGP codes that we use for our protocol.

\begin{theorem}[Linear confinement of 2D HGP codes from Sipser-Spielman codes~{\cite[Restatement of result]{dinur2023good}}]
\label{thm:linear-confinement-sipser-spielman-2D-HGP}
Let $\calQ$ be a distance $d$ 2D HGP code constructed from two classical Sipser-Spielman codes built from spectral expanders and random local codes. Then, with high probability, $\calQ$ has $(O(d), f)$-confinement such that $f(x) = \Theta(x)$ in both Pauli bases.
\end{theorem}
\begin{proof}
    The proof follows directly from the results in Ref.~\cite{dinur2023good}. The main idea is that the product of two random local codes will satisfy the product expansion property with high probability. Satisfying the product expansion property on good spectral expanders gives good boundary expansion, which in turn implies small-set coboundary expansion and linear confinement in both Pauli bases for the 2D HGP code.
\end{proof}

This shows that the 2D HGP codes built from the Sipser-Spielman codes discussed above exhibit linear confinement with high probability, which is a sufficient condition for single-shot correction under adversarial noise and local-stochastic noise models. 
As for higher dimensional HGP codes constructed with a base 2D HGP code built from such Sipser-Spielman codes, it is still very much an open question. Campbell, Quintavalle, and co-authors \cite{quintavalle2021single} showed that higher-dimensional HGP codes from random lossless expanders achieve single-shot correction under the adversarial model in at least one Pauli basis. 

For our 3D HGP codes, although single-shot correction may be available only in one Pauli basis under adversarial noise, we can postpone correction in the other basis until we return to a 2D HGP where both bases support single-shot procedures. This is adequate for our fault-tolerance analysis because all our gadgets run in constant depth, so error growth can be bounded. Moreover, we occupy the 3D code only briefly—to execute the transversal CCZ—before switching back to 2D, so additional errors in the non-corrected basis are not expected to accumulate significantly.

\subsection{Single-Shot Logical Measurement}
\label{sec:single_shot_logical_measurement}
Logical Pauli measurements on HGP codes can be executed in one shot, i.e., with a single round of physical readout rather than repeated syndrome cycles. Using the grid Pauli product measurement (GPPM) framework of Xu et al.~\cite{xu2025fast}, we first target the desired logical by coupling the 2D data HGP code block to an ancilla 2D HGP code block that is punctured and/or augmented to expose the right operators. The ancilla code block can be prepared with the Bergamaschi and Liu scheme. This coupling is implemented via homomorphic CNOTs, which map the logical Pauli on the data into a Pauli on the ancilla. We then perform transversal single-qubit measurements of the ancilla in a Steane-style scheme. Finally, a lightweight classical postprocessing step—accounting for ancilla stabilizers and known byproducts—combines the raw outcomes to infer the logical eigenvalue fault-tolerantly. The procedure is constant-depth, avoids multiple measurement rounds, and isolates the targeted logical without disturbing others, making it compatible with standard HGP decoding pipelines. If we are planning to measure all logical qubits, we can simply perform transversal single-qubit measurements of the data HGP code block in the desired basis and perform classical postprocessing to infer the logical eigenvalues fault-tolerantly.

\subsection{Logical Clifford Gates}
\label{sec:single_shot_logical_clifford_gates}
2D HGP codes admit several complementary strategies for realizing logical Clifford gates, many of which achieve constant depth. One route is the fold-transversal construction of Breuckmann and Burton~\cite{breuckmann2024fold}, which—together with recent partitioning techniques~\cite{quintavalle2023partitioning}—yields depth-bounded Clifford implementations. A second option is code automorphisms: permutations of qubits and checks that act as logical Cliffords~\cite{berthusen2025automorphism}. A third, measurement-driven route is Pauli-based computation~\cite{bravyi2016trading}. Here we employ the GPPM framework of Xu et al.~\cite{xu2025fast}: the data HGP block is entangled with a punctured and/or augmented HGP ancilla via homomorphic CNOTs, enabling targeted logical Pauli measurements through transversal readout of the ancilla. This allows us to perform addressable logical Clifford gates in a constant-depth circuit when coupled with fold-transversal gates and logical automorphisms. In the case when certain Clifford gates are not achievable in the 2D HGP code $\calQ$, we can temporarily switch to the 3D HGP code $\calQ_G$ to use the logical CCZ gate to descend to the CZ gate and then obtain the logical Hadamard which is sufficient to achieve universal quantum computation along with the logical CCZ gate as shown in Section~\ref{sec:static}. Alternatively, we can use the homomorphic CNOTs discussed in Remark~\ref{remark:constant-rate-homomorphic-cnot} to teleport the logical qubits from our 2D HGP code $\calQ$ to another 2D HGP code that has the necessary symmetries to implement the desired logical Clifford gates. For example, we can use the transversal homomorphic CNOT gadget to teleport the logical qubits from our 2D HGP code $\calQ = \calC_1 \times \calC_2$ to a 3D HGP code $\calC_1 \times \calC_2 \times \calC_2$ before teleporting the logical qubits again with another round of homomorphic CNOT to a 2D HGP code $\calC_2 \times \calC_2$ that has the symmetries to implement the desired logical Hadamard gate. Subsequently, we can teleport the logical qubits back to our original 2D HGP code $\calQ$ with two more rounds of homomorphic CNOTs. This addressable Clifford approach is explored in greater detail in Refs.~\cite{xu2025batched} and \cite{golowich2025constant}.

\subsection{Logical CCZ Gates}
\label{sec:single_shot_logical_ccz_gates}
In this subsection, we describe how to choose the right spectral expanders and local codes to construct 3D HGP codes that allow us to perform logical CCZ gates across them. The construction relies on Theorem~\ref{thm:local-multiplication-ccz-property}, which imposes the local multiplication property on the corresponding local codes of three blocks of 3D HGP codes. We shall use the Tanner code notation of $T(\calG, \calC)$ to denote the Tanner code constructed from a spectral expander $\calG$ and a classical local code $\calC$. Recall that we denote the classical repetition code with a redundant check by $\calC_0$.

\begin{definition}\label{def:simple_construction_3D_HGP_CCZ}
Let $\calQ_1^{3D}$, $\calQ_2^{3D}$, and $\calQ_3^{3D}$ be three 3D HGP codes constructed as follows:
\begin{align*}
  \calQ_1^{3D} &= T(\calG_1, \calC_1) \otimes T(\calG_2, \calC_2^\perp) \otimes T(\calG_3, \calC_0), \\
  \calQ_2^{3D} &= T(\calG_1, \calC_0) \otimes T(\calG_2, \calC_2) \otimes T(\calG_3, \calC_3^\perp), \\
  \calQ_3^{3D} &= T(\calG_1, \calC_1^\perp) \otimes T(\calG_2, \calC_0) \otimes T(\calG_3, \calC_3),
\end{align*}
where $\calC_i^\perp$ denotes the dual code of $\calC_i$ for $i=1,2,3$. 
\end{definition}
Note that the graphs $\calG_1$, $\calG_2$, and $\calG_3$ and the classical codes $\calC_1$, $\calC_2$, and $\calC_3$ need not be the same.
In addition, each $T(\calG_i, \cdot)$ for $i=1,2,3$ can be understood as the skeleton for each of the three geometric dimensions. In other words, we can map $i = 1, 2, 3$ to the $X, Y, Z$ directions, respectively. Then, $\calQ_1^{3D}$ has $T(\calG_1, \calC_1)$, $T(\calG_2, \calC_2^\perp)$, and $T(\calG_3, \calC_0)$ oriented in the $X, Y$, and $Z$ directions respectively. All three codes $\calQ_1^{3D}$, $\calQ_2^{3D}$, and $\calQ_3^{3D}$ are different instances of the 3D HGP code $\calQ_G$. Since dimensional expansion grows a 2D HGP code into a 3D HGP code with $\calC_0$ in the new dimension, we can imagine the string-like $X$ logical operator of the 2D HGP code in the $X-Y$ plane growing into a membrane-like $X$ logical operator in the $Z$ direction. The same interpretation can be applied to $\calQ_2^{3D}$ and $\calQ_3^{3D}$ where their $X$ logical operators grow from the $Y-Z$ and $X-Z$ planes in the $X$ and $Y$ directions respectively. One can intuitively see how the three membrane-like $X$ logical operators of $\calQ_1^{3D}$, $\calQ_2^{3D}$, and $\calQ_3^{3D}$ can intersect at a single point, which is the key requirement for implementing a transversal CCZ gate. We provide a diagram to illustrate this idea in Figure~\ref{fig:ccz_construction}.

\begin{figure}[htbp]
\centering

\begin{subfigure}[b]{0.48\textwidth}
\centering
\begin{tikzpicture}[scale=1.2]
    \coordinate (O) at (0,0,0);
    \coordinate (A) at (3,0,0);
    \coordinate (B) at (3,3,0);
    \coordinate (C) at (0,3,0);
    \coordinate (D) at (0,0,3);
    \coordinate (E) at (3,0,3);
    \coordinate (F) at (3,3,3);
    \coordinate (G) at (0,3,3);

    
    \foreach \x in {0,0.5,...,3} {
        \draw[gray!50,thin] (\x,0,0) -- (\x,3,0);
    }
    \foreach \y in {0,0.5,...,3} {
        \draw[gray!50,thin] (0,\y,0) -- (3,\y,0);
    }
    \draw[blue] (O) -- (A) -- (B) -- (C) -- cycle;
    \draw[blue, ultra thick] (0.5, 1.5, 0) -- (2.5, 1.5, 0);
    \node[blue, above] at (1.5, 1.5,0) {$\overline{X}_2$};
    \node[blue,above] at (1.5,3,0) {$\mathcal{Q}_2^{2D}$ (Y-Z)};
    
    \foreach \y in {0,0.5,...,3} {
        \draw[gray!50,thin] (0,\y,0) -- (0,\y,3);
    }
    \foreach \z in {0,0.5,...,3} {
        \draw[gray!50,thin] (0,0,\z) -- (0,3,\z);
    }
    \draw[red] (O) -- (C) -- (G) -- (D) -- cycle;
    \draw[red, ultra thick] (0, 0.5, 1.5) -- (0, 2.5, 1.5);
    \node[red, right] at (0, 1.5,1.5) {$\overline{X}_3$};
    \node[red,above] at (-0.75,3,1.5) {$\mathcal{Q}_3^{2D}$ (X-Z)};
    
    \foreach \x in {0,0.5,...,3} {
        \draw[gray!50,thin] (\x,0,0) -- (\x,0,3);
    }
    \foreach \z in {0,0.5,...,3} {
        \draw[gray!50,thin] (0,0,\z) -- (3,0,\z);
    }
    \draw[green] (O) -- (A) -- (E) -- (D) -- cycle;
    \draw[green, ultra thick] (1.5, 0, 0.5) -- (1.5, 0, 2.5);
    \node[green, right] at (1.75, 0,1.5) {$\overline{X}_1$};
    \node[green,below right] at (2.5,-0.5,1) {$\mathcal{Q}_1^{2D}$ (X-Y)};
    
    \draw[->] (O) -- (3.5,0,0) node[right] {Y};
    \draw[->] (O) -- (0,3.5,0) node[above] {Z};
    \draw[->] (O) -- (0,0,3.5) node[left] {X};
    
\end{tikzpicture}
\caption{Initial 2D HGP codes on orthogonal faces}
\label{fig:2d_codes}
\end{subfigure}
\hfill
\begin{subfigure}[b]{0.48\textwidth}
\centering
\begin{tikzpicture}[scale=1.2]
    \coordinate (O) at (0,0,0);
    \coordinate (A) at (3,0,0);
    \coordinate (B) at (3,3,0);
    \coordinate (C) at (0,3,0);
    \coordinate (D) at (0,0,3);
    \coordinate (E) at (3,0,3);
    \coordinate (F) at (3,3,3);
    \coordinate (G) at (0,3,3);
    
    \draw[gray,thin] (O) -- (A) -- (B) -- (C) -- cycle;
    \draw[gray,thin] (D) -- (E) -- (F) -- (G) -- cycle;
    \draw[gray,thin] (O) -- (D);
    \draw[gray,thin] (A) -- (E);
    \draw[gray,thin] (B) -- (F);
    \draw[gray,thin] (C) -- (G);

    \node[green,below right] at (2.5,-0.5,1) {$\mathcal{Q}_1^{3D}$};
    \node[red,above] at (-0.75,3,1.5) {$\mathcal{Q}_3^{3D}$};
    \node[blue,above] at (1.5,3,0) {$\mathcal{Q}_2^{3D}$};

    \fill[blue!40,opacity=0.75] (0.5,1.5,0) -- (2.5,1.5,0) -- (2.5,1.5,3) -- (0.5,1.5,3) -- cycle;
    \draw[blue,ultra thick] (0.5,1.5,0) -- (2.5,1.5,0);
    \node[blue, right] at (2.4,1.5,0) {$\overline{X}_2$};
    \fill[red!40,opacity=0.5] (0,0.5,1.5) -- (3,0.5,1.5) -- (3,2.5,1.5) -- (0,2.5,1.5) -- cycle;
    \draw[red,ultra thick] (0,0.5,1.5) -- (0,2.5,1.5);
    \node[red,above] at (0.2,2.5,1.5) {$\overline{X}_3$};
    \fill[green!40,opacity=0.5] (1.5,0,0.5) -- (1.5,3,0.5) -- (1.5,3,2.5) -- (1.5,0,2.5) -- cycle;
    \draw[green,ultra thick] (1.5,0,0.5) -- (1.5,0,2.5);
    \node[green,right] at (1.75,0,1.75) {$\overline{X}_1$};
    \fill[yellow,opacity=0.9] (1.5,1.5,1.5) circle (3pt);
    \draw[black,thick] (1.5,1.5,1.5) circle (3pt);

    \draw[gray, thin, dashed] (1.5, 3, 1.5) -- (1.5, 0, 1.5);
    \draw[gray, thin, dashed] (0, 1.5, 1.5) -- (3, 1.5, 1.5);
    \draw[gray, thin, dashed] (1.5, 1.5, 0) -- (1.5, 1.5,3);
    \draw[->] (O) -- (3.5,0,0) node[right] {Y};
    \draw[->] (O) -- (0,3.5,0) node[above] {Z};
    \draw[->] (O) -- (0,0,3.5) node[left] {X};
\end{tikzpicture}
\caption{Three perfectly superimposed 3D HGP codes with intersecting membrane $X$ logical operators}
\label{fig:3d_codes}
\end{subfigure}

\begin{subfigure}[b]{0.9\textwidth}
\centering
\begin{tikzpicture}[scale=1.0]
    \begin{scope}[shift={(-4,-2)}]
        \node[above,font=\small] at (1.25,4,0) {\textbf{2D HGP Code}};
        
         \coordinate (O) at (0,0,0);
        \coordinate (A) at (3,0,0);
        \coordinate (B) at (3,3,0);
        \coordinate (C) at (0,3,0);
        \coordinate (D) at (0,0,3);
        \coordinate (E) at (3,0,3);
        \coordinate (F) at (3,3,3);
        \coordinate (G) at (0,3,3);
        
        
        \foreach \y in {0,0.5,...,3} {
            \draw[gray!50,thin] (0,\y,0) -- (0,\y,3);
        }
        \foreach \z in {0,0.5,...,3} {
            \draw[gray!50,thin] (0,0,\z) -- (0,3,\z);
        }
        \draw[red] (O) -- (C) -- (G) -- (D) -- cycle;
        \draw[red, ultra thick] (0, 0.5, 1.5) -- (0, 2.5, 1.5);
        \node[red, left] at (0.05, 1.5,1.5) {{\footnotesize $\overline{X}_3$}};
        \node[red,above] at (-0.75,3,1.5) {$\mathcal{Q}_3^{2D}$ (X-Z)};
        
        
    \draw[->] (O) -- (3.5,0,0) node[right] {Y};
    \draw[->] (O) -- (0,3.5,0) node[above] {Z};
    \draw[->] (O) -- (0,0,3.5) node[left] {X};
    \end{scope}
    
    \draw[->,ultra thick] (-1.5,0,0) -- (1.5,0,0);
    \node[above] at (0,0.3,0) {\textbf{Code-switch}};
    
    \begin{scope}[shift={(4,-2)}]
        \node[above,font=\small] at (1.25,4,0) {\textbf{3D HGP Code}};
        
         \coordinate (O) at (0,0,0);
        \coordinate (A) at (3,0,0);
        \coordinate (B) at (3,3,0);
        \coordinate (C) at (0,3,0);
        \coordinate (D) at (0,0,3);
        \coordinate (E) at (3,0,3);
        \coordinate (F) at (3,3,3);
        \coordinate (G) at (0,3,3);

        \foreach \x in {0,0.5,...,3} {
          \foreach \y in {0,0.5,...,3} {
              \draw[gray!50,thin] (\x,\y,0) -- (\x,\y,3);
          }
          \foreach \z in {0,0.5,...,3} {
              \draw[gray!50,thin] (\x,0,\z) -- (\x,3,\z);
          }
        }

        \foreach \y in {0, 0.5,...,3} {
          \foreach \z in {0,0.5,...,3} {
              \draw[gray!50,thin] (0,\y,\z) -- (3,\y,\z);
          }
        }

        \draw[red] (O) -- (A) -- (B) -- (C) -- cycle;
        \draw[red] (D) -- (E) -- (F) -- (G) -- cycle;
        \draw[red] (O) -- (D);
        \draw[red] (A) -- (E);
        \draw[red] (B) -- (F);
        \draw[red] (C) -- (G);

    \fill[red!40,opacity=0.5] (0,0.5,1.5) -- (3,0.5,1.5) -- (3,2.5,1.5) -- (0,2.5,1.5) -- cycle;
    \draw[red,ultra thick] (0,0.5,1.5) -- (0,2.5,1.5);
    \node[red, left] at (0.05, 1.5,1.5) {{\footnotesize $\overline{X}_3$}};           
    \node[red,above] at (-0.75,3,1.5) {$\mathcal{Q}_3^{3D}$};
     
    \draw[red,<->] (0, 3.2, 0) -- (3, 3.2, 0);
    \node[red, above] at (1.5, 3.2, 0) {\footnotesize $T(\calG_2, \calC_0)$};
        
    \draw[->] (O) -- (3.5,0,0) node[right] {Y};
    \draw[->] (O) -- (0,3.5,0) node[above] {Z};
    \draw[->] (O) -- (0,0,3.5) node[left] {X};
    \end{scope}
\end{tikzpicture}
\caption{Single-shot code-switching from 2D to 3D HGP code}
\label{fig:transformation}
\end{subfigure}

\caption{Construction of 3D hypergraph product codes for transversal logical CCZ gates. (a) Three 2D HGP codes $\mathcal{Q}_1^{2D}$ (X-Y), $\mathcal{Q}_2^{2D}$ (Y-Z), and $\mathcal{Q}_3^{2D}$ (X-Z) positioned on orthogonal faces of a cube. Note that $\mathcal{Q}_1^{2D}$ (X-Y) is simply $\mathcal{Q}_1^{3D}$ but without the generalized repetition code component in the $Z$ direction. The other 2D HGP codes can be defined in the same way. (b) After code-switching, each 2D HGP code grows into a 3D HGP code. The three 3D HGP codes can be perfectly superimposed on each other. The now membrane-like logical $X$ operators $\overline{X}_1$, $\overline{X}_2$, and $\overline{X}_3$ intersect at a single point (yellow), enabling transversal CCZ implementation. (c) Schematic of the single-shot code-switching from 2D to 3D HGP code for $\calQ_3$. The generalized repetition code component $T(\calG_2, \calC_0)$ in the $Y$ direction is indicated. The $X$ logical operator grows from a string to a membrane. The same procedure applies to $\calQ_1$ and $\calQ_2$.}
\label{fig:ccz_construction}
\end{figure}
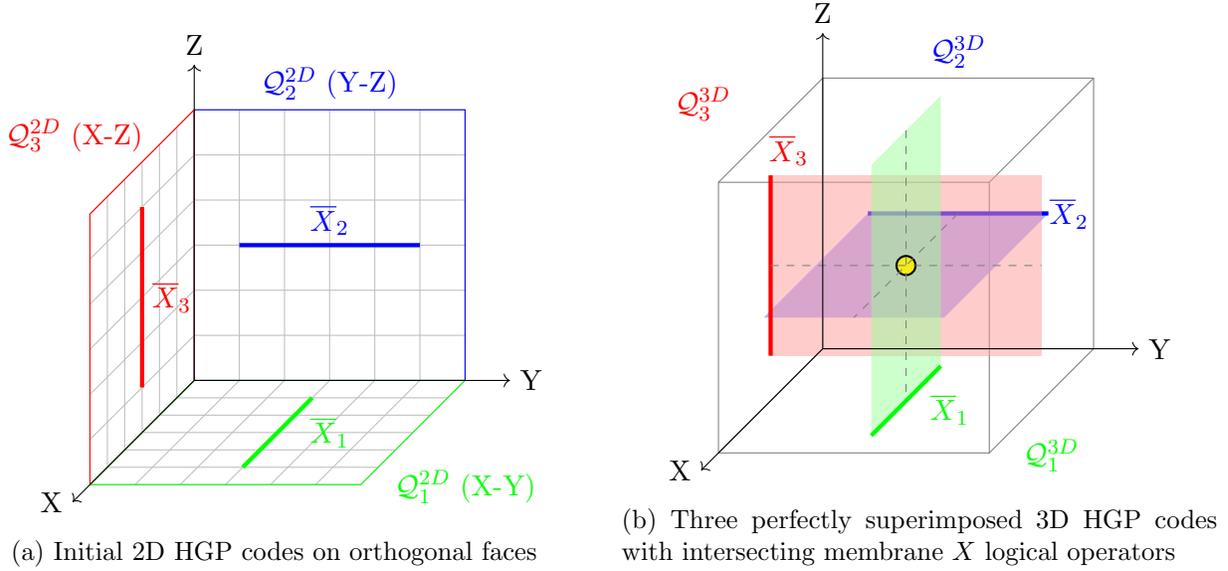
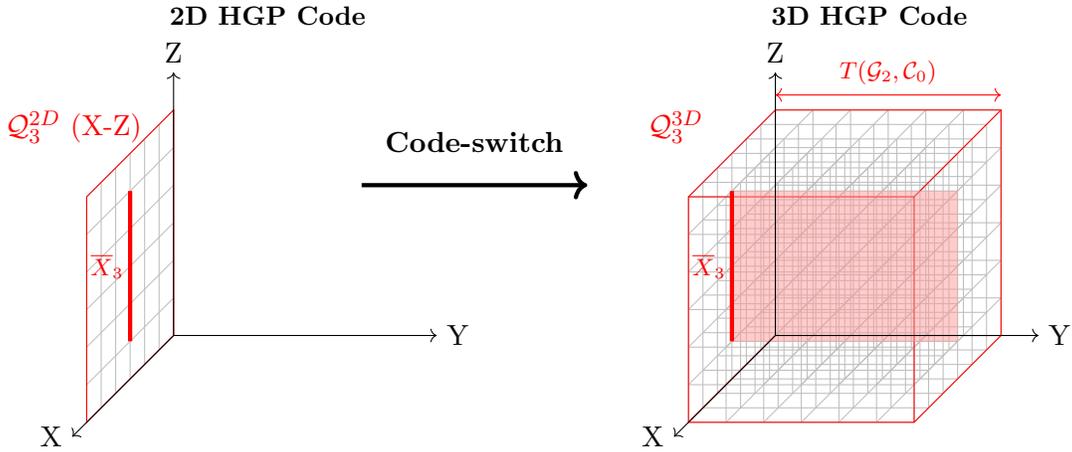

Now, we state the following lemma that shows how the above simple construction satisfies the local multiplication property.

\begin{theorem}[Satisfaction of the Local Multiplication Property]\label{thm:satisfaction_of_local_multiplication_property}
  Given arbitrary spectral expanders $\calG_1$, $\calG_2$, and $\calG_3$ as well as arbitrary classical local codes $\calC_1$, $\calC_2$, and $\calC_3$, the 3D HGP codes $\calQ_1$, $\calQ_2$, and $\calQ_3$ constructed above in Definition~\ref{def:simple_construction_3D_HGP_CCZ} satisfy the local multiplication property required for Theorem~\ref{thm:local-multiplication-ccz-property}.
\end{theorem}

\begin{proof}
  We need to show that the local codes of $\calQ_1$, $\calQ_2$, and $\calQ_3$ satisfy the local multiplication property. This follows from the structure of the local codes and their relationships.
  Specifically, we need to verify that for any three copies of Tanner codes $T(\calG_i, \,\cdot\,)$ that corresponds to any $i \in \{1, 2, 3\}$, the local codes $\calC_{1,\mathrm{loc}}$, $\calC_{2,\mathrm{loc}}$, $\calC_{3,\mathrm{loc}} \subseteq \F_2^\Delta$ corresponding to $\calQ_1$, $\calQ_2$, and $\calQ_3$ satisfy the following condition:
  \[
  \sum_i^\Delta (\mbf{c}_1)_i (\mbf{c}_2)_i (\mbf{c}_3)_i = 0 \mod 2, \quad \forall \mbf{c}_1 \in \calC_{1,\mathrm{loc}}, \mbf{c}_2 \in \calC_{2,\mathrm{loc}}, \mbf{c}_3 \in \calC_{3,\mathrm{loc}}.
  \]
  This can be shown by analyzing the structure of the local codes and their interactions. Let us do that for $\calG_1$ and the same argument applies to $\calG_2$ and $\calG_3$.
  In this case, the three local codes that we have to check are given by $\calC_1, \calC_0, \calC_1^\perp$ respectively. By the definition of the dual code, we have
  \[\sum_i^\Delta (\mbf{c}_1)_i (\mbf{c}_3)_i = 0 \mod 2, \quad \forall \mbf{c}_1 \in \calC_1, \mbf{c}_3 \in \calC_1^\perp.\]
  Because the codewords of $\calC_0$ are either $0^{\Delta}$ or $1^{\Delta}$, we have
  \[\sum_i^\Delta (\mbf{c}_1)_i (\mbf{c}_2)_i (\mbf{c}_3)_i = 0 \mod 2, \quad \forall \mbf{c}_1 \in \calC_1, \mbf{c}_2 \in \calC_0, \mbf{c}_3 \in \calC_1^\perp.\]
  This completes the proof for $\calG_1$. The same argument applies to $\calG_2$ and $\calG_3$.
\end{proof}

The lemma above provides the foundation for implementing transversal CCZ gates in the described construction. The local multiplication property ensures compatibility between the local codes, which is critical for fault-tolerant CCZ computation. This construction is inspired by the ideas in Zhu's paper~\cite{zhu2025topological} and can be viewed as a vastly simpler and cleaner version of the construction used in Ref.~\cite{golowich2025quantum}, where much more advanced techniques were used to achieve better code parameters with punctured Reed-Solomon local codes on a special expander graph at the expense of the LDPC property. Similar to the construction in Ref.~\cite{zhu2025topological}, our construction offers great flexibility in the choice of the expander graphs and local codes, allowing for a wide range of code parameters and properties. This flexibility is particularly useful for finding interesting finite size examples that may have useful logical gates and code automorphisms~\cite{berthusen2025automorphism}.
We note that Theorem~2 in Ref.~\cite{zhu2025topological} provides a concrete way to identify the necessary cohomology classes that give rise to non-trivial logical CCZ gates. By using Zhu's result, it is possible to find the number of logical CCZ gates that can be implemented transversally for a given choice of the expander graphs and local codes. 

We now proceed to show how the construction defined in Definition~\ref{def:simple_construction_3D_HGP_CCZ} can satisfy the rate scaling described in Theorem~\ref{thm:universal_FTQC}.

\begin{lemma}\label{lem:3D_HGP_code_rate}
  The following 2D HGP codes constructed using the Sipser-Spielman codes built from spectral expander graphs $\calG_i$ and classical local codes $\calC_i$ for $i=1,2,3$ have constant rate:
  \begin{itemize}
    \item $\calQ_1^{2D} = T(\calG_1, \calC_1) \otimes T(\calG_2, \calC_2^\perp)$
    \item $\calQ_2^{2D} = T(\calG_2, \calC_2) \otimes T(\calG_3, \calC_3^\perp)$
    \item $\calQ_3^{2D} = T(\calG_1, \calC_1^\perp) \otimes T(\calG_3, \calC_3)$
  \end{itemize}
    if the classical local codes $\calC_1$, $\calC_2$, and $\calC_3$ have rates $r_1$, $r_2$, and $r_3$, respectively with $r_1, r_2, r_3 > 1/2$.
\end{lemma}
\begin{proof}    
    It is known that the rate of a Sipser-Spielman code is at least $2r - 1$ where $r$ is the rate of the classical local code~\cite{sipser2002expander}.
    Thus, when we define $\calC_i$ to have rate greater than 1/2, the Sipser-Spielman code $T(\calG_i, \calC_i)$ has positive rate, implying that the first homology group associated with its chain complex has dimension that scales linearly with the number of physical bits. On the other hand, $T(\calG_i, \calC_i^\perp)$ would have a vanishing rate but its transpose would have positive rate, implying that the zeroth homology group associated with its chain complex has dimension that scales linearly with the number of physical bits.     
    By the K\"unneth formula stated in Proposition~\ref{prop:kunneth-formula}, the 2D HGP codes $\calQ_1^{2D}$, $\calQ_2^{2D}$, and $\calQ_3^{2D}$ would achieve constant rate. 
\end{proof}

The rate of the 3D HGP codes $\calQ_1^{3D}$, $\calQ_2^{3D}$, and $\calQ_3^{3D}$ constructed in Definition~\ref{def:simple_construction_3D_HGP_CCZ} can be derived from the rate of the corresponding 2D HGP codes $\calQ_1^{2D}$, $\calQ_2^{2D}$, and $\calQ_3^{2D}$ and the rate of the generalized repetition code $T(\calG_i, \calC_0)$ using the K\"unneth formula stated in Proposition~\ref{prop:kunneth-formula}. Thus, we have proven the theorem statement.
\end{proof}

\section{Linear Soundness in the 3D HGP Code}
\label{sec:3D_soundness}

In this section, we will show that our 3D HGP code $\calQ_G$ exhibits small-set linear soundness in the membrane basis.
Suppose our 3D HGP code of length $\tilde{\Delta} = \Theta(n^{3/2})$ is extended in the $X$ basis ($X$ logicals are membrane-like). Then we have the following 4-term cochain complex:
\begin{align}\label{eq:3D HGP chain}
    \tilde{S}_X \xlongrightarrow{\tilde{H}^\top_X} \tilde{Q} \xlongrightarrow{\tilde{H}_Z} \tilde{S}_Z \xlongrightarrow{\tilde{M}_Z} \tilde{R}_Z \, ,
\end{align}
where $S_X$ is the space of $X$ syndromes, $Q$ the space of physical qubits, $S_Z$ the space of $Z$ syndromes, and $R_Z$ the space of $Z$-check relations. $M_Z$ is the $Z$-metacheck matrix, and $H_X$ and $H_Z$ are the usual CSS parity-check matrices. We want to show that $H_Z$ is $(t,f)$-sound with $t=\Theta(n^{1/3})$ and $f(x)=\alpha x$ for some $\alpha=O(1)$. To get this result, we will show small-set coboundary expansion for the metachecks $M_Z$. It is known that small-set coboundary expansion can be reduced to the property of spectral expansion of the graphs and the product expansion of the underlying classical codes \cite{dinur2024expansion}. By construction the graphs we use are spectral expanders, and so it suffices to check the product expansion of the underlying classical codes. It is known that a pair of random codes are product-expanding \cite{dinur2023good, kalachev2022two}. As such, the local codes we will use are two random codes and a repetition code. We show below that adding a repetition code to a product-expanding pair of codes preserves the product expansion property.

Given two linear codes $C_1, C_2\subseteq \F_2^{\Delta}$, we define $C_1\boxplus C_2 = C_1\otimes \F_2^{\Delta}+\F_2^{\Delta}\otimes C_2 \subseteq \F_2^{\Delta\times\Delta}$. More generally, given a collection of linear codes $C = (C_1, C_2, ..., C_D)$ of linear codes over $\F_2$, we define the codes
\begin{equation}
    C^{(i)} := \F_2^{\Delta} \otimes\dots\otimes C_i \otimes\dots\otimes \F_2^{\Delta}
\end{equation}
which is formed by tensors $x \in \F_2^{\Delta^D}$ whose restriction along the $i$-th axis are codewords in $C_i$. It is clear that
\begin{equation}
  C_1\boxplus \dots \boxplus C_D = C^{(1)}+ \dots + C^{(D)}.
\end{equation}
For each $i \in [D]$, let $L_i$ be the set of lines parallel to the $i$-th axis in the $D$-dimensional grid $[\Delta]^D$. For each tensor $x\in \F_2^{\Delta^D}$, we define $|x|$ as the Hamming weight of $x$, and define $|x|_i$ as the number of lines $\ell\in L_i$ such that $x$ is nonzero on that line, $x|_\ell \ne 0$.

\begin{definition}[Product expansion \cite{kalachev2025maximally}]
\label{def:product expansion}
  Given a collection of linear codes $C_i \subset \F_2^\Delta$,
    we say that $(C_1, ..., C_D)$ is $\rho$-product-expanding
    if every codeword $c\in C_1\boxplus \dots \boxplus C_D$ can be represented as a sum
      $c = \sum_{i=1}^D a_i$,
      where $a_i\in C^{(i)}$ for all $i\in [D]$,
    and the following inequality holds:
    \begin{equation}\label{eq:prod-exp}
      \rho \Delta \sum_{i\in [D]} |a_i|_i \le |c|.
    \end{equation}
\end{definition}

\begin{lemma}[Product expansion with repetition]
\label{lem:3D product expansion}
  If $(C_1, C_2)$ is $\rho$-product-expanding, then the $(C_1, C_2, C_{\rm rep})$ is $\rho/3$-product-expanding.
\end{lemma}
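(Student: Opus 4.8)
The plan is to reduce the three-dimensional decomposition to a family of two-dimensional ones, one for each slice along the repetition axis, exploiting the fact that the repetition code forces the repetition-axis component of any decomposition to be constant.

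\textbf{Setup and key observation.} Write $C_{\rm rep} = \{0^\Delta, 1^\Delta\} \subseteq \F_2^\Delta$, and for $c \in C_1 \boxplus C_2 \boxplus C_{\rm rep} \subseteq \F_2^{\Delta^3}$ let $c^{(1)}, \dots, c^{(\Delta)} \in \F_2^{\Delta\times\Delta}$ denote its slices along the third (repetition) axis. In any decomposition $c = a_1 + a_2 + a_3$ with $a_3 \in C^{(3)} = \F_2^\Delta \otimes \F_2^\Delta \otimes C_{\rm rep}$, the component $a_3$ is constant along every third-axis line, so all $\Delta$ slices of $a_3$ coincide with a single matrix $M \in \F_2^{\Delta\times\Delta}$. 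Subtracting slices, $c^{(k)} - c^{(k')} \in C^{(1)}|_{\rm slice} + C^{(2)}|_{\rm slice} = C_1 \boxplus C_2$ for all $k,k'$; that is, all slices of $c$ are pairwise congruent modulo $C_1 \boxplus C_2$.

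\textbf{Choosing $a_3$.} I would pick a minimum-weight slice of $c$, say $c^{(1)}$ after relabeling, so that $|c^{(1)}| \le |c|/\Delta$, and set $a_3$ to be $c^{(1)}$ replicated along the third axis, i.e.\ $M := c^{(1)}$. This is a legitimate element of $C^{(3)}$, it costs $|a_3|_3 = |c^{(1)}| \le |c|/\Delta$, and by the previous observation $c - a_3$ has the property that each of its slices is $c^{(k)} - c^{(1)} \in C_1 \boxplus C_2$. Now apply the hypothesis: for each $k$ the $\rho$-product-expansion of $(C_1,C_2)$ gives $c^{(k)} - c^{(1)} = b_1^{(k)} + b_2^{(k)}$ with $b_1^{(k)} \in C_1 \otimes \F_2^\Delta$, $b_2^{(k)} \in \F_2^\Delta \otimes C_2$ and $\rho\Delta\bigl(|b_1^{(k)}|_1 + |b_2^{(k)}|_2\bigr) \le |c^{(k)} - c^{(1)}| \le |c^{(k)}| + |c^{(1)}|$ (for $k=1$ take $b_1^{(1)}=b_2^{(1)}=0$). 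Stacking the slices $b_1^{(k)}$ into $a_1$ and $b_2^{(k)}$ into $a_2$ produces $a_1 \in C^{(1)}$ and $a_2 \in C^{(2)}$ (membership in $C^{(i)}$ can be checked one slice at a time, since every $i$-axis line lies in a single slice), with $a_1 + a_2 + a_3 = c$. Finally, because a $3$D line count splits as a sum of the corresponding $2$D line counts over slices, $|a_1|_1 + |a_2|_2 = \sum_k \bigl(|b_1^{(k)}|_1 + |b_2^{(k)}|_2\bigr) \le \frac{1}{\rho\Delta}\sum_k\bigl(|c^{(k)}| + |c^{(1)}|\bigr) \le \frac{1}{\rho\Delta}\bigl(|c| + \Delta|c^{(1)}|\bigr) \le \frac{2|c|}{\rho\Delta}$. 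Combining with $|a_3|_3 \le |c|/\Delta$ and noting $\rho \le 1$ (WLOG, since product-expansion with a larger constant implies it with a smaller one), $\sum_i |a_i|_i \le \frac{2+\rho}{\rho}\cdot\frac{|c|}{\Delta} \le \frac{3}{\rho}\cdot\frac{|c|}{\Delta}$, which is exactly $\frac{\rho}{3}$-product-expansion.

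\textbf{Main obstacle.} There is no deep difficulty here; the one genuinely load-bearing idea is the choice of which matrix to replicate for $a_3$. Taking it to be a minimum-weight slice is what simultaneously keeps $|a_3|_3$ small \emph{and} keeps $|c^{(k)} - c^{(1)}| \le |c^{(k)}| + |c^{(1)}|$ so the per-slice $2$D bounds telescope into $O(|c|/(\rho\Delta))$; a careless choice of $M$ would leave $|a_3|$ uncontrolled. The remaining points — that $C^{(i)}$-membership and the line counts $|\cdot|_i$ both factor through slices, so that stacking is legitimate and costs add — are immediate from the tensor structure and only need to be stated carefully.
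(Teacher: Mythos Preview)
Your proof is correct and follows essentially the same slice-and-stack strategy as the paper: decompose along the repetition axis, observe all slices lie in the same coset of $C_1\boxplus C_2$, pick a single matrix $M$ to replicate as $a_3$, apply the $2$D hypothesis slice-by-slice to $c^{(k)}-M$, and sum. The one genuine difference is your choice of $M$. The paper takes $M=\varepsilon$ to be the minimum-weight coset representative of $c_1$ modulo $C_1\boxplus C_2$, which guarantees $|\varepsilon|\le |c_i|$ for every $i$ since all $c_i$ lie in that same coset. You instead take $M=c^{(1)}$ to be the minimum-weight slice itself, which gives $|c^{(1)}|\le |c|/\Delta$ directly by averaging. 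Both bounds feed into the same telescoping estimate and yield the same constant $\rho/3$; your choice is slightly more elementary in that it avoids any appeal to a nearest-codeword map.
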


\begin{proof}
  Let $c \in C_1\boxplus C_2 \boxplus C_{\rm rep}$ be a codeword.
  Consider the 2D slices of $c$ parallel to the first two axes,
    which we write $c = \sum_{i=1}^\Delta c_i \otimes e_i$
    where $c_i \in \F_2^{\Delta\times \Delta}$ is the $i$-th slice
    and $e_i \in \F_2^\Delta$ is the $i$-th standard basis vector.
  Because $C_{\rm rep}$ is the repetition code,
    any tensor in $C^{(3)}$ is constant on each line parallel to the third axis.
  Therefore, $c_i - c_j \in C_1\boxplus C_2$ for all $i,j \in [\Delta]$.
Let $\varepsilon \in \F_2^{\Delta\times \Delta}$
    be the difference between $c_1$ and its closest codeword in $C_1\boxplus C_2$
    in the Hamming distance.
  Then $c_i - \varepsilon \in C_1\boxplus C_2$ for all $i \in [\Delta]$.
  It is $|\varepsilon| \le |c_i|$ for all $i,j \in [\Delta]$.
Let $c' = \sum_{i=1}^\Delta (c_i - \varepsilon) \otimes e_i$
    and let $c'' = \varepsilon \otimes (1,1,...,1)$.
  Then $c = c' + c''$ where $c'\in C^{(1)} + C^{(2)}$ and $c''\in C^{(3)}$.
  Since $(C_1, C_2)$ is $\rho$-product-expanding,
    we can write $c_i - \varepsilon = a_{i,1} + a_{i,2}$,
    where $a_{i,1} \in C_1 \otimes \F_2^{\Delta}$
      and $a_{i,2} \in \F_2^{\Delta} \otimes C_2$
    such that $\rho \Delta (|a_{i,1}|_1 + |a_{i,2}|_2) \le |c_i - \varepsilon|$.

  Using $a_{i,1}$, $a_{i,2}$, and $\varepsilon$,
    we are ready to construct a decomposition of $c$ into $c = a_1 + a_2 + a_3$.
  Let $a_1$ be the tensor whose $i$-th slice is $a_{i,1}$,
    and let $a_2$ be the tensor whose $i$-th slice is $a_{i,2}$.
  Let $a_3$ be the tensor $\varepsilon \otimes (1,1,...,1)$
    which is constant on each line parallel to the third axis.
  It is clear that $a_1 \in C^{(1)}$, $a_2 \in C^{(2)}$, and $a_3 \in C^{(3)}$.

  We now verify the inequality \eqref{eq:prod-exp}.
  For each 2D slice, we have
    \begin{equation}
      \rho \Delta (|a_{i,1}|_1 + |a_{i,2}|_2) \le |c_i - \varepsilon|,
    \end{equation}
    which implies
    \begin{equation}
      |a_{i,1}|_1 + |a_{i,2}|_2 + \frac{|\varepsilon|}{\Delta}
      \le \frac{|c_i - \varepsilon|}{\rho \Delta} + \frac{|\varepsilon|}{\Delta}
      \le \frac{|c_i| + |\varepsilon| + |\varepsilon|}{\rho \Delta},
      \le \frac{3 |c_i|}{\rho \Delta},
    \end{equation}
    where we used the fact that $\rho \le 1$
    and $|\varepsilon| \le |c_i|$.
  Summing over all $i \in [\Delta]$,
    \begin{equation}
      |a_1|_1 + |a_2|_2 + |a_3|_3
      = \sum_{i=1}^\Delta \left(|a_{i,1}|_1 + |a_{i,2}|_2 + \frac{|\varepsilon|}{\Delta}\right)
      \le \sum_{i=1}^\Delta \frac{3 |c_i|}{\rho \Delta}
      = \frac{3 |c|}{\rho \Delta}.
    \end{equation}
  It follows that $(C_1, C_2, C_{\rm rep})$ is $\rho/3$-product-expanding.
\end{proof}

Using the product-expansion promised by Lemma \ref{lem:3D product expansion}, we can now show small-set coboundary expansion of $\tilde{M}_Z$ in \eqref{eq:3D HGP chain}.

\begin{theorem}[3D Metacheck expansion]
\label{thm:3D metacheck expansion}
  The 3D HGP code $\calQ_G$ exhibits small-set coboundary expansion in $\tilde{M}_Z$.
  In particular,
    there exists a $t = \Theta(\tilde{n}^{1/3})$ such that for any $s \in S_Z$ with $\abs{s}\leq t$,
    \begin{equation}\label{eq:3D metacheck expansion}
      \abs{\tilde{M}_Z s} \ge \kappa \min_{e \in Q} \abs{s + \tilde{H}_Z e},
    \end{equation}
    where $\kappa = \Theta_{\rho, \lambda, \Delta}(1)$.
  Moreover, we also have $\abs{e} \leq \alpha\abs{s}$.
\end{theorem}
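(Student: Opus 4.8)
The plan is to identify the $4$-term cochain complex \eqref{eq:3D HGP chain} with a piece of the three-dimensional cubical complex obtained by tensoring the two $2$-term Tanner-code complexes underlying the $2$D code $\calQ$ with the $2$-term complex of the generalized repetition code $\calC_G = T(\calG_3,\calC_0)$, and then to invoke the local-to-global reduction of Ref.~\cite{dinur2024expansion}: for a tensor product of Tanner codes supported on $\lambda$-spectral expanders, small-set coboundary expansion at an intermediate level follows once the triple of local codes is product-expanding, with both the expansion constant and the small-set radius controlled by the product-expansion constant $\rho$, the spectral gap (through $\lambda$), and the degree $\Delta$. Since $\calG_1,\calG_2,\calG_3$ are spectral expanders by construction, the only code-dependent hypothesis left to verify is product expansion of the triple of local codes, and this is exactly what Lemma~\ref{lem:3D product expansion} supplies: a pair $(C_1,C_2)$ of random local codes for $\calQ$ is $\rho$-product-expanding \cite{dinur2023good,kalachev2022two}, so the augmented triple $(C_1,C_2,C_{\mathrm{rep}})$ is $(\rho/3)$-product-expanding.

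Concretely, I would first fix the cubical structure, writing $\calA_{\calQ_G} = \calA^{(1)}\otimes\calA^{(2)}\otimes\calA^{(3)}$ with $\calA^{(1)},\calA^{(2)}$ the Sipser--Spielman $2$-term complexes (local codes $C_1,C_2$, graphs $\calG_1,\calG_2$) and $\calA^{(3)}$ the generalized repetition complex (local code $\calC_0$, graph $\calG_3$), and identifying $\tilde S_Z$ and $\tilde R_Z$ with the appropriate direct summands of the degree-$2$ and degree-$3$ cell spaces so that $\tilde H_Z = \delta_1$ and $\tilde M_Z = \delta_2$. It is worth recording that the redundant check deliberately added to $\calC_0$ (see Lemma~\ref{lemma:tanner_codewords} and the surrounding discussion) is precisely what makes the top space $\tilde R_Z$ nontrivial, so that the complex is genuinely $4$-term and the metachecks $\tilde M_Z$ are nonempty; this does not alter the code $C_{\mathrm{rep}}$ itself, so Lemma~\ref{lem:3D product expansion} still applies. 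If the reduction of Ref.~\cite{dinur2024expansion} is phrased for coboundary expansion of $1$-cochains, I would pass to the reversed complex—reversing the orientation of each $2$-term factor—under which level-$2$ coboundary expansion of $\calA_{\calQ_G}$ becomes the level-$1$ statement for a cubical complex of exactly the same type, so the cited reduction applies verbatim.

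It then remains to read off the parameters. The three axes of the cubical complex each have side length $\Theta(\tilde n^{1/3}) = \Theta(\sqrt n)$, so the small-set radius produced by the reduction is $t = \Theta(\tilde n^{1/3})$, matching the $Z$-distance $\tilde d_Z = d_Z = \Theta(\sqrt n)$, while the expansion constant $\kappa = \Theta_{\rho,\lambda,\Delta}(1)$ is inherited from the $(\rho/3)$-product expansion of Lemma~\ref{lem:3D product expansion} together with the spectral gap; this is \eqref{eq:3D metacheck expansion}. For the ``moreover'' bound $|e|\le\alpha|s|$, I would apply the same reduction one level lower: small-set coboundary expansion at the qubit space $\tilde Q$ states that any $e$ with $|e|\le t$ obeys $|\tilde H_Z e|\ge \kappa'\min_{w\in\tilde S_X}|e + \tilde H_X^\top w|$, i.e.\ $\tilde H_Z$ has linear confinement. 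Applying this to a reduced (minimal-weight-in-coset) representative of the minimizer in \eqref{eq:3D metacheck expansion}—which has weight at most $t$ whenever $|s|\le t$, by the standard gradient-descent/bootstrapping argument that stays inside the small-set radius—gives $|e|\le |s|/\kappa'$, so we may take $\alpha = 1/\kappa'$.

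The main obstacle is the faithful application of the reduction of Ref.~\cite{dinur2024expansion}: one must confirm that its hypotheses (tensor product of Tanner codes on spectral expanders with product-expanding local codes) genuinely cover the mixed situation in which one of the three factors is only a repetition code, check that its conclusion holds at the level relevant to the metachecks rather than only for $1$-cochains (hence the reversal trick above, which must be justified carefully for the cubical complex), and track how the small-set radius and the expansion constant degrade both through the reduction itself and through the $\rho\mapsto\rho/3$ loss in Lemma~\ref{lem:3D product expansion}. Everything else—setting up the cubical decomposition, observing that the redundant check supplies the top space, and deducing linear soundness from coboundary expansion at the qubit level—is routine once that reduction is in place.
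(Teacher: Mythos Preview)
Your proposal for the coboundary expansion inequality \eqref{eq:3D metacheck expansion} is essentially the paper's approach: both invoke the local-to-global reduction of \cite{dinur2024expansion} (the paper cites Proposition~7.1 and Theorem~3.6 there), using spectral expansion of the $\calG_i$ together with the product-expansion of $(C_1,C_2,C_{\mathrm{rep}})$ supplied by Lemma~\ref{lem:3D product expansion}. The paper does not need your reversal trick, since the cited results apply at the required level directly.

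For the ``moreover'' bound $|e|\le\alpha|s|$, the paper takes a more direct route than you do. It runs a greedy local-flip argument on the syndrome: as long as $s$ is not locally co-minimal, there is a single local flip $s_i=\tilde H_Z e_i$ with $|e_i|=O(1)$ (by LDPCness) that strictly decreases $|s|$; since $|s|<d_{\mathrm{coLM}}$ throughout, the process terminates in at most $T\le|s|$ steps, and the accumulated $e=\sum_i e_i$ satisfies $|e|=O(T)=O(|s|)$. Your proposed detour through level-$1$ coboundary expansion of $\tilde H_Z$ would also work, but the ``standard gradient-descent/bootstrapping'' you invoke to verify the small-set hypothesis $|e^\ast|\le t$ is precisely this flip argument---and that argument already delivers $|e|=O(|s|)$ on its own, making the subsequent appeal to level-$1$ expansion redundant.
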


\begin{proof}
    Since $\calG_1, \calG_2, \calG_3$ are $\lambda$-spectral expanders,
    the subgraphs of $\calG_1 \times \calG_2 \times \calG_3$
    corresponding to the $i$-th direction
    are also $\lambda$-spectral expanders
    up to size $\Theta(\abs{\calG_i}) = \Theta(\tilde{n}^{1/3})$.
  Combined with the fact that $(C_1, C_2, C_{\rm rep})$ is $\rho$-product-expanding, it follows from \cite[Proposition 7.1]{dinur2024expansion}
    that there exists $t = \Theta(\tilde{n}^{1/3}) < d_{\rm coLM}$,
    where $d_{\rm coLM}$ is the locally co-minimal distance (Definition \ref{def:locally minimal distance}). Small-set coboundary expansion \eqref{eq:3D metacheck expansion} similarly follows from Theorem 3.6 in \cite{dinur2024expansion}.

  We now prove the statement $\abs{e} \leq \alpha\abs{s}$. If $s$ is not locally co-minimal (Definition \ref{def:locally minimal}), then 
    there exists a local flip
    $s_1 = \tilde{H}_Z e_1$
    such that $|s + s_1| < |s|$.
  Since $s + s_1$ has a smaller weight than $|s| < d_{\rm coLM}$,
    we can repeat the process to obtain a sequence of local flips
    that eventually reduces $s$ to its locally co-minimal representative.
  Let $s_1, s_2, \dots, s_T$ be these local flips,
    where each $s_i = \tilde{H}_Z e_i$.
  This implies $s = \sum_{i=1}^T s_i$
    and we define $e = \sum_{i=1}^T e_i$.
  It is clear that $s = \tilde{H}_Z e$.

  To bound the weight of $e$,
    note that each step decreases the weight of $s$ by at least $1$,
    which implies $T \le |s|$.
  Since the chain complex is LDPC,
    each local flip $e_i$ has bounded weight $|e_i| = \Theta(1)$.
  Therefore, we have
  \begin{equation}
    |e| \le \sum_{i=1}^T |e_i| = O(T) = O(|s|),
  \end{equation}
  which implies $\abs{e} \leq \alpha\abs{s}$ for some constant $\alpha$.
\end{proof}

An immediate corollary of Theorem \ref{thm:3D metacheck expansion}, for the case when $s \in \ker{\tilde{M}_Z}$, is (small-set) linear soundness in our 3D HGP code.

\begin{corollary}
\label{cor:3D HGP linear soundness}
  The 3D HGP code $\calQ_G$ is $(t,f)$-sound with $t=\Theta(\tilde{n}^{1/3})$ and $f(x)=\alpha x$ for $\alpha=O(1)$.
  In particular,
    for any $s \in \ker \tilde{M}_Z$ with $|s| \le t$,
    there exists $e \in Q$ such that $s = \tilde{H}_Z e$
    and $\abs{e} \leq \alpha\abs{s}$.
\end{corollary}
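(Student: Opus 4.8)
The plan is to obtain this statement by simply specializing Theorem~\ref{thm:3D metacheck expansion} to the sub-family of $Z$-syndromes lying in $\ker\tilde M_Z$, since that already contains every syndrome that can arise from an actual error. Concretely, I would first note that for any qubit error $e\in Q$ one has $\tilde M_Z(\tilde H_Z e)=0$ because $\tilde M_Z\tilde H_Z=0$ along the cochain complex \eqref{eq:3D HGP chain}; hence $\im\tilde H_Z\subseteq\ker\tilde M_Z$, so a genuine $Z$-syndrome $\sigma(\mbf x)=\tilde H_Z\mbf x$ is automatically in $\ker\tilde M_Z$. It therefore suffices to show: any $s\in\ker\tilde M_Z$ with $|s|\le t$ admits $e\in Q$ with $\tilde H_Z e=s$ and $|e|\le\alpha|s|$. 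Matching this against Definition~\ref{def:soundness}, with the check set taken to be $\tilde H_Z$ and the error $X$-type, then gives $(t,f)$-soundness with $f(x)=\alpha x$ and $t=\Theta(\tilde n^{1/3})$, inheriting both constants verbatim from Theorem~\ref{thm:3D metacheck expansion}.

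Next I would run exactly the local-flip reduction that appears in the ``moreover'' half of Theorem~\ref{thm:3D metacheck expansion}. Fix $s\in\ker\tilde M_Z$ with $|s|\le t<d_{\rm coLM}$. By Definition~\ref{def:locally minimal distance}, $d_{\rm coLM}$ is the minimum weight of a \emph{nonzero} locally co-minimal element of $\ker\tilde M_Z$, so $s$ is either $0$ or not locally co-minimal (Definition~\ref{def:locally minimal}). In the latter case there is a single-qubit flip $e_1$ with $s_1=\tilde H_Z e_1$ and $|s+s_1|<|s|$; since $s_1\in\im\tilde H_Z\subseteq\ker\tilde M_Z$, the reduced syndrome $s+s_1$ stays in $\ker\tilde M_Z$ with strictly smaller weight, still below $d_{\rm coLM}$, and the step iterates. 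After $T\le|s|$ steps the weight reaches $0$, so $s=\sum_{i=1}^T s_i=\tilde H_Z\bigl(\sum_{i=1}^T e_i\bigr)$; setting $e=\sum_i e_i$ and using that each $|e_i|=\Theta(1)$ by the LDPC property yields $|e|\le\sum_i|e_i|=O(T)=O(|s|)$, i.e.\ $|e|\le\alpha|s|$.

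The one point that needs genuine care — and the closest thing to an obstacle — is justifying that the reduction terminates at the \emph{zero} syndrome rather than at some nonzero locally co-minimal representative of $\ker\tilde M_Z$. This is exactly where both hypotheses $s\in\ker\tilde M_Z$ and $|s|<d_{\rm coLM}$ are used: a nonzero locally co-minimal element of $\ker\tilde M_Z$ would have weight at least $d_{\rm coLM}>|s|$, contradicting weight-monotonicity of the reduction, so the only locally co-minimal element reachable is $0$. Everything else is bookkeeping — tracking that $\alpha$ and $t=\Theta(\tilde n^{1/3})$ are the constants supplied by Theorem~\ref{thm:3D metacheck expansion}, and observing that the identification $s=\sigma(\mbf x)$ required by Definition~\ref{def:soundness} holds for free once $\sigma(\mbf x)=\tilde H_Z\mbf x\in\ker\tilde M_Z$.
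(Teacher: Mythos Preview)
Your proposal is correct and matches the paper's approach: the paper states the corollary as an immediate consequence of Theorem~\ref{thm:3D metacheck expansion} specialized to $s\in\ker\tilde M_Z$, and you have simply spelled out the local-flip argument from that theorem's proof in this special case, including the termination-at-zero point that the paper leaves implicit.
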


\section{Fault Tolerance for Adversarial Noise}\label{sec:adversarial_noise}
In this section, we provide a rigorous analysis of the fault tolerance of the gadgets in our universal protocol against adversarial noise. To be explicit, we consider an adversarial noise model where an adversary can choose any set of $s$ fault locations in the circuit and apply arbitrary Pauli errors at those locations. This includes errors on the data and ancilla qubit errors, measurement errors, preparation errors, and gate errors. We first review the necessary background on the fault tolerance properties of gadgets. Subsequently, we show that our gadgets satisfy these properties, thereby establishing the fault tolerance of our universal protocol by a simple composition argument.

\subsection{Fault Tolerance of Gadgets}
\label{sec:fault_tolerance_gadgets}
In this section, we mostly restate the definitions and results from Ref.~\cite{gottesman2024surviving}.
We start off by reviewing what it means for a gate gadget to be fault-tolerant under the single-shot universality framework. Note that a gate gadget essentially captures the idea of a logical gate implemented on an encoded state, and so it is a more general notion than a transversal gate. In fact, it also encapsulates lattice surgery and code switching protocols.

\begin{definition}[Fault-tolerant gate gadget~{\cite[Restatement of Definitions 10.10 and 10.11]{gottesman2024surviving}}]
\label{def:fault_tolerant_gate_gadget}
    Suppose we have a gate gadget that is associated with a QLDPC code with distance $2t+1$ that has $u$ input and output bits for some constant $u$. Let $r_i$ for $i=1,\ldots,u$ be the incoming faults for each of the $u$ input qubits, and let $s$ be the total number of faults that occur during the gadget. 
    A gate gadget is said to be fault-tolerant if it satisfies the following properties when $\sum_{i=1}^u r_i + f(s) \leq t$ for some increasing function $f$:
    \begin{itemize}
        \item (Gate Error Propagation Property) The output $u$-qubit state has at most $\sum_{i=1}^u r_i + f(s)$ faults on each of the $u$ output qubits.
        \item (Gate Correctness Property) The output $u$-qubit state when passed through an ideal decoder implements the intended logical operation on the input state. In other words, the output state from the noisy gate gadget when passed through an ideal decoder is the same as if we had passed the input state through an ideal decoder followed by the ideal logical gate operation.
    \end{itemize}
\end{definition}

It should not be too hard to see that transversal, fold-transversal, and swap-transversal gates are all examples of fault-tolerant gate gadgets with a constant factor difference in the function $f$.
Next, we state the definition of fault-tolerance for a state-preparation gadget, which is a procedure to prepare an encoded logical state.

\begin{definition}[Fault-tolerant state-preparation gadget~{\cite[Restatement of Definitions 10.13 and 10.14]{gottesman2024surviving}}]
\label{def:fault_tolerant_state_preparation_gadget}
    Suppose we have a state-preparation gadget that is associated with a QLDPC code with distance $2t+1$. Let $s$ be the total number of faults that occur during the gadget. A state-preparation gadget is said to be fault-tolerant if it satisfies the following properties when $f(s) \leq t$ for some increasing function $f$:
    \begin{itemize}
        \item (State-Preparation Error Propagation Property) The output state has at most $f(s)$ faults.
        \item (State-Preparation Correctness Property) The output state when passed through an ideal decoder is the intended logical state.
    \end{itemize}
\end{definition}

Subsequently, we state the definition of fault-tolerance for a logical measurement gadget, which is a procedure to measure a logical operator on an encoded state.

\begin{definition}[Fault-tolerant measurement gadget~{\cite[Restatement of Definition 10.15]{gottesman2024surviving}}]
\label{def:fault_tolerant_logical_measurement_gadget}
    Suppose we have a measurement gadget that is associated with a QLDPC code with distance $2t+1$. Let $r$ be the number of incoming faults, and let $s$ be the total number of faults that occur during the gadget. A logical measurement gadget is said to be fault-tolerant if it satisfies the following properties when $r + f(s) \leq t$ for some increasing function $f$:
    \begin{itemize}
        \item (Measurement Correctness Property) The measurement outcome is the same as if we had passed the input state through an ideal decoder followed by an ideal logical measurement.
    \end{itemize}
\end{definition}

Finally, we state the definition of fault-tolerance for a single-shot error-correction gadget.

\begin{definition}[Fault-tolerant single-shot error-correction gadget~{\cite[Restatement of Definitions 10.17 and 10.18]{gottesman2024surviving}}]
\label{def:fault_tolerant_single_shot_error_correction_gadget}
    Suppose we have a single-shot error-correction gadget that is associated with a QLDPC code with distance $2t+1$. Let $s$ be the total number of faults that occur during the gadget. A single-shot error-correction gadget is said to be fault-tolerant if it satisfies the following properties:
    \begin{itemize}
        \item (Error-Correction Error Recovery Property) When $f(s) \leq t$ for some increasing function $f$, the output state has at most $f(s)$ faults.
        \item (Error-Correction Correctness Property) Let $r$ be the number of incoming faults into the gadget. When $r + f(s) \leq t$ for some increasing function $f$, the output state when passed through an ideal decoder is the same as if we had passed the input state through an ideal decoder.
    \end{itemize}
\end{definition}

One can interpret the error recovery property as a guarantee that our fault-tolerant gadget will always return us to some codeword with some bounded number of errors no matter how many faults there are to start with. On the other hand, the correcness property guarantees that when the total number of faults is small enough, the encoded logical state does not deviate from the ideal logical state during the gadget.

\subsection{Fault Tolerance of Single-Shot State Preparation}
\label{sec:fault_tolerance_state_preparation}
We now analyze the fault tolerance of our single-shot state-preparation gadgets against adversarial errors. With slight abuse of notation, we will reuse the same letter for both a vector and its support, and it should be clear by context which one we are referring to.

For the 3D HGP code, we have the following chain complex for the membrane sector:
\begin{align}
    Q \xlongrightarrow{\mbf{H}} S \xlongrightarrow{\mbf{M}} R
\end{align}
where $Q,S,R$ are the vector spaces of qubits, syndromes and relations respectively, and $\mbf{H}$ and $\mbf{M}$ are the (low-density) parity-check and metacheck matrices respectively. We call $\im{\mbf{H}}$ the space of \emph{physical} syndromes since these are the ones that can be produced by errors on the data qubits in $Q$. The systolic distance, or smallest nontrivial weight in $\ker{\mbf{M}}\,\setminus\,\im{\mbf{H}}$, is called the single-shot or metacode distance and is labeled as $d_{\rm ss}$.

For the setting of adversarial noise, the 3D HGP code is amenable to single-shot error correction in at least one of the Pauli basis using the metacheck matrix, which has been described by Campbell \cite{campbell2019theory}. For completeness, we recite the relevant results below. We will assume a theoretical minimum-weight decoder for both the metacode and the physical code; note that some constants (like 1/2) can be modified and provide the same qualitative behavior if one wishes to relax the minimum-weight assumption on the decoder.

We first introduce some relevant notation. Let $\mbf{s}_{\rm true}$ be the true syndrome and $\mbf{s}_{\rm err}$ the syndrome error such that $\mbf{s}_{\rm obs} = \mbf{s}_{\rm true} + \mbf{s}_{\rm err}$ is the observed, corrupted syndrome. Our decoder for $M$, which we will call a metadecoder, will output a syndrome correction $\mbf{s}_{\rm corr}$ such that $\mbf{s}_{\rm rep} = \mbf{s}_{\rm obs} + \mbf{s}_{\rm corr}$ is the repaired or inferred syndrome. The metadecoder succeeds if $\mbf{s}_{\rm rep} \in \im{\mbf{H}}$; since $\mbf{s}_{\rm true} \in \im{\mbf{H}}$ by default, the success requirement is equivalent to $\mbf{s}_{\rm err} + \mbf{s}_{\rm corr} \equiv \mbf{s}_{\rm res} \in \im{\mbf{H}}$. We call $\mbf{s}_{\rm res}$ the \emph{residual} syndrome, since it will end up as the syndrome for the eventual residual error.

At a high level, the distance $d_{\rm ss}$ of the metacode will tell us how many syndrome errors we can tolerate before $\mbf{s}_{\rm res} = \mbf{s}_{\rm err}+\mbf{s}_{\rm corr}$ lies outside of $\im{\mbf{H}}$ and hence cannot be decoded to a physical correction. Then the soundness property will tell us how large the corresponding residual error $\mbf{e}_{\rm res}$ can be, given the residual syndrome $\mbf{s}_{\rm res} \in \im{\mbf{H}}$. If the initial physical and syndrome errors are small enough, then the combination of the residual error with a new round of physical and syndrome errors will still be correctable, and hence sustainable single-shot QEC is achieved over many (noisy) syndrome extraction rounds.

\begin{lemma}[Metacode success; Lemma 1 in \cite{campbell2019theory}]
    If $\abs{\mbf{s}_{\rm err}} < d_{\rm ss}/2$, and we have access to a minimum-weight metadecoder, then the residual syndrome $\mbf{s}_{\rm res} \in \im{\mbf{H}}$ is physical.
\end{lemma}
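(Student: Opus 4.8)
The plan is to unwind the definitions of the minimum-weight metadecoder and of the metacode distance $d_{\rm ss}$, and then combine subadditivity of the Hamming weight with the defining property of $d_{\rm ss}$.

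First I would record the structural consequences of the chain complex $Q \xrightarrow{\mbf{H}} S \xrightarrow{\mbf{M}} R$: since $\mbf{M}\mbf{H} = 0$ we have $\im{\mbf{H}} \subseteq \ker{\mbf{M}}$, and in particular the true syndrome satisfies $\mbf{s}_{\rm true} \in \im{\mbf{H}} \subseteq \ker{\mbf{M}}$. Consequently the metasyndrome actually seen by the metadecoder is $\mbf{M}\mbf{s}_{\rm obs} = \mbf{M}(\mbf{s}_{\rm true} + \mbf{s}_{\rm err}) = \mbf{M}\mbf{s}_{\rm err}$, so only the syndrome error contributes to what the metadecoder must explain.

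Next I would exploit the minimum-weight assumption: the metadecoder outputs a correction $\mbf{s}_{\rm corr}$ of least weight among all vectors $\mbf{s}$ with $\mbf{M}\mbf{s} = \mbf{M}\mbf{s}_{\rm obs}$. Since $\mbf{s}_{\rm err}$ is itself such a vector, minimality forces $\abs{\mbf{s}_{\rm corr}} \le \abs{\mbf{s}_{\rm err}} < d_{\rm ss}/2$. The residual syndrome $\mbf{s}_{\rm res} = \mbf{s}_{\rm err} + \mbf{s}_{\rm corr}$ then satisfies $\mbf{M}\mbf{s}_{\rm res} = \mbf{M}\mbf{s}_{\rm err} + \mbf{M}\mbf{s}_{\rm corr} = 0$, hence $\mbf{s}_{\rm res} \in \ker{\mbf{M}}$, and by subadditivity of the Hamming weight $\abs{\mbf{s}_{\rm res}} \le \abs{\mbf{s}_{\rm err}} + \abs{\mbf{s}_{\rm corr}} < d_{\rm ss}$.

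Finally I would close by contradiction: if $\mbf{s}_{\rm res} \notin \im{\mbf{H}}$, then $\mbf{s}_{\rm res}$ would be a nonzero element of $\ker{\mbf{M}} \setminus \im{\mbf{H}}$ of weight strictly less than $d_{\rm ss}$, contradicting the definition of $d_{\rm ss}$ as the minimum weight in $\ker{\mbf{M}} \setminus \im{\mbf{H}}$. Therefore $\mbf{s}_{\rm res} \in \im{\mbf{H}}$ (the case $\mbf{s}_{\rm res} = 0$ being trivial), and since $\mbf{s}_{\rm true} \in \im{\mbf{H}}$ as well, $\mbf{s}_{\rm rep} = \mbf{s}_{\rm true} + \mbf{s}_{\rm res} \in \im{\mbf{H}}$ is physical, as claimed. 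There is essentially no obstacle here; the only point requiring care is that ``nontrivial'' in the definition of $d_{\rm ss}$ is understood to exclude the zero vector, so that $\mbf{s}_{\rm res} = 0$ is not mistaken for a counterexample.
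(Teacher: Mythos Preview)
Your proof is correct and follows essentially the same approach as the paper's: both use minimality of the metadecoder to bound $\abs{\mbf{s}_{\rm corr}} \le \abs{\mbf{s}_{\rm err}}$, then subadditivity to get $\abs{\mbf{s}_{\rm res}} < d_{\rm ss}$, and finally invoke the definition of $d_{\rm ss}$ to conclude $\mbf{s}_{\rm res} \in \im{\mbf{H}}$. Your version is slightly more explicit in verifying $\mbf{s}_{\rm res} \in \ker{\mbf{M}}$ and in handling the trivial case $\mbf{s}_{\rm res} = 0$, but the argument is the same.
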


\begin{proof}
    Since $\mbf{s}_{\rm err}$ itself is also a valid syndrome correction, by minimality of our metadecoder, we have $\abs{\mbf{s}_{\rm corr}} \leq \abs{\mbf{s}_{\rm err}}$. Hence $\abs{\mbf{s}_{\rm res}} = \abs{\mbf{s}_{\rm err} + \mbf{s}_{\rm corr}} \leq \abs{\mbf{s}_{\rm err}} + \abs{\mbf{s}_{\rm corr}} \leq 2\abs{\mbf{s}_{\rm err}} < d_{\rm ss}$, which implies that $\mbf{s}_{\rm res}$ belongs to the trivial class in $\ker{\mbf{M}}/\im{\mbf{H}}$, i.e. $\mbf{s}_{\rm res} \in \im{\mbf{H}}$.
\end{proof}

Note that the definition of soundness (Definition \ref{def:soundness}) requires $t<d_{\rm ss}$, since it asserts that any syndrome with weight less than or equal to $t$ is physical and so belongs to $\im{\mbf{H}}$.

\begin{lemma}[Residual error bound; Lemma 2 in \cite{campbell2019theory}]
    \label{lem:residual error bound}
    Assume our code is $(t,f)$-sound with $f$ an increasing function, and let $\mbf{s}_{\rm err}$ the syndrome error. If $\abs{\mbf{s}_{\rm err}} < t/2$, then the residual error satisfies 
    \begin{align}
        \abs{\mbf{e}_{\rm res}} \leq f(2\abs{\mbf{s}_{\rm err}}) < f(t) \, .
    \end{align}
\end{lemma}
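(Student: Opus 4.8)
The plan is to combine the previous lemma (metacode success) with the soundness property of the code. First I would invoke the metacode success lemma: since $\abs{\mbf{s}_{\rm err}} < t/2 \le d_{\rm ss}/2$ (recall soundness requires $t < d_{\rm ss}$, so $t/2 < d_{\rm ss}/2$), the minimum-weight metadecoder produces a correction $\mbf{s}_{\rm corr}$ with $\abs{\mbf{s}_{\rm corr}} \le \abs{\mbf{s}_{\rm err}}$ by minimality, and hence the residual syndrome $\mbf{s}_{\rm res} = \mbf{s}_{\rm err} + \mbf{s}_{\rm corr}$ satisfies $\abs{\mbf{s}_{\rm res}} \le 2\abs{\mbf{s}_{\rm err}} < t$, so $\mbf{s}_{\rm res} \in \im{\mbf{H}}$, i.e. it is a physical syndrome.

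Next I would apply the $(t,f)$-soundness of the code (Definition~\ref{def:soundness}) to the physical syndrome $\mbf{s}_{\rm res}$. Since $\abs{\mbf{s}_{\rm res}} < t$, soundness guarantees the existence of an error $\mbf{e}'$ with $\sigma(\mbf{e}') = \mbf{s}_{\rm res}$ and $\abs{\mbf{e}'} \le f(\abs{\mbf{s}_{\rm res}})$. The physical minimum-weight decoder, when run on $\mbf{s}_{\rm res}$, returns a correction $\mbf{e}_{\rm corr}$, and the residual error is $\mbf{e}_{\rm res} = \mbf{e}_{\rm true} + \mbf{e}_{\rm corr}$ (or however the residual error is defined relative to the decoder output); its syndrome is exactly $\mbf{s}_{\rm res}$. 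By minimality of the physical decoder together with the soundness witness $\mbf{e}'$, one gets $\abs{\mbf{e}_{\rm res}} \le \abs{\mbf{e}'} \le f(\abs{\mbf{s}_{\rm res}}) \le f(2\abs{\mbf{s}_{\rm err}})$, where the last step uses that $f$ is increasing and $\abs{\mbf{s}_{\rm res}} \le 2\abs{\mbf{s}_{\rm err}}$. Finally, since $\abs{\mbf{s}_{\rm err}} < t/2$ gives $2\abs{\mbf{s}_{\rm err}} < t$, monotonicity of $f$ yields $f(2\abs{\mbf{s}_{\rm err}}) < f(t)$, completing the bound $\abs{\mbf{e}_{\rm res}} \le f(2\abs{\mbf{s}_{\rm err}}) < f(t)$.

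The only subtlety — and the part I would be most careful about — is making sure the notion of ``residual error'' is defined consistently with the minimum-weight physical decoder so that the chain of inequalities $\abs{\mbf{e}_{\rm res}} \le \abs{\mbf{e}'} \le f(\abs{\mbf{s}_{\rm res}})$ is actually valid. If $\mbf{e}_{\rm res}$ is the difference between the true error and the decoder's guess, then $\mbf{e}_{\rm res}$ has syndrome $\mbf{s}_{\rm res}$ and, by the same minimum-weight reasoning used in the metacode lemma, its weight is at most twice the weight of the true error modulo $\mbf{s}_{\rm res}$; one then needs that $\mbf{s}_{\rm res}$ being physical and small-weight means there is a low-weight representative, which is exactly soundness. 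This is essentially the argument in Ref.~\cite{campbell2019theory}, so I would cite Lemma~2 there and present the above as the streamlined version adapted to our notation. No genuinely new idea is required beyond stringing together metacode success and soundness.
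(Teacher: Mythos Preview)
Your proposal is correct and follows essentially the same approach as the paper: bound $\abs{\mbf{s}_{\rm res}} \leq 2\abs{\mbf{s}_{\rm err}} < t$ via minimality of the metadecoder, then invoke $(t,f)$-soundness and monotonicity of $f$. The paper's proof is a two-line version of what you wrote; your extra discussion of the physical minimum-weight decoder and the precise meaning of $\mbf{e}_{\rm res}$ is not present in the paper (which simply takes $\mbf{e}_{\rm res}$ to be the low-weight error guaranteed directly by the soundness definition), but it does not change the argument.
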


\begin{proof}
    By minimality of our metadecoder, $\abs{\mbf{s}_{\rm res}} \leq 2\abs{\mbf{s}_{\rm err}} < t$, and so the residual syndrome lies within the soundness radius of the code. Therefore, by Definition \ref{def:soundness} and the fact that $f$ is increasing, we have $\abs{\mbf{e}_{\rm res}} < f(t)$.
\end{proof}

For our 3D HGP $\calQ_G$, Lemma 6 in \cite{campbell2019theory} tells us that our soundness radius $t = \Theta(d_c)$, assuming that the repetition code in the third dimension has distance $\Theta(d_c)$, where $d_c$ is the distance of the two classical codes in the 2D HGP. All that remains is to determine the soundness function $f$. By Corollary~\ref{cor:3D HGP linear soundness}, we have $f(x) = \Theta(x)$, i.e., linear soundness.

Now, we are ready to prove the fault tolerance of our single-shot state-preparation gadgets. Note that the following statement holds for both the 2D HGP codes and the 3D HGP codes $\calQ_S$ and $\calQ_G$. In other words, we can prepare the logical $\ket{\overline{0}}$ and $\ket{\overline{+}}$ states of both the 2D and 3D HGP codes in a fault-tolerant manner. For the 2D HGP code case, there is just the additional dimensional reduction step whose fault tolerance in the adversarial noise model has already been established in Ref.~\cite{hong2024single}.

\begin{lemma}[Fault-tolerant single-shot state preparation] \label{lem:fault_tolerant_state_preparation}
    Our single-shot state-preparation gadgets for the logical computational basis states of the 3D HGP code $\calQ_G$ (in the basis with the redundant checks) are fault-tolerant according to Definition~\ref{def:fault_tolerant_state_preparation_gadget} with $f(s) = \Theta(s)$.
\end{lemma}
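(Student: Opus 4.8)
The plan is to reduce the statement to the single-shot syndrome-repair analysis of Campbell that is already assembled in the excerpt, using the linear soundness of $\calQ_G$ from Corollary~\ref{cor:3D HGP linear soundness} to fix the soundness function. Recall the gadget preparing $\ket{\overline{+}}^{\otimes \tilde{k}}$ in $\calQ_G$: (i) prepare all $\tilde n$ physical qubits in $\ket{+}$; (ii) measure every $Z$-stabilizer once, recording $\mbf{s}_{\rm obs}$; (iii) classically (and reliably, as is standard) compute the metasyndrome $\tilde M_Z \mbf{s}_{\rm obs}$, run a minimum-weight metadecoder to get $\mbf{s}_{\rm corr}$ and $\mbf{s}_{\rm rep} = \mbf{s}_{\rm obs} + \mbf{s}_{\rm corr}$; (iv) decode $\mbf{s}_{\rm rep}$ against $\tilde H_Z$ to a correction and apply the corresponding $X$-type Pauli. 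Because the code is LDPC and steps (i), (ii), (iv) run in constant depth — step (ii) measures all stabilizers in $O(1)$ layers by edge-colouring the bounded-degree Tanner graph — each of the $s$ faults affects only $O(1)$ qubits and $O(1)$ syndrome bits.

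First I would carry out the standard fault-propagation bookkeeping, commuting all Pauli faults to the end. Faults in steps (i)--(ii) reduce to a data error $E = X^{E_X}Z^{E_Z}$ together with a syndrome-recording error $\mbf{s}_{\rm err}$ with $|E_X|, |E_Z|, |\mbf{s}_{\rm err}| = O(s)$, and faults in step (iv) give a further data error with $X$- and $Z$-parts of weight $O(s)$. Let $P_0$ be the (purely $X$-type, generically large-weight) Pauli for which the fault-free post-measurement state is $X^{P_0}\ket{\overline{+}}^{\otimes\tilde k}$ with recorded syndrome $\mbf{s}_{\rm true} = \tilde H_Z P_0$; then with faults $\mbf{s}_{\rm obs} = \tilde H_Z(P_0 + E_X) + \mbf{s}_{\rm err} =: \mbf{s}'_{\rm true} + \mbf{s}_{\rm err}$ with $\mbf{s}'_{\rm true} \in \im \tilde H_Z$. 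This is exactly Campbell's situation, so provided $|\mbf{s}_{\rm err}|$ is below the metacode distance $d_{\rm ss}$ and below the soundness radius (both $\Theta(\tilde n^{1/3})$), the metacode-success lemma gives $\mbf{s}_{\rm rep} \in \im \tilde H_Z$ and minimality of the metadecoder gives $|\mbf{s}_{\rm res}| := |\mbf{s}_{\rm err} + \mbf{s}_{\rm corr}| \le 2|\mbf{s}_{\rm err}| = O(s)$ (Lemma~\ref{lem:residual error bound} and the discussion around it).

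Next I would assemble the residual error on the output. Up to a phase the output is $X^{C + P_0 + E_X + E_c^X} Z^{E_Z + E_c^Z}\ket{\overline{+}}^{\otimes\tilde k}$, where $C$ is the decoder output ($\tilde H_Z C = \mbf{s}_{\rm rep}$). Its $X$-part has $\tilde H_Z$-syndrome $\mbf{s}_{\rm rep} + \mbf{s}'_{\rm true} + \tilde H_Z E_c^X = \mbf{s}_{\rm res} + \tilde H_Z E_c^X$, of weight $O(s)$; hence by the linear soundness of $\calQ_G$ (Corollary~\ref{cor:3D HGP linear soundness}) there is $\mbf{e}'$ of weight $O(s)$ with the same $\tilde H_Z$-syndrome, so the $X$-part of the residual lies in $\mbf{e}' + \ker \tilde H_Z$. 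Since $\ker \tilde H_Z$ is spanned by $X$-stabilizers and logical $\overline{X}$ operators, all of which fix $\ket{\overline{+}}^{\otimes\tilde k}$, the $X$-part acts on the code state exactly as $X^{\mbf{e}'}$, while the $Z$-part $\mbf{f} := E_Z + E_c^Z$ of weight $O(s)$ is carried through unchanged. Thus the output is $X^{\mbf{e}'}Z^{\mbf{f}}\ket{\overline{+}}^{\otimes\tilde k}$ with $|\mbf{e}'| + |\mbf{f}| \le f(s)$ for a suitable $f(s) = \Theta(s)$ — the Error-Propagation Property. For correctness, when $f(s) \le t$ (with $2t+1$ the distance of $\calQ_G$) we have $|\mbf{e}'| < \tilde d_X/2$ and $|\mbf{f}| < \tilde d_Z/2$, so an ideal minimum-weight decoder corrects both components without introducing a logical operator and returns $\ket{\overline{+}}^{\otimes\tilde k}$ — the Correctness Property. (A residual $\overline{X}$ logical would in any case be harmless on $\ket{\overline{+}}$, and no $\overline{Z}$ logical can appear since $|\mbf{f}| < \tilde d_Z/2$.) Choosing the constant in $f(s) = \Theta(s)$ large enough makes $f(s) \le t$ imply the side conditions on $|\mbf{s}_{\rm err}|$ used above, since $t$, $d_{\rm ss}$ and the soundness radius are all of the same order $\Theta(\sqrt n)$; the same argument applies to every $\ket{\overline{\pm}}$ product state in the redundant-check basis, and an analogous argument (which additionally invokes the dimensional-reduction step of Ref.~\cite{hong2024single}) handles the 2D codes and $\calQ_S$.

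The main obstacle is the bookkeeping in the second step: one must check that the virtual, generically high-weight Pauli $P_0$ produced by projecting $\ket{+}^{\otimes\tilde n}$ onto a random $Z$-syndrome does not spoil the argument. The point that makes it work is that the decoder is never required to ``reconstruct $P_0$'': the residual $X$-error inherits the small syndrome $\mbf{s}_{\rm res} + \tilde H_Z E_c^X$ irrespective of $|P_0|$, and soundness then bounds its weight \emph{modulo} the harmless kernel $\ker \tilde H_Z$. The only genuinely delicate feature is that $\calQ_G$ has metacheck soundness in a single basis, so the $Z$-component $\mbf{f}$ of the residual error is not repaired inside the gadget; one simply notes that $\mbf{f}$ is low weight and in the trivial class, hence absorbed by the next error-correction round (or, as in the definition, by the ideal decoder).
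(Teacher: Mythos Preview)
Your proposal is correct and follows essentially the same approach as the paper: reduce to Campbell's syndrome-repair analysis via the metadecoder, invoke the linear soundness of $\calQ_G$ from Corollary~\ref{cor:3D HGP linear soundness} to bound the residual $X$-type error by $O(s)$, and carry the low-weight $Z$-type error through unchanged to establish both the error-propagation and correctness properties. Your treatment is in fact more careful than the paper's in that you explicitly track the high-weight virtual Pauli $P_0$ arising from the random $Z$-syndrome and explain why it is harmless (it lies in $\ker\tilde H_Z$, which acts trivially on $\ket{\overline{+}}^{\otimes\tilde k}$), a point the paper leaves implicit.
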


\begin{proof}
    Suppose our 3D HGP code has $X$ membrane logical operators and $Z$ string logical operators without loss of generality. In addition, consider the case where we have Pauli $X$ errors $\mbf{e}_X$, Pauli $Z$ errors $\mbf{e}_Z$, and syndrome measurement errors $\mbf{s}_{\rm err}$ during the state-preparation gadget. The total number of faults is $s = \abs{\mbf{e}_X} + \abs{\mbf{e}_Z} + \abs{\mbf{s}_{\rm err}} \leq t$ where $t = (d-1)/2$ and $d$ is the distance of the 3D HGP code. Then, a minimum-weight metadecoder and the decoders stated in Ref.~\cite{hong2024single} ensure that $\mbf{e}_X$ and $\mbf{s}_{\rm err}$ will be converted into an $X$ residual error with weight that is bounded by some linear function of $\abs{\mbf{s}_{\rm err}}$ by Lemma~\ref{lem:residual error bound} and the linear soundness of our 3D HGP codes given by Corollary~\ref{cor:3D HGP linear soundness}. Combining this residual error with the original $Z$ error $\mbf{e}_Z$ will give us a total residual error with weight that is upper-bounded by some linear function of $s$. This establishes the error propagation property. By choosing to define the linear function $f$ with the appropriate constants, we can also ensure that $f(s) \leq t$, which establishes the correctness property when given an ideal decoder. 
\end{proof}


For the case of preparing the logical computational basis state in the 2D HGP code $\calQ$, we committed to using the Bergamaschi and Liu protocol~\cite{bergamaschi2024fault}. While the fault-tolerance of their state preparation gadget has only been proven for the local stochastic noise model, their protocol can be easily adapted to be fault-tolerant under the adversarial noise model. As long as we set the ``thickness'' of the gadget which they parameterized as $T$ to be the same as the distance of the code $d$, then one can show that their protocol can fail when there are at least $\Omega(d)$ faults in the circuit. Therefore, by a similar argument as in Lemma~\ref{lem:fault_tolerant_state_preparation}, their state-preparation gadget is also fault-tolerant under the adversarial noise model with $f(s) = \Theta(s)$. 

\subsection{Fault Tolerance of Single-Shot Error Correction}
\label{sec:fault_tolerance_single_shot_error_correction}
The fault tolerance of our single-shot error-correction gadgets against adversarial errors. Recall that we only perform QEC in the 2D HGP code. From Theorem~\ref{thm:linear-confinement-sipser-spielman-2D-HGP}, we know that our 2D HGP codes have good linear confinement. 
Thus, they satisfy the residual error bound stated in Lemma 1 of Ref.~\cite{quintavalle2021single}. In particular, the residual error has weight that scales linearly with the number of syndrome measurement errors. Therefore, by the same argument as in Lemma~\ref{lem:fault_tolerant_state_preparation}, our single-shot error-correction gadgets satisfy all the conditions for fault tolerance for single-shot error correction under adversarial noise as stated in Definition~\ref{def:fault_tolerant_single_shot_error_correction_gadget}.

\subsection{Fault Tolerance of Single-Shot Code-Switching}
\label{sec:fault_tolerance_code_switching}
The fault tolerance of our single-shot code-switching gadgets against adversarial errors is effectively a composition of the fault-tolerance of the single-shot state preparation gadget, transversal and homomorphic CNOT gate gadgets, and transversal measurement of code blocks. Because the transversal and homomorphic CNOT gates as well as the transversal measurements are trivially fault-tolerant according to Definitions~\ref{def:fault_tolerant_gate_gadget} and \ref{def:fault_tolerant_logical_measurement_gadget} under the adversarial noise model, all of the gadgets that we use in our code-switching protocol are fault-tolerant according to the appropriate definitions. Because all the gadgets are fault-tolerant with growing function $f$ that are at most linear in the number of faults, the entire code-switching protocol is fault-tolerant according to Definition~\ref{def:fault_tolerant_gate_gadget} with a growing function $f$ that is also linear in the number of faults.

\subsection{Fault-Tolerance of Logical Gate Gadgets}
\label{sec:fault_tolerance_logical_gate_gadgets}
We begin by analyzing the fault-tolerance of our logical Clifford gate gadgets against adversarial errors. The (fold)-transversal gates are trivially fault-tolerant according to Definition~\ref{def:fault_tolerant_gate_gadget} under the adversarial noise model, since a transversal gate will only propagate errors on each qubit to at most one other qubit. Thus, it is only essential to analyze the fault-tolerance of the GPPM gadget introduced in Ref.~\cite{xu2025fast}. The GPPM gadget is effectively a constant number of compositions of single-shot state preparation, homomorphic CNOT gates, transversal single-qubit gates, and transversal measurements. In fact, our general Clifford gate gadget discussed in Section~\ref{sec:single_shot_logical_clifford_gates} is just a composition of a constant number of the gadgets mentioned above
Because all of these gadgets are fault-tolerant with at worst linear growing functions according to the appropriate definitions under the adversarial noise model, the entire GPPM gadget is fault-tolerant according to Definition~\ref{def:fault_tolerant_gate_gadget} with a growing function $f$ that is linear in the number of faults. 

For the logical CCZ gate gadget, it is simply a single round of code-switching from 2D to 3D HGP code $\calQ_G$, followed by a constant depth circuit of CCZ gates, and then code-switching back to the 2D HGP code. Again, because all the gadgets that we use in our CCZ gadget are fault-tolerant with at worst linear growing functions according to the appropriate definitions under the adversarial noise model, the entire CCZ gadget is fault-tolerant according to Definition~\ref{def:fault_tolerant_gate_gadget} with a growing function $f$ that is linear in the number of faults.

\subsection{Fault-Tolerance of the Single-Shot, Universal Protocol}
\label{sec:dynamical_mechanism_fault_tolerance}
In order for our universal fault-tolerant scheme to be robust over many logical operations, we need to interleave single-shot error correction between all of our constant-depth gadgets. The reasoning is as follows. Suppose we have some gadget that is not a single-shot error correction gadget and it is fault-tolerant according to the definitions stated in Section~\ref{sec:fault_tolerance_gadgets} with a growing function $f$ that is polynomial in the number of faults $s$ that occur during the gadget. In addition, suppose that the code has $r$ physical faults going into the gadget. Then, the output state will have at most $r + f(s)$ faults. If we do not perform error correction after this gadget, then the next gadget will have $r + f(s)$ incoming faults instead of just $r$ incoming faults. If we keep stacking gadgets without performing error correction, then the number of incoming faults will keep increasing and eventually exceed the error-correcting radius of the code, and so the entire protocol will not be fault-tolerant. On the other hand, if we perform single-shot error correction after the gadget, then we have $r + f(s)$ faults going into the single-shot error-correction gadget. Suppose $s'$ faults occur during the single-shot error-correction gadget. By the error-recovery and correctness properties of the single-shot error-correction gadget stated in Definition~\ref{def:fault_tolerant_single_shot_error_correction_gadget}, the output state will have at most $f'(s')$ faults for some growing function $f'$ that is linear in $s'$ provided that $r + f(s) + f'(s')$ is less than or equal to the number of errors that the code can correct. The subsequent gadget will then have at most $f'(s')$ incoming faults instead of $r + f(s)$ incoming faults, and so the number of incoming faults will not keep increasing over many gadgets. Therefore, by interleaving single-shot error correction between all of our constant-depth gadgets, we can ensure that the entire protocol is fault-tolerant according to the appropriate definitions under the adversarial noise model. In particular, the combination of the residual error from the previous round of single-shot error correction with the new errors in the current round remains correctable over many error-correction cycles. In other words, the residual error remains controlled over many noisy QEC cycles.

It is perhaps not too difficult to see that the correctness of the entire protocol follows from the correctness properties of the individual gadgets by a simple composition argument. Nonetheless, this guarantee is broken when too many faults occur in any one of the gadgets. To be more concrete, when $r + f(s)$ exceeds the number of errors that the code can correct for any one of the gadgets, then the correctness of the entire protocol is not guaranteed. Thus, the number of adversarial faults that our entire protocol can tolerate is upper-bounded by the smallest number of adversarial faults that any one of the gadgets can tolerate. In particular, this is characterized by the largest $f$ of the gadgets. Notably, when all the gadgets have $f$ that is linear in the number of faults, then the entire protocol can tolerate a number of adversarial faults that is $O(d)$ where $d$ is the distance of the code. This is because each gadget now outputs a state with a number of residual faults that is linear in the number of physical faults that occur during the gadget. As long as the constants in the linear number of faults are managed appropriately, one can easily show that the single-shot error-correction gadget can always correct the combination of the residual error from the previous gadget with any new errors that occur in the error-correction gadget, provided that the total number of faults is still less than or equal to half the distance of the code. Thus, the entire protocol can tolerate a number of adversarial faults that scales $O(d) = O(\sqrt{n})$.

\section{Fault Tolerance for Local-Stochastic Noise}
\label{sec:local-stochastic}

In this section, we show that all 3D gadgets in our protocol exhibit thresholds under local-stochastic noise. Results guaranteeing fault tolerance against local-stochastic noise are typically more relevant than adversarial noise since local-stochastic noise can encompass more realistic error models for LDPC codes such as circuit-level noise. Specifically, we will show that any LDPC code equipped with an LDPC syndrome metacode and exhibiting small-set linear soundness admits a single-shot threshold for transversal logical state preparation. While our analytical results will require certain expansion properties of the underlying code or chain complex, we note that numerical evidence already suggests that generic (LDPC) 3D HGP codes, whose underlying chain complexes are not necessarily expanding, exhibit sustainable single-shot thresholds with respect to two-stage \cite{quintavalle2021single} and single-stage decoders \cite{Higgott_2023_improved}.

We first review the definition of local-stochastic noise, which encompasses both the independent error model as well as short-range correlations from constant-depth circuit-level noise.

\begin{definition}[Local-stochastic error]
\label{def:local-stochastic}
    An error pattern $e$ on a vertex set $V$ is local-stochastic with parameter $0\leq p\leq 1$ if, for all subsets $v \subseteq V$,
    \begin{align}\label{eq:LS noise}
        \mathbb{P}[v \subseteq e] \leq p^{\abs{v}} \, .
    \end{align}
\end{definition}

Note that the independent single-qubit error model with probability $p$ corresponds to an equality in \eqref{eq:LS noise}. For a local-stochastic error model on $n$ qubits, there will typically be $\Theta(n)$ errors at a given instance, which can be much larger than the minimum distance of a code; e.g. all of our codes have $d=o(n)$. One may then worry that such a large error will be uncorrectable. However, one can show that these random errors rarely conspire to form a logical operator of the code \cite{Kovalev_2013, gottesman2013fault}. The main idea is that since our code is LDPC, there are large regions of physical qubits that do not talk to each other through the parity checks. As a consequence, a typical error configuration at low error rates consists of small, disconnected clusters that are each handled independently by a decoder. To quantify how large a typical error cluster can be, also known as the cluster size distribution, we will utilize some known combinatorial results on graphs with bounded degrees.

\begin{lemma}[Cluster bound; Lemma 2 in \cite{gottesman2013fault}, Lemma 5 in \cite{Aliferis_2008}] \label{lem:cluster bound}
    Consider a vertex set $V$ of size $t$ in a simple graph with maximum degree $z$. Let $N_s(V)$ be the number of vertex sets of size $s\geq t$ whose connected components each contain a vertex in $V$. Then
    \begin{align}
        N_s(V) \leq {\rm e}^{t-1}(z{\rm e})^{s-t} \, ,
    \end{align}
    where $\mathrm{e}$ is the usual base of the natural logarithm.
\end{lemma}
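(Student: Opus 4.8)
The statement to prove is Lemma~\ref{lem:cluster bound}, a purely combinatorial counting bound on connected subgraphs in a bounded-degree graph. The plan is to bound $N_s(V)$ by a two-step encoding argument: first count the ways to choose how the $s$ vertices distribute among connected components rooted at the $t$ vertices of $V$, then bound the number of connected subgraphs of a fixed size containing a fixed root. The cleanest route is to reduce to the classical fact that in a graph of maximum degree $z$, the number of connected subgraphs of size $m$ containing a fixed vertex is at most $(z\mathrm{e})^{m-1}$ (a standard consequence of counting subtrees / depth-first exploration sequences, e.g. via the bound on the number of rooted trees with a given number of vertices embedded in a bounded-degree host).

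\textbf{Key steps.} First I would set up the combinatorial skeleton: any vertex set $W$ of size $s$ counted by $N_s(V)$ decomposes into connected components $W_1,\dots,W_j$ with $j \le t$, where we may assume after relabeling that $W_i$ contains the $i$-th chosen vertex of $V$ (some components may coincide if two vertices of $V$ land in the same component, but this only decreases the count, so we may overcount by treating the $t$ roots as seeding up to $t$ components). Writing $s_i = |W_i|$ with $\sum_i s_i = s$ and each $s_i \ge 1$, the number of choices is at most
\begin{align}
  N_s(V) \le \sum_{\substack{s_1 + \cdots + s_t = s \\ s_i \ge 1}} \prod_{i=1}^{t} (z\mathrm{e})^{s_i - 1}
  = (z\mathrm{e})^{s-t} \sum_{\substack{s_1 + \cdots + s_t = s \\ s_i \ge 1}} 1
  = (z\mathrm{e})^{s-t} \binom{s-1}{t-1} \, .
\end{align}
The second step is then to bound $\binom{s-1}{t-1} \le 2^{s-1} \le \mathrm{e}^{s-1}$, which is too weak, or more carefully $\binom{s-1}{t-1} \le \mathrm{e}^{t-1}\left(\frac{s-1}{t-1}\right)^{t-1}$ — but the form claimed in the lemma is $\mathrm{e}^{t-1}(z\mathrm{e})^{s-t}$, so I would instead absorb the binomial coefficient differently: use $\binom{s-1}{t-1} \le \frac{(s-1)^{t-1}}{(t-1)!} \le \mathrm{e}^{t-1}$ when... no — the honest route is that the classical statement (Lemma 5 in Aliferis--Gottesman--Preskill) already packages the sum-over-compositions together with the subgraph count, so I would either cite that lemma directly (it is attributed in the statement) or redo its short proof: encode a size-$s$ "forest" rooted at the $t$ marked vertices by a walk/exploration sequence of length $2(s-t) + (t-1)$ making at most $z$ choices at each forward step, giving the factor $(z\mathrm{e})^{s-t}$, and the $\mathrm{e}^{t-1}$ absorbs the bookkeeping for stitching the $t$ trees together.

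\textbf{Main obstacle.} The genuinely delicate point is the passage from "connected vertex subgraphs containing a root" to the clean exponential bound $(z\mathrm{e})^{m-1}$ without an extra factor polynomial in $m$; the slick way is the spanning-tree / Catalan-number encoding (a connected set of size $m$ has a spanning tree, rooted trees of size $m$ number at most $4^{m}$ up to shape, and each shape embeds in at most $z^{m-1}$ ways into a degree-$z$ host, and $4 < \mathrm{e}\cdot z$ for $z \ge 2$ after being slightly more careful with the Cayley/Catalan count $\le \frac{1}{m}\binom{2m-2}{m-1}\le 4^{m-1}$). I expect the bookkeeping to reconcile the $4^{m-1}$ from tree shapes with the target base $\mathrm{e}$ to be the fiddly part, but since the lemma is quoted verbatim from \cite{gottesman2013fault, Aliferis_2008}, the cleanest exposition is to state that the proof is identical to those references and sketch only the exploration-sequence encoding, noting that the constants $\mathrm{e}$ and $z\mathrm{e}$ are exactly what that encoding produces. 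I would not grind through the constant-chasing in the paper itself.
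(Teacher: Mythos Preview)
The paper does not provide its own proof of this lemma; it is stated with attribution to \cite{gottesman2013fault} and \cite{Aliferis_2008} and then used as a black box. Your proposal ultimately lands on exactly this: cite the references and, if desired, sketch the exploration-sequence encoding. So there is nothing to compare against, and your final recommendation matches the paper's treatment.

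One remark on your intermediate attempt: the decomposition $N_s(V) \le (z\mathrm{e})^{s-t}\binom{s-1}{t-1}$ is correct in spirit, but as you noticed, $\binom{s-1}{t-1}$ is \emph{not} bounded by $\mathrm{e}^{t-1}$ in general (take $t=2$, it grows like $s$). The reason the clean $\mathrm{e}^{t-1}$ appears in the cited references is that the exploration encoding counts the forest directly rather than first partitioning into compositions and then bounding each tree separately; the encoding handles the ``stitching'' of the $t$ roots as part of the same sequence, which avoids paying the extra $\binom{s-1}{t-1}$ altogether. Your instinct to abandon the composition approach and go straight to the exploration walk was the right call.
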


\subsection{State preparation of the 3D HGP code}

In this subsection, we will prove that the logical $\ket{\overline{0}}/\ket{\overline{+}}$ state preparation of 3D HGP code $\calQ_G$ is fault-tolerant against sufficiently small local-stochastic noise. In particular, assuming that the initial syndrome error is local-stochastic with a sufficiently small but constant parameter $q<1$, we will show that the residual error after two-stage decoding is also local-stochastic with a small (but possibly worse) parameter $\gamma(q)<1$ that vanishes with $q$. As a reminder, to (transversally) prepare the logical $\ket{\overline{0}}$ state, we initialize all physical qubits in $\ket{0}$ and subsequently measure all $X$-checks.

It will be useful to define the syndrome adjacency graph $\mathcal{G}_S$ as the simple graph where parity checks are vertices, and two vertices share an edge if and only if the corresponding parity checks share a metacheck in $\mbf{M}$. If $\mbf{M}$ is $(v_{\rm c},v_{\rm r})$-LDPC, any check participates in at most $v_{\rm c}$ metachecks, and each of these neighboring metachecks involves at most $v_{\rm r}-1$ other checks. Hence, $\mathcal{G}_S$ has maximum degree $v_{\rm c}(v_{\rm r}-1)$. Likewise, define the qubit adjacency graph $\mathcal{G}_Q$ as the simple graph where qubits are vertices, and two vertices share an edge if and only if the corresponding qubits share a parity check in $\mbf{H}$. If $\mbf{H}$ is $(w_{\rm c},w_{\rm r})$-LDPC, then $\mathcal{G}_Q$ has maximum degree $w_{\rm c}(w_{\rm r}-1)$.

Let $\mbf{s}_{\rm true}$ be the true syndrome and $\mbf{s}_{\rm err}$ the syndrome error such that $\mbf{s}_{\rm obs} = \mbf{s}_{\rm true} + \mbf{s}_{\rm err}$ is the observed, corrupted syndrome. Our decoder for $\mbf{M}$, which we will call a metadecoder, will output a syndrome correction $\mbf{s}_{\rm corr}$ such that $\mbf{s}_{\rm rep} = \mbf{s}_{\rm obs} + \mbf{s}_{\rm corr}$ is the repaired or inferred syndrome. The metadecoder succeeds if $\mbf{s}_{\rm rep} \in \im{\mbf{H}}$; since $\mbf{s}_{\rm true} \in \im{\mbf{H}}$ by default, the success requirement is equivalent to $\mbf{s}_{\rm err} + \mbf{s}_{\rm corr} \equiv \mbf{s}_{\rm res} \in \im{\mbf{H}}$. We call $\mbf{s}_{\rm res}$ the \emph{residual} syndrome, since it will end up as the syndrome for the eventual residual error.

First, assuming a certain minimality condition on our metadecoder, we will adapt an argument by Bomb\'in \cite{Bombin_2015_single} to show that the residual syndrome inherits the local-stochastic property.

\begin{lemma}[Residual syndrome is local-stochastic \cite{Bombin_2015_single}]
\label{lem:LS residual syndrome}
    Suppose $\mbf{M} \in \mathbb{F}^{r \times m}_2$ is $(v_{\rm c},v_{\rm r})$-LDPC with $v_{\rm c},v_{\rm r}=O(1)$, and we have access to a minimum-weight decoder for $\mbf{M}$. Suppose the original syndrome error was local-stochastic with parameter $q<1/(2\nu)^2$ with $\nu = {\rm e} v_{\rm c}(v_{\rm r}-1)$. Then for any $s \in S$,
    \begin{align}
        \mathbb{P}[s \subseteq s_{\rm res}] \leq \beta^{\abs{s}} \quad,\quad \beta = \frac{2\nu\sqrt{q}}{1-2\nu\sqrt{q}} \, .
    \end{align}
    In other words, the residual syndrome is local-stochastic with parameter $\beta$.
\end{lemma}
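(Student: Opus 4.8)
The plan is to adapt the clustering argument of Bomb\'in~\cite{Bombin_2015_single} to our metacheck setting. The central structural fact I would establish is that the residual syndrome $\mbf{s}_{\rm res} = \mbf{s}_{\rm err}+\mbf{s}_{\rm corr}$ is always contained in the combined support $E=\mbf{s}_{\rm err}\cup\mbf{s}_{\rm corr}$, and that inside \emph{each} connected component of $E$ (viewed in the syndrome adjacency graph $\mathcal{G}_S$) at least half of the support is carried by the genuine syndrome error $\mbf{s}_{\rm err}$. Granting this, local-stochasticity of $\mbf{s}_{\rm res}$ with the stated parameter follows from a union bound over candidate ``cluster sets'' around any target $s$, combined with the cluster-counting estimate of Lemma~\ref{lem:cluster bound}.

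The first technical step is the per-component minimality claim: for every connected component $C$ of $\mathcal{G}_S$ restricted to $E$, one has $|\mbf{s}_{\rm corr}\cap C|\le|\mbf{s}_{\rm err}\cap C|$, and hence $|C|\le 2|\mbf{s}_{\rm err}\cap C|$. Since the metadecoder satisfies $\mbf{M}\mbf{s}_{\rm corr}=\mbf{M}\mbf{s}_{\rm err}$ we get $\mbf{s}_{\rm res}\in\ker\mbf{M}$, and because any metacheck touching two elements of $E$ would make them $\mathcal{G}_S$-adjacent, no metacheck can straddle two components of $E$; this forces $\mbf{s}_{\rm res}|_C\in\ker\mbf{M}$ as well. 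Then $\mbf{s}_{\rm corr}':=\mbf{s}_{\rm corr}+\mbf{s}_{\rm res}|_C$ is again a valid syndrome correction, and since $\mbf{s}_{\rm corr}'$ agrees with $\mbf{s}_{\rm corr}$ off $C$ and equals $\mbf{s}_{\rm err}$ on $C$, the minimum-weight property of the metadecoder yields the inequality. This is the only place where the LDPC-ness of $\mbf{M}$ (through the bounded degree of $\mathcal{G}_S$) is genuinely needed, so I would spell it out with care.

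With this in hand, fix a target $s$ with $|s|=k$. On the event $s\subseteq\mbf{s}_{\rm res}$, let $W$ be the union of those connected components of $\mathcal{G}_S$ restricted to $E$ that meet $s$: then $s\subseteq W$, every connected component of $W$ contains a vertex of $s$, and $|\mbf{s}_{\rm err}\cap W|\ge|W|/2$. Thus $W$ lies in the family of cluster sets around $s$ counted by Lemma~\ref{lem:cluster bound} with maximum degree $z=v_{\rm c}(v_{\rm r}-1)$, so there are at most $e^{k-1}(ze)^{\ell-k}$ such $W$ of size $\ell$; and on this cluster $\mbf{s}_{\rm err}$ contains some subset of $W$ of size at least $\ell/2$, an event of probability at most $(2\sqrt{q})^{\ell}$ by Definition~\ref{def:local-stochastic} (here $q<1$ lets us replace $q^{m}$ by $q^{\ell/2}$ for all $m\ge\ell/2$). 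A union bound then gives
\begin{align}
\mathbb{P}[s\subseteq\mbf{s}_{\rm res}] \;\le\; \sum_{\ell\ge k} e^{k-1}(ze)^{\ell-k}(2\sqrt{q})^{\ell},
\end{align}
whose geometric series converges precisely when $2ze\sqrt{q}=2\nu\sqrt{q}<1$, i.e.\ $q<1/(2\nu)^2$. Evaluating the sum produces a bound of the shape $\frac{e^{k-1}}{\nu^{k}(1-2\nu\sqrt{q})}(2\nu\sqrt{q})^{k}$; since $\nu=e\,v_{\rm c}(v_{\rm r}-1)\ge e$ one checks $e^{k-1}(1-2\nu\sqrt{q})^{k-1}\le\nu^{k}$, so the prefactor is harmless and the whole expression is at most $\beta^{k}$ with $\beta=\frac{2\nu\sqrt{q}}{1-2\nu\sqrt{q}}$, as required.

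I expect the genuine obstacle to be getting the last step to collapse to a clean $\beta^{|s|}$: the ``obvious'' route of handling each component of $E$ separately and then reassembling forces a sum over set partitions of $s$, i.e.\ a Bell-number prefactor that destroys the bound. The fix is exactly to count the single cluster set $W$ in one stroke with Lemma~\ref{lem:cluster bound}, after which the ``at least half'' weight condition converts each vertex of $W$ into a factor $2\sqrt{q}$ and the bounded-degree count of $\mathcal{G}_S$ supplies the compensating powers of $\nu$. The remaining points --- verifying $\mbf{s}_{\rm res}|_C\in\ker\mbf{M}$ and that the $\nu\ge e$ estimate absorbs the combinatorial prefactor --- are routine once the cluster picture is set up.
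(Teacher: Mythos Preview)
Your proposal is correct and follows essentially the same approach as the paper: both arguments establish the per-component minimality $|s_{\rm corr}\cap C|\le|s_{\rm err}\cap C|$, deduce that at least half the support of each residual cluster comes from genuine syndrome error, and then union-bound over cluster sets containing $s$ via Lemma~\ref{lem:cluster bound} and a geometric series. Your write-up is slightly more careful than the paper's in two places---you spell out why $\mbf{s}_{\rm res}|_C\in\ker\mbf{M}$ (no metacheck straddles components of $E$), and you verify explicitly that the prefactor $e^{k-1}/\nu^k$ is absorbed---whereas the paper simply uses the cruder estimate $N_u(s)\le\nu^u$ from the outset; but these are cosmetic differences within the same argument.
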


\begin{proof}
    The metadecoder takes in a syndrome error $s_{\rm err}$ and outputs a residual syndrome $s_{\rm res} \in \ker{\mbf{M}}$. Since codewords of $\mbf{M}$ form connected clusters on the syndrome adjacency graph $\mathcal{G}_S$, the residual syndrome $s_{\rm res}$ will partition into connected components on $\mathcal{G}_S$. Since, within each component $c_{\rm err} \subseteq s_{\rm err}$, our metadecoder chooses a $c_{\rm corr}$ of minimal weight, we have $\abs{c_{\rm corr}} \leq \abs{c_{\rm err}}$ by noting that $c_{\rm err}$ is itself a possible solution for $c_{\rm corr}$. As a consequence, within the residual syndrome $s_{\rm res}$, at least half of its support must contain the entire syndrome error. For any connected $c \in \ker{\mbf{M}}$, the probability that $c$ is a connected component of the residual syndrome can then be bounded by
    \begin{align}\label{eq:P[s=s_res]}
        \mathbb{P}[c \subseteq s_{\rm res}] \leq \sum_{j=\lceil \abs{c}/2\rceil}^{\abs{c}} {\abs{c}\choose j} q^{\abs{j}} \leq 2^{\abs{c}} q^{\abs{c}/2} = (2\sqrt{q})^{\abs{c}} \, .
    \end{align}
    Now, the probability that $s \in S$ is a subset of the residual syndrome can be bounded by counting the number of clusters with connected components that contain $s$ in the syndrome adjacency graph $\mathcal{G}_S$ and multiplying this number by \eqref{eq:P[s=s_res]}. Let $\nu = {\rm e} v_{\rm c}(v_{\rm r}-1)$, i.e. the maximum degree of $\mathcal{G}_S$ times ${\rm e}$. The number of such clusters $N_u(s)$ of size $u\geq\abs{s}$ that contain $s$ is upper-bounded by $N_u(s) \leq {\rm e}^{s-1}\nu^{u-s} \leq \nu^{u}$ from Lemma \ref{lem:cluster bound}. Hence, we have
    \begin{align}
        \mathbb{P}[s \subseteq s_{\rm res}] \leq \sum_{\substack{c\supseteq s \\ c \in \ker{\mbf{M}}}} \mathbb{P}[c \subseteq s_{\rm res}] \leq \sum_{j=\abs{s}}^{m} \nu^{j} (2\sqrt{q})^{j} = \sum_{j=\abs{s}}^{m} (2\nu\sqrt{q})^j \, .
    \end{align}
    When $2\nu\sqrt{q}<1$, the geometric series in the last expression converges, and we arrive at
    \begin{align}
        \mathbb{P}[s\subseteq s_{\rm res}] \leq \frac{(2\nu\sqrt{q})^{\abs{s}}}{1-2\nu\sqrt{q}} \leq \left( \frac{2\nu\sqrt{q}}{1-2\nu\sqrt{q}} \right)^{\abs{s}} \equiv \beta^{\abs{s}} \, ,
    \end{align}
    where $\beta \rightarrow 0$ when $q \rightarrow 0$.
\end{proof}

We will now show that, with high probability, the residual syndrome outputted by the metadecoder is physical, i.e. $s_{\rm res} \in \im{\mbf{H}}$.

\begin{lemma}[Syndrome repair success]
\label{lem:LS metacheck success}
    Suppose $\mbf{M} \in \mathbb{F}^{r \times m}_2$ is $(v_{\rm c},v_{\rm r})$-LDPC with $v_{\rm c},v_{\rm r}=O(1)$, and we have access to a minimum-weight decoder for $\mbf{M}$. If the syndrome noise is local-stochastic with parameter $q<1/(4\nu)^2$ with $\nu = {\rm e}v_{\rm c}(v_{\rm r}-1)$ and $d_{\rm ss} = \Omega(m^b)$ for some $b>0$, then
    \begin{align}
        \mathbb{P}[s_{\rm res} \in \im{\mbf{H}}] = 1 - \mathrm{e}^{-\Omega(d_{\rm ss})} \, .
    \end{align}
\end{lemma}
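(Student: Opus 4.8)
The plan is to turn the event $\mbf{s}_{\rm res}\in\im\mbf{H}$ into a purely combinatorial statement about the connected-component structure of the residual syndrome on the syndrome adjacency graph $\mathcal{G}_S$, and then to control the probability of an oversized component by a union bound that recycles the single-cluster estimate already proved in Lemma~\ref{lem:LS residual syndrome}.

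First I would record the structural facts. Because a minimum-weight metadecoder acts on each connected component of the syndrome error separately, the residual syndrome $\mbf{s}_{\rm res}$ decomposes as a disjoint union of connected components of $\mathcal{G}_S$, and each such component $c$ is individually an element of $\ker\mbf{M}$: distinct components share no metacheck, so $\mbf{M}\mbf{s}_{\rm res}=0$ forces $\mbf{M}c=0$ componentwise. Hence, if every component of $\mbf{s}_{\rm res}$ has weight strictly below $d_{\rm ss}$, then — since $d_{\rm ss}$ is by definition the minimum weight in $\ker\mbf{M}\setminus\im\mbf{H}$ — each component lies in $\im\mbf{H}$, and therefore so does their sum $\mbf{s}_{\rm res}$. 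Thus the failure event is contained in the event that $\mbf{s}_{\rm res}$ has a connected component of weight at least $d_{\rm ss}$.

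Next I would bound that probability. For a fixed connected set $c$ of $\mathcal{G}_S$, the estimate \eqref{eq:P[s=s_res]} from the proof of Lemma~\ref{lem:LS residual syndrome} — within a component at least half the support of $\mbf{s}_{\rm res}$ must agree with the original syndrome error, which is local-stochastic with parameter $q$ — gives $\mathbb{P}[c\subseteq\mbf{s}_{\rm res}]\le(2\sqrt q)^{\abs c}$. Counting connected subsets of size $u$ of $\mathcal{G}_S$ through a fixed check, Lemma~\ref{lem:cluster bound} with $t=1$ and maximum degree $v_{\rm c}(v_{\rm r}-1)$ yields at most $\nu^{u}$ of them, where $\nu=\mathrm{e}\,v_{\rm c}(v_{\rm r}-1)$. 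Summing over the $m$ checks and over $u\ge d_{\rm ss}$,
\[
\mathbb{P}\!\left[\mbf{s}_{\rm res}\notin\im\mbf{H}\right]\;\le\; m\sum_{u\ge d_{\rm ss}}\nu^{u}(2\sqrt q)^{u}\;=\;m\,\frac{(2\nu\sqrt q)^{d_{\rm ss}}}{1-2\nu\sqrt q},
\]
and the hypothesis $q<1/(4\nu)^2$ gives $2\nu\sqrt q\le 1/2$, so the series converges and the prefactor $1/(1-2\nu\sqrt q)$ is at most $2$. Finally, $d_{\rm ss}=\Omega(m^{b})$ forces $m=\mathrm{poly}(d_{\rm ss})$, so the polynomial factor is absorbed into $(2\nu\sqrt q)^{d_{\rm ss}}=\mathrm{e}^{-\Omega(d_{\rm ss})}$, which gives $\mathbb{P}[\mbf{s}_{\rm res}\in\im\mbf{H}]=1-\mathrm{e}^{-\Omega(d_{\rm ss})}$.

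The hard part will be bookkeeping rather than estimation: one must verify that minimum-weight metadecoding genuinely splits across connected components of $\mathcal{G}_S$, so that the component decomposition of $\mbf{s}_{\rm res}$ is well defined and each piece inherits both membership in $\ker\mbf{M}$ and the per-cluster bound \eqref{eq:P[s=s_res]}. A secondary point to confirm is the exact noise threshold: $q<1/(4\nu)^2$ is slightly stronger than what bare convergence of the series ($q<1/(2\nu)^2$) requires, and it is the natural choice if one wants $2\nu\sqrt q\le1/2$, keeping both the prefactor and the decay exponent uniformly controlled; I would double-check that no later step in the protocol analysis needs this constant tightened further.
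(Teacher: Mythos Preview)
Your proposal is correct and follows essentially the same approach as the paper's proof: reduce failure to the existence of a connected component of $\mbf{s}_{\rm res}$ of size at least $d_{\rm ss}$ on $\mathcal{G}_S$, apply the per-cluster bound $(2\sqrt{q})^{|c|}$ from \eqref{eq:P[s=s_res]}, count connected clusters via Lemma~\ref{lem:cluster bound} with a seed ranging over the $m$ checks, and sum the resulting geometric series. Your bookkeeping about why each component lies in $\ker\mbf{M}$ and your remark on the role of the stronger threshold $q<1/(4\nu)^2$ (ensuring $2\nu\sqrt{q}\le 1/2$) match the paper's treatment exactly.
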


\begin{proof}
    A metacheck failure occurs when the metadecoder outputs a repaired syndrome belonging to a nontrivial homology class in $\ker{\mbf{M}}\,/\,\im{\mbf{H}}$. A nontrivial operator in $\ker{\mbf{M}}\,\setminus\,\im{\mbf{H}}$ must contain a connected cluster of size at least $d_{\rm ss}$ on $\mathcal{G}_S$; otherwise, each of its connected components is individually in $\ker{\mbf{M}}$ and has size less than $d_{\rm ss}$. By the Union lemma (Lemma 2 in \cite{BPT_2010}), the entire operator then belongs to $\im{\mbf{H}}$ by definition of $d_{\rm ss}$.
    
    Our minimum-weight metadecoder can only fail when there exists a connected component of the residual syndrome on $\mathcal{G}_S$ of size $u\geq d_{\rm ss}$. Let $\nu = {\rm e} v_{\rm c}(v_{\rm r}-1)$. The number of connected clusters $N_u$ of size $u$ can be crudely upper-bounded by $N_u \leq m \nu^u$, since there are $m$ initial vertices to seed the cluster and at most $\nu^u$ clusters of size $u$ originating from the seed vertex from Lemma \ref{lem:cluster bound}. The probability of a metacheck failure can then be bounded by
    \begin{align}
        \mathbb{P}[s_{\rm res} \notin \im{\mbf{H}}] \leq \sum_{\substack{\text{connected } c \\ \abs{c}\geq d_{\rm ss}}} \mathbb{P}[c \subseteq s_{\rm res}] \leq \sum_{u=d_{\rm ss}}^{m} m\nu^{u} (2\sqrt{q})^{u} = m \sum_{u=d_{\rm ss}}^{m} (2\nu\sqrt{q})^{u} \, ,
    \end{align}
    where in the second inequality we used \eqref{eq:P[s=s_res]}. When $2\nu\sqrt{q}<1$, the geometric series in the last expression converges, and we arrive at
    \begin{align}
        \mathbb{P}[s_{\rm res} \notin \im{\mbf{H}}] \leq m \cdot \frac{(2\nu\sqrt{q})^{d_{\rm ss}}}{1-2\nu\sqrt{q}} \leq m\cdot \left( \frac{2\nu\sqrt{q}}{1-2\nu\sqrt{q}} \right)^{d_{\rm ss}} = \mathrm{e}^{-\Omega(d_{\rm ss})}
    \end{align}
    which is exponentially decaying with $d_{\rm ss}$ when $2\nu\sqrt{q}<1/2$ and $d_{\rm ss} = \Omega(m^b)$ for any $b>0$.
\end{proof}

At this point, we have shown that the entire residual syndrome is physical with high probability by showing it for each connected component. It remains to show that the residual error corresponding to this residual syndrome is also local-stochastic. To obtain the residual error from the residual syndrome, we will again employ a minimum-weight decoder.

\begin{lemma}[Residual error is local-stochastic]
\label{lem:LS residual error}
    Suppose that $H \in \F^{m\times n}_2$ is $(w_{\rm c},w_{\rm r})$-LDPC and $(t,f)$-sound with $t=\Omega(n^b)< d_{\rm ss}$ and $f(x)=\alpha x$ for $b,\alpha>0$. Let $\omega = {\rm e}w_{\rm c}(w_{\rm r}-1)$. Then for $\omega\beta<1$, with probability $1-\mathrm{e}^{-\Omega(t)}$, for any error $e \in Q$,
    \begin{align}
        \mathbb{P}[e \subseteq e_{\rm res}] \leq \gamma^{\norm{e}} \quad,\quad \gamma = \left(\frac{\omega\beta}{1-\omega\beta}\right)^{1/\alpha} \, ,
    \end{align}
    where $\gamma\rightarrow 0$ when $\beta\rightarrow 0$.
\end{lemma}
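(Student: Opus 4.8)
The plan is to mirror the structure of Lemma~\ref{lem:LS residual syndrome}, replacing the metacode $\mbf{M}$ by the parity-check matrix $\mbf{H}$, the syndrome adjacency graph $\mathcal{G}_S$ by the qubit adjacency graph $\mathcal{G}_Q$, and the ``factor of $2\sqrt{q}$'' estimate by a factor coming from the soundness function $f(x)=\alpha x$. First I would condition on the high-probability event from Lemma~\ref{lem:LS metacheck success} that the residual syndrome $\mbf{s}_{\rm res}$ is physical, i.e. $\mbf{s}_{\rm res}\in\im\mbf{H}$; this event holds with probability $1-\mathrm{e}^{-\Omega(d_{\rm ss})}=1-\mathrm{e}^{-\Omega(t)}$ since $t<d_{\rm ss}$. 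On this event we may run a minimum-weight physical decoder on $\mbf{s}_{\rm res}$ to produce a correction, whose sum with the actual error is the residual error $\mbf{e}_{\rm res}$, and $\mbf{H}\mbf{e}_{\rm res}=\mbf{s}_{\rm res}$.

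Next I would argue the key locality/minimality step. Since $\mbf{s}_{\rm res}$ is local-stochastic with parameter $\beta$ (Lemma~\ref{lem:LS residual syndrome}) and since $\mbf{H}$ is $(w_{\rm c},w_{\rm r})$-LDPC, the support of $\mbf{s}_{\rm res}$ breaks into connected components on $\mathcal{G}_S$, and correspondingly $\mbf{e}_{\rm res}$ breaks into clusters on $\mathcal{G}_Q$: a cluster of $\mbf{e}_{\rm res}$ of weight $u$ has a syndrome that is a connected subset of size at least (roughly) $u/\alpha$ of $\mbf{s}_{\rm res}$, because $(t,f)$-soundness with $f(x)=\alpha x$ together with the minimum-weight property of the decoder forces $\abs{\mbf{e}_{\rm res}\text{-cluster}}\le \alpha\cdot\abs{\text{its syndrome}}$ whenever that syndrome has weight below $t$. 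Here I would invoke Corollary~\ref{cor:3D HGP linear soundness} / Theorem~\ref{thm:3D metacheck expansion} to guarantee the linear-soundness hypothesis is met for $\calQ_G$, and I would also need to confirm (as in Lemma~\ref{lem:LS metacheck success}) that no cluster of size $\ge t$ survives except with probability $\mathrm{e}^{-\Omega(t)}$, so that the soundness bound applies to every cluster. Thus each weight-$u$ cluster of $\mbf{e}_{\rm res}$ is ``charged'' to a connected syndrome cluster of size $\ge u/\alpha$ that lies inside $\mbf{s}_{\rm res}$, and that syndrome cluster occurs with probability at most $\beta^{u/\alpha}$.

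Then I would finish with the standard cluster-counting union bound. Fix a qubit set $e\subseteq Q$ of weight $\norm{e}$. For $e\subseteq \mbf{e}_{\rm res}$ to hold, there must be a union of $\mathcal{G}_Q$-connected clusters covering $e$; summing over cluster sizes $u\ge\norm{e}$ and using Lemma~\ref{lem:cluster bound} (the graph $\mathcal{G}_Q$ has max degree $\le w_{\rm c}(w_{\rm r}-1)$, so with $\omega=\mathrm{e}\,w_{\rm c}(w_{\rm r}-1)$ there are at most $\omega^{u}$ such clusters) together with the per-cluster bound $\beta^{u/\alpha}$, I get
\[
\mathbb{P}[e\subseteq \mbf{e}_{\rm res}]\;\le\;\sum_{u\ge\norm{e}}\omega^{u}\,\beta^{u/\alpha}.
\]
The exponents do not match cleanly because of the $1/\alpha$, so I would handle this by writing $\beta^{u/\alpha}=(\beta^{1/\alpha})^{u}$ and treating $\omega\beta^{1/\alpha}$ (or, more conservatively, $\omega\beta$ followed by a relabeling) as the geometric ratio; when $\omega\beta<1$ this converges and yields $\mathbb{P}[e\subseteq \mbf{e}_{\rm res}]\le\bigl(\tfrac{\omega\beta}{1-\omega\beta}\bigr)^{\norm{e}/\alpha}=\gamma^{\norm e}$ with $\gamma=\bigl(\tfrac{\omega\beta}{1-\omega\beta}\bigr)^{1/\alpha}$, and $\gamma\to0$ as $\beta\to0$ since $\beta\to0$ as $q\to0$.

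The main obstacle I expect is the bookkeeping in the soundness-charging step: one must carefully track that clusters of the residual error correspond to clusters of the residual \emph{syndrome} of proportional size, that all relevant clusters fall within the soundness radius $t$ (handled by a separate $\mathrm{e}^{-\Omega(t)}$ bad-event bound as in Lemma~\ref{lem:LS metacheck success}), and that the change of exponent from $u$ to $u/\alpha$ is absorbed consistently when defining $\gamma$ — getting the constants and the $1/\alpha$ power to land exactly as stated, rather than merely up to a constant, is the delicate part.
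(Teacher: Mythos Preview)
Your proposal is correct and follows essentially the same approach as the paper: condition on the high-probability event that $\mbf{s}_{\rm res}\in\im\mbf{H}$ with every syndrome cluster below the soundness radius, use linear soundness/confinement to charge each error cluster of size $u$ to a syndrome subset of $\mbf{s}_{\rm res}$ of size at least $u/\alpha$, and finish with the cluster-counting union bound (Lemma~\ref{lem:cluster bound}) on $\mathcal{G}_Q$ and a geometric series. The only cosmetic difference is that the paper indexes the final sum by syndrome-cluster size $\ell\ge\norm{e}/\alpha$ rather than by error-cluster size $u\ge\norm{e}$, which is precisely how the stated form $\gamma=(\omega\beta/(1-\omega\beta))^{1/\alpha}$ with convergence condition $\omega\beta<1$ drops out; the bookkeeping you flag as the ``delicate part'' is exactly this reindexing.
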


\begin{proof}
    Lemma \ref{lem:LS metacheck success} tells us that when $\nu\beta<1$, then with probability $1-\mathrm{e}^{-\Omega(d_{\rm ss})}$, the residual syndrome $s_{\rm res} \in \im{H}$ is the union of connected syndrome clusters on $\mathcal{G}_S$ each of size less than $d_{\rm ss}$. Note that $t<d_{\rm ss}$ by the definitions of soundness and single-shot distance. Lemma \ref{lem:soundness implies confinement} tells us that our linear soundness implies linear confinement up to radius $t_c=t/w_{\rm c}$. For $t=\Omega(n^b)$ with $b>0$, the same argument tells us that with probability $1-\mathrm{e}^{-\Omega(t_c)}$, each residual syndrome component is associated with a residual error cluster with size less than the linear confinement radius $t_c$. Since each residual syndrome component is disjoint by definition, the entire residual error obeys linear confinement.
    The probability of an error $e\in Q$ being the residual error is bounded by
    \begin{align}
        \mathbb{P}[e=e_{\rm res}] = \mathbb{P}[s(e) = s_{\rm res}] \leq (2\sqrt{q})^{\norm{e}/\alpha} \, .
    \end{align}
    Now consider the subset event $e \subseteq e_{\rm res}$ for some $e \in Q$. Linear confinement implies that the residual syndrome must then have weight at least $\norm{e}/\alpha$. Observe that if the other errors in $e_{\rm res} - e$ are disconnected from $e$ in $\mathcal{G}_Q$, then they would not affect the fact that $s(e) \subseteq s_{\rm res}$, and we would have $\mathbb{P}[e\subseteq e_{\rm res}] = \mathbb{P}[s(e) \subseteq s_{\rm res}]$. However, there could be errors connected to $e$ in $\mathcal{G}_Q$ that could cause some unsatisfied checks in $s(e)$ to become satisfied and thus nullify the condition $s(e) \subseteq s_{\rm res}$. Nonetheless, due to linear confinement, doing so will spawn new unsatisfied checks. We can upper bound $\mathbb{P}[e \subseteq e_{\rm res}]$ by counting the number of vertex sets in $\mathcal{G}_Q$ (with maximum degree $w_{\rm c}(w_{\rm r}-1)$) whose connected components contain a vertex in $e$ and multiplying each vertex set by the probability of encountering its syndrome in the residual syndrome. Let $\omega = {\rm e}w_{\rm c}(w_{\rm r}-1)$. The number of choices of such $c \in Q$ of size $\ell$ that can cover $e$ in $\mathcal{G}_Q$ is at most $N_\ell(e) \leq {\rm e}^{\abs{e}-1} \omega^{\ell-\abs{e}} \leq \omega^\ell$ by Lemma \ref{lem:cluster bound}. The probability that $e$ belongs to the residual error is then bounded by
    \begin{align}
        \mathbb{P}[e \subseteq e_{\rm res}] \leq \sum_{c \supseteq e} \mathbb{P}[s(c) \subseteq s_{\rm res}] \leq \sum_{\ell\geq\norm{e}/\alpha} N_\ell(e)\, \beta^\ell \leq \sum_{\ell\geq\norm{e}/\alpha} (\omega\beta)^\ell \, .
    \end{align}
    When $\omega\beta<1$, the geometric series in the last expression converges, and we arrive at
    \begin{align}
        \mathbb{P}[e \subseteq e_{\rm res}] \leq \frac{(\omega\beta)^{\norm{e}/\alpha}}{1-\omega\beta} \leq \left(\frac{\omega\beta}{1-\omega\beta}\right)^{\norm{e}/\alpha} = \gamma^{\norm{e}} \, ,
    \end{align}
    where $\gamma\rightarrow 0$ when $\beta\rightarrow 0$.
\end{proof}

We note that linear soundness is a key ingredient of this lemma, in order to translate from a local-stochastic syndrome to a local-stochastic error. Specifically, we used the exponential suppression from linear confinement to beat the combinatorial growth when counting covering sets. To this end, Corollary \ref{cor:3D HGP linear soundness} tells us that our 3D HGP code $\calQ_G$ exhibits the required linear soundness for the residual error to be local-stochastic. The final ingredient is to show that this local-stochastic residual error is correctable by our minimum-weight decoder with high probability. For LDPC codes, it has been shown that $d=\Omega(\log n)$ is sufficient for a threshold against sufficiently small local-stochastic noise, using percolation arguments \cite{Kovalev_2013, gottesman2013fault}. We will use essentially the same arguments, but adapted to our setting.

\begin{theorem}[Residual error is correctable]
    Suppose $M \in \mathbb{F}^{r \times m}_2$ is $(v_{\rm c},v_{\rm r})$-LDPC with $v_{\rm c},v_{\rm r}=O(1)$ and $d_{\rm ss} = m^a$ with $a>0$. In addition, suppose that $H \in \mathbb{F}^{m \times n}_2$ is $(w_{\rm c},w_{\rm r})$-LDPC with $w_{\rm c},w_{\rm r}=O(1)$ and also $(t,f)$-sound with $t=\Omega(n^b)< d_{\rm ss}$ and $f(x)=\alpha x$ for $b,\alpha>0$. Assume we also have access to minimum-weight decoders for both $M$ and $H$. If the syndrome error is local-stochastic with parameter $q<\left[ 2\nu(1+\omega+16^\alpha \omega^{1+2\alpha}) \right]^{-2}$ for $\nu={\rm e}v_{\rm c}(v_{\rm r}-1)$ and $\omega={\rm e}w_{\rm c}(w_{\rm r}-1)$, then with probability $1-\mathrm{e}^{-\Omega(t)}$, the residual error $e_{\rm res}$ is correctable with minimum-weight decoding.
\end{theorem}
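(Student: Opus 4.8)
The plan is to show that the failure probability is dominated entirely by a percolation event, and that once this event does \emph{not} occur the minimum-weight decoder for $\mbf{H}$ corrects the residual error deterministically. First I would chain together Lemmas~\ref{lem:LS residual syndrome}, \ref{lem:LS metacheck success}, and~\ref{lem:LS residual error}: under the stated bound on $q$ one checks that $2\nu\sqrt{q}<1/2$ (so the residual syndrome is local-stochastic with parameter $\beta$ and the metacheck repair succeeds with probability $1-\mathrm{e}^{-\Omega(d_{\rm ss})}$), that $\omega\beta<1$ (so the residual error is local-stochastic with parameter $\gamma=(\omega\beta/(1-\omega\beta))^{1/\alpha}$ and, with probability $1-\mathrm{e}^{-\Omega(t)}$, obeys linear confinement, meaning every connected component of $e_{\rm res}$ on $\mathcal{G}_Q$ has weight below the confinement radius $t_c=t/w_{\rm c}$), and finally that the remaining slack in the explicit threshold guarantees $\omega\gamma<1$. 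The linear soundness needed at the second step is exactly Corollary~\ref{cor:3D HGP linear soundness}, and the passage from soundness to confinement is Lemma~\ref{lem:soundness implies confinement}; together they pin down $t=\Theta(\tilde{n}^{1/3})$, $t_c=\Theta(\tilde{n}^{1/3})$, and $f(x)=\alpha x$ for our $\calQ_G$.

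Next I would record the deterministic correctness statement. Condition on the good event above, so $s_{\rm res}\in\im\mbf{H}$ and $e_{\rm res}$ decomposes into connected components on $\mathcal{G}_Q$ each of weight strictly less than $t_c$. Run the minimum-weight decoder for $\mbf{H}$ on $\sigma(e_{\rm res})$; the minimum-weight correction, like the error, decomposes over the connected components of $\mathcal{G}_Q$ that it touches --- a standard fact for LDPC codes that underlies the cited percolation arguments --- returning on each component $c$ of $e_{\rm res}$ a correction $c_{\rm corr}$ supported near $c$ with $\abs{c_{\rm corr}}\le\abs{c}<t_c$. Hence $c+c_{\rm corr}$ lies in $\ker\mbf{H}$ with weight below $2t_c\le t$, which is far below the membrane-sector logical distance $\tilde{d}_X$ of $\calQ_G$; since a weight-$<\tilde{d}_X$ element of $\ker\mbf{H}$ is necessarily a stabilizer, $c+c_{\rm corr}$ is a stabilizer. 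Summing over the disjoint components shows $e_{\rm res}+e_{\rm corr}$ is a stabilizer, i.e.\ no logical fault is introduced --- the residual error is correctable.

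It then remains to bound the probability that the good event fails. By construction this is at most the probability that $e_{\rm res}$ has a connected component of size at least $t_c$ on $\mathcal{G}_Q$, plus the metacheck-failure probability $\mathrm{e}^{-\Omega(d_{\rm ss})}$ from Lemma~\ref{lem:LS metacheck success}; since $t<d_{\rm ss}$ the latter is absorbed. For the former, I would seed the cluster at one of $n$ qubits and use Lemma~\ref{lem:cluster bound} to count at most $\omega^{\ell}$ connected shapes of size $\ell$, each appearing in $e_{\rm res}$ with probability at most $\gamma^{\ell}$ by local-stochasticity; since $\omega\gamma<1$ the geometric series converges and
\[
\mathbb{P}[\text{bad}]\ \le\ n\sum_{\ell\ge t_c}(\omega\gamma)^{\ell}\ \le\ \frac{n\,(\omega\gamma)^{t_c}}{1-\omega\gamma}\ =\ \mathrm{e}^{-\Omega(t_c)}\ =\ \mathrm{e}^{-\Omega(t)},
\]
where the polynomial prefactor $n$ is swallowed because $t_c=\Omega(n^{b})$. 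A union bound over this and the earlier failure events gives the claimed success probability $1-\mathrm{e}^{-\Omega(t)}$.

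The main obstacle I anticipate is not any single inequality but the bookkeeping of how the local-stochastic parameter degrades along the two-stage decoder, $q\mapsto\beta\mapsto\gamma$, and verifying that the single explicit threshold on $q$ in the hypothesis simultaneously forces every geometric series in Lemmas~\ref{lem:LS residual syndrome}--\ref{lem:LS residual error} and in the final bound above to converge with room to spare; the particular constants (the factor $1/2$ from the minimum-weight slack, the factor $16^{\alpha}\omega^{1+2\alpha}$) all arise from this accounting. A secondary, lighter point is making sure the component structure of $e_{\rm res}$ on $\mathcal{G}_Q$ used by the decoder is the same structure controlled by the confinement bound, which follows since both refer to the parity-check adjacency of $\mbf{H}$.
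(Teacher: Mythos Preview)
Your overall strategy---chain Lemmas~\ref{lem:LS residual syndrome}--\ref{lem:LS residual error} to obtain a local-stochastic residual error, then run a percolation bound on $\mathcal{G}_Q$---is the same as the paper's. The gap is in your ``deterministic correctness'' step. You assert that once every connected component of $e_{\rm res}$ on $\mathcal{G}_Q$ has weight below $t_c$, minimum-weight decoding succeeds because the correction ``decomposes over the connected components of $e_{\rm res}$'' with $\abs{c_{\rm corr}}\le\abs{c}$ on each. But global minimum-weight decoding does not respect the component structure of $e_{\rm res}$: the correction $e_{\rm corr}$ can bridge several small components of $e_{\rm res}$ into a single connected component of $\tilde{e}_{\rm res}=e_{\rm res}+e_{\rm corr}$ whose size is far larger than $t_c$. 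Nothing in your good event rules this out, and linear confinement of $\mbf{H}$ does not help, since it bounds each $\abs{c}$ individually and says nothing about how many components $e_{\rm corr}$ may join. The ``standard fact underlying the cited percolation arguments'' concerns components of $\tilde{e}_{\rm res}$, not of $e_{\rm res}$.

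The paper's proof supplies precisely the missing step: rather than conditioning on small components of $e_{\rm res}$, it shows directly that $\tilde{e}_{\rm res}$ is local-stochastic with parameter $2\sqrt{\gamma}$. The argument mirrors \eqref{eq:P[s=s_res]}: minimality of $e_{\rm corr}$ forces every connected component of $\tilde{e}_{\rm res}$ to carry at least half its support inside $e_{\rm res}$, so $\mathbb{P}[c\subseteq\tilde{e}_{\rm res}]\le(2\sqrt{\gamma})^{\abs{c}}$. A logical failure then requires a component of $\tilde{e}_{\rm res}$ of size at least $d>t_c$, and the percolation bound runs with $2\omega\sqrt{\gamma}$ in place of your $\omega\gamma$. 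This is also why the explicit threshold in the statement carries the term $16^\alpha\omega^{1+2\alpha}$: it is calibrated for $2\omega\sqrt{\gamma}<1/2$, not merely $\omega\gamma<1$.
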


\begin{proof}
    Lemma \ref{lem:LS metacheck success} tells us that our residual syndrome $s_{\rm res} \in \im{H}$ with probability $1-\mathrm{e}^{-\Omega(t)}$ when $q<1/(4\nu)^2$. Lemma \ref{lem:LS residual error} further tells us that, with probability $1-\mathrm{e}^{-\Omega(t)}$, the residual error is local-stochastic with parameter $\gamma(q)$. Since our decoder chooses a correction $\hat{e}_{\rm res}$ of minimal weight, we have $\norm{\hat{e}_{\rm res}} \leq \norm{e_{\rm res}}$ by minimality. By an analogous calculation to \eqref{eq:P[s=s_res]}, the combined residual error and its correction $\tilde{e}_{\rm res} = e_{\rm res} + \hat{e}_{\rm res}$ is then local-stochastic with parameter $2\sqrt{\gamma}$. By definition of correction, we must have $\tilde{e}_{\rm res} \in \ker{H}$, which means that it is either trivial ($\norm{\tilde{e}_{\rm res}}=0$) or a nontrivial logical operator ($\norm{\tilde{e}_{\rm res}} \geq d$).

    Recall that the qubit adjacency graph $\mathcal{G}_Q$ has maximum degree $w_{\rm c}(w_{\rm r}-1)$. By the Union lemma \cite{BPT_2010}, a nontrivial logical operator in $\ker{H}$ must correspond to a connected cluster of size at least $d$ on $\mathcal{G}_Q$; otherwise, one of its connected components will be simultaneously nontrivial and have size less than $d$, which would contradict the definition of minimum distance. Since we have $d > t_c = \Omega(n^b)$ by the definition of confinement, a connected cluster of size $d$ would also imply a connected cluster of size $t_c$. Hence the decoding failure probability can be bounded by the probability of encountering a cluster of size $t_c$ or greater. Let $\omega={\rm e}w_{\rm c}(w_{\rm r}-1)$. The total number of clusters of size $u$ can be bounded by $N_u \leq n\omega^u$ by Lemma \ref{lem:cluster bound}. At the same time, Lemma \ref{lem:LS residual error} tells us that the probability of encountering any particular cluster decays exponentially in its size. Hence, we have
    \begin{align}
        \mathbb{P}[e_{\rm res} \text{ uncorrectable}] \leq \sum_{j=t_c}^{n} n \omega^j (2\sqrt{\gamma})^j \leq n \sum_{j\geq t_c} (2\omega\sqrt{\gamma})^j \, .
    \end{align}
    When $2\omega\sqrt{\gamma} < 1$, the geometric series in the last expression converges, and we arrive at
    \begin{align}
        \mathbb{P}[e_{\rm res}\text{ uncorrectable}] \leq n \cdot \frac{(2\omega\sqrt{\gamma})^{t_c}}{1-2\omega\sqrt{\gamma}} \leq n \cdot \left( \frac{2\omega\sqrt{\gamma}}{1-2\omega\sqrt{\gamma}} \right)^{t_c} = \mathrm{e}^{-\Omega(t_c)}
    \end{align}
    when $2\omega\sqrt{\gamma}<1/2$ and $t_c = \Omega(n^b)$ for any $b>0$. Plugging in the definitions of $\gamma$ (Lemma \ref{lem:LS residual error}) and $\beta$ (Lemma \ref{lem:LS residual syndrome}), the previous condition becomes $q<\left[ 2\nu(1+\omega+16^\alpha \omega^{1+2\alpha}) \right]^{-2}$.
\end{proof}

As a final remark, we note that the above proof can straightforwardly be adapted to the case where one wishes to add additional (independent) local-stochastic physical noise on top of the residual error: since both distributions are local-stochastic, one can substitute in the combined distribution, which would also be local-stochastic (with a slightly worse parameter).

\subsection{Fault tolerance of the entire protocol}

We note that the composability of our gadgets with local-stochastic noise follows a similar analysis to that with adversarial noise; we briefly mention the modifications here. In particular, the residual errors for the state preparation gadgets and single-shot decoder is local-stochastic, which means that each transversal gadget in Figure \ref{fig:dimensional_expansion_circuit_2} will only add additional local-stochastic noise to the already existing local-stochastic noise for the next gadget. Assuming that the local-stochastic noise in the separate gadgets are independent, then their composition is also local-stochastic. As long as the composed local-stochastic noise is small enough, then it will be corrected by the single-shot decoder.

\bibliographystyle{alpha}
\bibliography{bib/ref}

\end{document}